\definecolor{block-y}{gray}{0.92}
\newtcolorbox{myexample}{colback=block-y,grow to right by=4mm,grow to left by=4mm,
boxrule=0pt,boxsep=0pt,breakable}
\newcommand{\RED}[1]{\textcolor{red}{#1}}
\newcommand{\BLUE}[1]{\textcolor{blue}{#1}}
\newcommand{\MAGE}[1]{\textcolor{magenta}{#1}}
\newcommand{\Adr}{\operatorname{Ad}}
\newcommand{\adr}{\operatorname{ad}}
\newcommand{\ad}{\mathrm{ad}}
\newcommand{\diag}{\operatorname{diag}}
\newcommand{\pos}[1]{\mathfrak{pos}_1(#1)}
\newcommand{\tr}{\operatorname{tr}}
\newcommand{\expt}[1]{\ensuremath{\langle #1 \rangle}{}}
\newcommand{\gks}{\textsc{gksl}}
\newcommand{\cptp}{\mathsf{CPTP}}
\newcommand{\conv}{\operatorname{conv}}
\newcommand{\su}{\mathfrak{su}}
\newcommand{\fk}{\mathfrak{k}}
\newcommand{\fw}{\mathfrak{w}}
\newcommand{\bK}{\mathbf{K}}
\newcommand{\C}{\mathbb{C}}
\newcommand{\R}{\mathbb{R}}
\newcommand{\GL}{\mathsf{GL}}
\newcommand{\SU}{\mathrm{SU}}
\newcommand{\reach}{\mathfrak{reach}}
\newcommand{\stab}{\mathfrak{stab}}
\newcommand{\derv}{\mathfrak{derv}}
\newcommand{\hGA}{\mathbf{\Gamma}}
\newcounter{app}
\renewcommand*{\theapp}{\Alph{app}}
\newcommand*\app[1]{%
\refstepcounter{app}\label{app_#1}\hypertarget{app:#1}{\theapp}}
\newcommand*\appref[1]{\hyperlink{app:#1}{\ref*{app_#1}}}
\theoremstyle{plain}
\newtheorem{lemma}{Lemma}
\newtheorem{thm}{Theorem}
\newtheorem{conjecture}{Conjecture}
\newtheorem{corollary}{Corollary}
\newtheorem{proposition}{Proposition}
\theoremstyle{definition}
\theoremstyle{remark}
\newtheorem{remark}{Remark}
\title{Exploring the Limits of Controlled Markovian Quantum\\ Dynamics with Thermal Resources} 
\author{Frederik vom Ende
	\\[1mm]{\footnotesize\it Technische Universit{\"a}t M{\"u}nchen, 
	School of Natural Sciences, 85747 Garching, Germany  \mbox{just moved to:} 
	Dahlem Center for Complex Quantum 
Systems, Freie Universit{\"a}t Berlin, Arnimallee 14, 14195 Berlin, Germany
 \& {frederik.vom.ende@fu-berlin.de}}\\[2ex]
Emanuel Malvetti
	\\[1mm]{\footnotesize\it Technische Universit{\"a}t M{\"u}nchen, 
	School of Natural Sciences, 85747 Garching, Germany \& {emanuel.malvetti@tum.de}}\\[2ex]	
        Gunther Dirr
        \\[1mm]{\footnotesize\it Universit{\"a}t W{\"u}rzburg, 
        Institut f{\"u}r Mathematik, Emil-Fischer-Stra{\ss}e 40,\\
	97074 W{\"u}rzburg, Germany \& dirr@mathematik.uni-wuerzburg.de}\\[2ex] 
Thomas Schulte-Herbr{\"u}ggen\thanks{ \; The project was funded i.a.~by the Excellence Network of Bavaria under ExQM and by 
{\em Munich Quantum Valley} of the Bavarian State Government with funds from Hightech Agenda {\em Bayern Plus}\,;
after moving, F.v.E. has been supported by the Einstein Foundation (Einstein Research Unit on quantum devices) and the 
{\sc math+} 
Cluster of Excellence.} 
	\\[1mm]{\footnotesize\it Technische Universit{\"a}t M{\"u}nchen, 
	School of Natural Sciences, 85747 Garching and\\
   Munich Centre for Quantum Science and Technology (MCQST),  Schellingstra{\ss}e~4,\\ 80799~M{\"u}nchen, Germany \& {tosh@tum.de}}
}
\begin{document}
\maketitle
\vspace{-4mm}
\begin{abstract}
Our aim is twofold:
First, we rigorously analyse the generators of quantum-dynamical semigroups of thermodynamic processes.
We
characterise a wide class of \textsc{gksl}-generators for quantum 
maps within
thermal operations and argue that every infinitesimal generator of (a one-parameter semigroup of) Markovian thermal operations belongs to this class.
We completely classify and visualise
them
and their non-Markovian counterparts 
for the case of a single qubit.

Second, we use this description in the framework of bilinear control systems to characterise reachable sets
of coherently controllable quantum 
systems with switchable coupling to a thermal bath.
The core problem reduces to studying a hybrid control system (``toy model'') on the standard simplex 
allowing for two types of evolution:
(i) instantaneous permutations and (ii) 
a one-parameter semigroup of $d$-stochastic maps. 
We generalise upper bounds of the reachable set of this toy model 
invoking new results on thermomajorisation.
Using 
tools of control theory
we fully characterise these reachable sets as well as the set of stabilisable states 
as exemplified by exact results in qutrit systems.
\end{abstract}
\noindent
{\em Keywords:}\\
Quantum Control; Markovian Quantum Dynamics; Quantum Thermodynamics; Thermomajorisation;
$d$-Majorisation; Lie Semigroups; Reachability.



\bigskip
\begin{center}
\small{\em Dedicated to the memory of Andrzej Kossakowski.} 
\end{center}
\vspace{-2mm}
\noindent
{\em%
It is a pleasure to contribute to honour Andrzej Kossakowski as key figure in laying and triggering the now classical groundwork (around \cite{Koss72,Koss72b,GKS76,GK76,GFKVS78}, see \cite{ChruPas17})
that completely characterises the infinitesimal generators of (Markovian) quantum maps in finite dimensions.
His towering work continues to be a well of inspiration. With the recent focus in quantum dynamics now taking
``the bath" as 
quantum  
thermodynamical resource, here we try to carry on somewhat in
his spirit, in particular when characterising the generators of
Markovian 
thermal quantum operations\footnote{see Thm.~\ref{thm:markov_generator} in Sec.~\ref{sec_thermo_markov}}
in a Lie-semigroup frame.}

\section{Introduction}
Linking the well established field of quantum control \cite{dAlessandroBook2022,DiHeGAMM08,Roadmap2015,Koch22} with the emerging subject of quantum thermodynamics \cite{QThermo2018,Lostaglio19r}
is quite a novel and important line of research. 
Thus in this article we focus on framing and 
studying {\em quantum-dynamical control problems 
with coherent controls and thermal resources as additional controls}\footnote{
i.e.~with non-unitary controls chosen from the set of thermal operations \cite{Janzing2000,Horodecki13,Brandao13}
}
\cite{BSH16,CDC19},
where the time evolution of controlled
systems is taken to be defined by
a controlled \gks-equation.
We build on recent progress in interfacing  \mbox{(non-)Markovian}\footnote{In accordance with \cite{Wolf08a} the (time-dependent) Markovian
quantum maps are taken as those which are infinitesimal $\cptp$-divisible.\label{footnote_timedep_markov}
} processes with quantum thermodynamics in general 
\cite{Szczygielski13,Bylicka16,Abiuso19,Dann22,Spaventa22, Chakraborty22,Colla22}, 
and with its resource-theory approach
in particular \cite{Gour19,Bhatta20,LosKor22a,Ptas22}.

To this end, (Lie-)semigroup techniques \cite{HHL89,LNM1552,HofRupp97div,Lawson99} lend themselves as 
a common frame
naturally extending to concepts of (i) classical majorisation and of (ii) thermomajorisation \cite{MarshallOlkin}
as well as (iii) Markovianity of quantum maps
\cite{DHKS08,OSID17}.
Moreover, (iv) the set of reachable states related to a given initial state
of a
(Markovian)
quantum control system takes
the form of a Lie-semigroup orbit \cite{DHKS08,OSID17}.

Studying reachable sets of such control systems is paramount, e.g., to ensure
well-posedness of many 
(optimal) control tasks.
The main question is whether a desired target state can be prepared 
given an equation of motion (plus some control variables) and an initial condition, and how to characterise feasible state transfers in general.
Interestingly, the core problem of the resource approach to quantum thermodynamics 
(as initially sparked by Brand\~ao et al.~\cite{Brandao15}, Horodecki \& Oppenheim 
 \cite{Horodecki13}, as well as Renes \cite{Renes14}, and further pursued in
 \cite{Faist17,Gour15,Lostaglio18,Sagawa19,Mazurek19,Alhambra19,
 vomEnde22thermal}) is similar---namely:
 given a fixed background temperature as well as initial and target states 
of a quantum system,
 can the former be mapped to the 
 latter in a ``thermodynamically consistent'' \cite{Kosloff13} manner?
Here the admissible quantum maps 
are 
``thermal operations''
which form the fundamental building block of the resource theory approach to quantum thermodynamics. 
Roughly speaking, they comprise 
operations 
(assumed to be) performable in arbitrary number without cost.
A precise definition is given in Sec.~\ref{sec_thermo_markov}, and for a comprehensive introduction to the general topic 
see, e.g., the review by Lostaglio \cite{Lostaglio19r}.

As teasers for the power of combining control theory with ``thermal resources'' 
in the sense of allowing for
(non-)Markovian processes consider the following two (known) examples:
\begin{itemize}
\item[(1)]
Take any closed quantum system with non-trivial Hamiltonian which can be 
fully unitarily controlled. Then there always exists a {\em non-Markovian thermal operation} 
(known as $\beta$-{\sc swap}~\cite{Lostaglio18}) such 
that adding it to the setup as an additional control allows to generate the ground 
state up to arbitrary precision \cite{Alhambra19}, \cite[Prop.~4.12]{vomEnde20Dmaj}
{(and thus \textit{every} state: use the Schur-Horn theorem to generate the eigenvalues of the target state on the diagonal, followed by a full dephasing via the $\beta$-swap\footnote{
This only holds unconditionally if $e^{-1/T}\geq\frac12$.
For temperatures lower than that the $\beta$-swap has to be implemented as a two-step dephasing thermal{is}ation (i.e.~one needs to be able to implement the dephasing independent of the diagonal action of the $\beta$-swap) in order to get from the ground state to a ball around the maximally mixed state. 
See also the Worked Example at the end of Sec.~\ref{sec_thermo_markov}}).
}
\item[(2)] 
$n$-qubit systems with full coherent control plus switchable amplitude damping 
(coupling to a bath of temperature  $T=0$)
for one of the qubits act (up to closure) transitively on the set of all density operators \cite{BSH16}.
The result generalises from qu{\em b}\/its to arbitrary qu{\em d}\/its and
$m$-level systems \cite{CDC19}.
\end{itemize}
Hence in these two instances reachability
is settled. However, for controlled {\em Markovian} dynamics with thermal resources 
(e.g., coupling to a bath of temperature $0<T<\infty$) the reachability
problem is open and subject of this work.
%
%
Here, a fundamental property of thermal processes will be central: Under thermal
operations, diagonal density matrices (in the eigenbasis of the system Hamiltonian)
remain diagonal, meaning diagonal elements (populations) evolve independently from off-diagonal ones (coherences), refer also to Cor.~\ref{cor:thermo-diagonal}.
%
This greatly simplifying property lends itself for studying
the restricted control system on diagonal states, later called ``toy model''.
General results and numerical illustrations, as well as analytic results in a low-dimensional setting will be presented in Secs.~\ref{sec_toy_model_sub1}~and \ref{sec:Qutrit-Results}, respectively.

\bigskip
\noindent
{\bf Structure and Main Results.} Sec.~\ref{sec:MarkovControl}~sets the stage for discussing the dynamics of
open Markovian systems as incarnations of {\em bilinear control systems} with two types of controls:
coherent Hamiltonian ones as well as incoherent dissipative ones such as switchable thermal
operations as brought about by coupling the quantum system to a thermal bath. 

\medskip

Sec.~\ref{sec:semigroups-maj-markov}~paves the semigroup background for introducing general
concepts of majorisation (such as $d$-majorisation, a special case of thermomajorisation) needed in the context 
of describing reachability under thermal operations. Within this framework, {\em Markovian} quantum maps  
come with the particular structure of {\em Lie} semigroups \cite{DHKS08,OSID17}, which
allows for putting majorisation and Markovianity on a common ground. 

In particular, we give the respective Lie wedges to certain sets of quantum maps such as Gibbs-preserving ones ($\mathsf{Gibbs}$) as well as thermal ($\overline{\mathsf{TO}}$) and enhanced thermal operations ($\mathsf{EnTO}$);
the corresponding generated Lie semigroups thus define the respective Markovian counterparts
$\mathsf{MGibbs}$, $\mathsf{MTO}$, and $\mathsf{MEnTO}$.
By giving set-inclusions, in 
Sec.~\ref{sec_thermo_markov}~we interrelate them all. 
As another important result (Thm.~\ref{thm:markov_generator}) one gets an explicit construction for 
(possibly all)
generators of Markovian thermal
operations via temperature-weighted projections out of a total Hamiltonian 
(preserving energy of system and bath) in the Stinespring dilation.  
With regard to reachable sets
of diagonal states under such controlled Markovian dynamics, in Sec.~\ref{sec:d_maj}~we recall the \mbox{$d$-majorisation} polytope
and its properties.

\medskip

Sec.~\ref{sec:toy_model}
illustrates the theory of bilinear open quantum
control systems, where the incoherent controls are brought about by Markovian thermal operations.
Reachable sets for diagonal states under such operations are characterised by extreme points of the $d$-majorisation
polytope. More precisely, in
$n$-level systems, the 
enclosure of the reachable set by a convex set is given by the convex hull
over the permuted extreme points of the \mbox{$d$-majorisation} polytope (Thm.~\ref{thm:general_dmaj_bound}). 
Sec.~\ref{sec:Qutrit-Results} discusses three-level systems in detail. By reformulating the control system as a differential inclusion, we find an analytic
expression for the set of stabilisable states. Moreover we compute the set of reachable states for arbitrary initial states, and we deduce the structure of control sets and their reachability order.

\section{Control Setting of Markovian Quantum Dynamics}\label{sec:MarkovControl}

To fix notations, we write $\pos{n}$ for the convex set of all $n \times n$ density matrices (i.e.~all positive semi-definite
$n \times n$-matrices of trace one),
$\mathcal L(\mathbb C^{n\times n})$ for the space of all linear operators acting on complex
$n\times n$-matrices, and $\cptp(n)$ for the convex subset of $\mathcal L(\mathbb C^{n\times n})$ which consists of all completely positive and trace-preserving maps, also known as quantum channels, quantum maps, or Kraus maps.
We prefer to use the term ``quantum maps''.

\medskip
\noindent
Consider the usual Markovian master equation
\begin{equation}\label{eq:diss_evolution}
\dot{\rho}(t)=-i[H_0,\rho(t)]-\hGA(\rho(t))\,,\quad \rho(0)=\rho_0\in\pos{n}\,
\end{equation}
with $\hGA\in\mathcal L(\mathbb C^{n\times n})$ 
of \gks-form \cite{GKS76,Lindblad76}, i.e.
\begin{equation}\label{eq:lindblad_V}
\hGA(\rho):=\sum_k\Big( \tfrac12 \big(V_k^\dagger V_k \rho+\rho V_k^\dagger V_k\big)-V_k\rho V_k^\dagger \Big)
\end{equation}
and $V_k\in\mathbb C^{n\times n}$ arbitrary 
to ensure the evolution $\rho(t)=e^{-t(i\operatorname{ad}_{H_0}+\hGA)}\rho_0$
solving \eqref{eq:diss_evolution}
remains in $\pos{n}$ for all $t\in\mathbb R_+:=[0,\infty)$.
Recall $(e^{-t(i\operatorname{ad}_{H_0}+\hGA)})_{t\in\mathbb R_+}$ is
$\cptp$, hence a 
(trace norm-)contraction semigroup
\cite{PG06}
leaving $\pos{n}$ invariant. 

In this work, an overarching goal is to character{is}e control systems $\Sigma$ extending Eq.~\eqref{eq:diss_evolution} 
by coherent controls (generated by Hermitian $H_j$ and (piece-wise constant) $u_j(t)\in\mathbb R$) 
and by making dissipation bang-bang switchable in the sense 
\begin{equation}\label{eq:control-diss_evolution}
\dot{\rho}(t)= -{i}\Big[H_0 + \sum_{j=1}^m u_j(t) H_j,\rho(t)\Big] - \gamma(t) \hGA(\rho(t))
\end{equation}
with $\gamma(t) \in \{0,1\}$.
This setting is a typical incarnation of the wide class of {\em bilinear control systems} \cite{Jurdjevic97,Elliott09} 
\begin{equation}\label{eq:bilin}
\dot{\mathbf{x}}(t) = -\Big(A + \sum_j u_j(t) B_j\Big) \mathbf{x}(t)\,,\quad \mathbf{x}(0) = \mathbf{x}_0\,,
\end{equation}%
where $A$ denotes an uncontrolled drift, while the control terms\footnote{The bilinearity of the control terms w.r.t\ $u$ and $x$ 
entails the terminology of Eq.~\eqref{eq:bilin}.} consist of (piecewise
constant) control amplitudes $u_j(t)\in\mathbb R$ and linear control operators $B_j$. Here the
state $\mathbf{x}(t)$ should be thought of as density operator. 

A paramount notion in such control systems is the \textit{reachable set} of ${\bf x}_0$ at time $\tau\geq 0$, denoted $\mathfrak{reach}(\mathbf{x}_0,\tau)$. It is defined as the collection of all ${\bf x}(\tau)$, where $t\mapsto{\bf x}(t)$ 
is any solution of~\eqref{eq:bilin}.
Likewise, the reachable set until time $\tau$ is defined as $\mathfrak{reach}_{[0,\tau]}({\bf x}_0):= \bigcup_{\tau'\in[0,\tau]}\mathfrak{reach}(\mathbf{x}_0,\tau')$,
and the overall reachable set as $\mathfrak{reach}(\mathbf{x}_0):=
\bigcup_{\tau'\geq 0}\mathfrak{reach}(\mathbf{x}_0,\tau')$. 
The latter is mostly used in this work, where analogously, we write $\mathfrak{reach}(\mathbf{\rho}_0)$ for
  the entire reachable set of Eq.~\eqref{eq:control-diss_evolution}.
  The system
Lie algebra of~\eqref{eq:bilin}, which provides the crucial tool for analyzing controllability, accessibility, and reachability questions, reads 
$\fk:=\expt{A, B_j\,|\, j=0,\ldots,m\,}_{\sf Lie}$. 
  
In particular for closed quantum systems, i.e.~systems which do not interact with their environment ($\hGA$=$0$=$B_0$)
one would choose $A$ as ${i}\,\ad_{H_0}$ and $B_j$ as ${i}\,\ad_{H_j}$, $j = 1, \dots, m$ 
to recover the
  right-hand side of \eqref{eq:control-diss_evolution}.
   Then it is known \cite{JS72,Bro72,Jurdjevic97,DiHeGAMM08} that 
$\reach(\mathbf{x}_0)$ 
is given by the orbit of the
  initial state under the action of the dynamical systems group $\bK:=\expt{\exp \fk}$, provided $\bK$ is a closed and thus compact
  subgroup\footnote{According to the above definition, in closed systems the system Lie algebra $\fk$ is a Lie subalgebra of the adjoint representation of
    the special unitary Lie algebra, i.e.~$\fk\subseteq\adr_{\su(n)}$. 
    Yet the commutator identity $[\ad_{H},\ad_{H'}] = \ad_{[H,H']}$ allows for identifying $\fk$ 
    with the Lie subalgebra
    of $\su(n)$ generated by ${i} H_0, \dots, {i}H_m$.
In generic open systems, however, the \gks-term $\hGA$ precludes any similar a-priori simplification of the
      system Lie algebra $\fk$.
    }
    of ${\rm Ad}_{\SU(n)} \cong \{\bar{U} \otimes U : U \in \SU(n)\} \subseteq \SU(n^2)$.
     More generally, for open systems undergoing Markovian dissipation which are driven by coherent controls
  $B_j= {i}\,\ad_{H_j}$ for $j=1,\ldots,m$ one sets the operator $B_{0} = \hGA$ to include the environmental interaction.
  This operator takes the form of a \gks-dissipator as given in Eq.~\eqref{eq:lindblad_V}. If $B_0$ is  {\em bang-bang switchable} it acts
  as additional control, and if it is uncontrolled it contributes to the drift term $A = {i}\,\ad_{H_0}$. Motivated by recent experimental progress
  \cite{Mart09,Mart13,Mart14,McDermott_TunDissip_2019} as described in~\cite{BSH16} here we address the first case
  and refer to it as {\bf Scenario BB} in the sequel.\medskip

In general, a concise description of reachable sets of Eq.~\eqref{eq:control-diss_evolution} is challenging in particular in higher-dimensional cases.
Although it is known that it always takes the form of a (Lie-)semigroup orbit, see,
  e.g.,~\cite{DHKS08}
this is usually not enough to obtain explicit characterisations.
Currently there are only a few scenarios for which reachability is settled:
(a) 
In the unital case $\hGA(\mathbbm1_n) = 0$, one has \cite{Ando89,Yuan10}
\begin{equation}\label{eq:reach-unital}
\mathfrak{reach}_\Sigma(\rho_0) \subseteq \{\rho \in \mathbb C^{n\times n} \,|\, \rho \prec \rho_0 \}\,.
\end{equation}
(b)
If in addition $\hGA$ is generated by a single normal $V$,
one gets (up to closure) equality in \eqref{eq:reach-unital} provided
the unitary part of Eq.~\eqref{eq:control-diss_evolution} is {\em controllable}
and the switching function $\gamma(t)$ gives extra control 
(cf.~\cite{BSH16}, \cite[Prop.~5.2.1]{vE_PhD_2020} for finite and \cite{OSID19} for infinite dimensions).
%

Under the scenario {\bf BB} above 
plus the invariance of diagonal
states imposed by thermal processes (see
\cite{Lostaglio19r} and Cor.~\ref{cor:thermo-diagonal} below)
as, e.g., implemented by the \gks-generators $V_1$ and $V_2$ of Eqs.~\eqref{eq:sigma+} and \eqref{eq:sigma-},
the closure of the unitary orbit of $\diag\big(\mathfrak{reach}_\Lambda(x_0)\big)$
sits in the closure of 
$\mathfrak{reach}_\Sigma(U\diag(x_0)U^\dagger)$.
Here $\Lambda$ denotes a simplified version of $\Sigma$---later called ``toy model''---which will be introduced in Sec.~\ref{sec_toy_model_sub1}
Settings beyond thermal relaxation 
are pursued with similar
techniques, e.g.,~by \cite{rooney2018} at the expense of arriving at conditions that are hard to verify
for higher-dimensional systems. 


\section{Semigroups, Major{is}ation, and Markovian Quantum Dynamics}\label{sec:semigroups-maj-markov}

In this 
section, we introduce 
background and terminology 
for
unifying several existing and seemingly different notions in the literature.

\paragraph*{Major{is}ation via Semigroups.} First, let us have a closer look at ``the'' concept of major{is}ation
from the point-of-view of semigroups. Let $\mathcal Z$ be a real
or complex finite-dimensional\footnote{
This is just to avoid further technicalities; in principle this approach works in arbitrary dimensions.
}
vector space and let $\mathcal C \subseteq \mathcal Z$ be a closed and convex subset. Moreover, let
$\mathcal B(\mathcal Z)$ denote the set of all linear operators acting on $\mathcal Z$ and let $S(\mathcal C)$ be the set of
operators in $\mathcal B(\mathcal Z)$ which leave $\mathcal C$ invariant. Obviously, $S(\mathcal C)$ is a subsemigroup
of $\mathcal B(\mathcal Z)$, i.e.~$S(\mathcal C)$ is closed under multiplication and contains the identity.
Finally, let $S_0 \subseteq S(\mathcal C)$ be any subsemigroup of $S(\mathcal C)$. 
Then $x \in \mathcal Z$ is said to be \mbox{\textit{$S_0$-major{is}ed}} by $y \in \mathcal Z$ (denoted by $x \prec_{S_0} y$)
if there exists a transformation $A \in S_0$ such that $Ay = x$ holds. 
This concept was first introduced by Parker and Ram \cite{Parker96} and is 
known as semigroup major{is}ation \cite[Ch.~14.C]{MarshallOlkin}.
A well-studied example of $S_0$-major{is}ation
is 
classical vector-major{is}ation on $\R^n$, where
$\mathcal C :=\{x\in \R_+^n \,:\,\sum_{j=1}^n x_j=1\}\subset \R^n$ is
the standard simplex and $S_0$ is chosen as the set
of all real doubly-stochastic matrices of size $n \times n$ \cite[Ch.~1 \& 2]{MarshallOlkin}. Further physically relevant examples are discussed in the following sections; here we summar{is}e some basic properties which result from the above definition.
%
\begin{proposition}
  \label{prop:S-majorization}
  Let $S(\mathcal C)$ be the subsemigroup of $\mathcal B(\mathcal Z)$ which leaves $\mathcal C$ invariant and
  let $S_0$ be any subsemigroup of $S(\mathcal C)$.
  Then the following hold:
  \begin{enumerate}
  \item[(i)]
    $S(\mathcal C)$ is (topologically) closed and convex, and thus simply connected.
  \item[(ii)]
   If $\mathcal C$ is compact and the convex cone $\mathcal C_0 := \mathbb R_+ \cdot \mathcal C$ generated by $\mathcal C$
    has non-empty interior,
    then $S(\mathcal C)$ is compact.
  \item[(iii)]
    Given any $y\in\mathcal Z$, if $S_0$ is convex or compact,
    then so is the set $M_{S_0}(x) := \{y \in \mathcal C \,:\, y \prec_{S_0}x\} = S_0 \cdot x$.
  \end{enumerate}
\end{proposition}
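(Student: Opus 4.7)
For (i), the plan is to express $S(\mathcal{C})$ as a large intersection of elementary closed convex sets. For every $c \in \mathcal{C}$ the evaluation map $\mathrm{ev}_c \colon \mathcal{B}(\mathcal{Z}) \to \mathcal{Z}$, $A \mapsto Ac$, is continuous and linear, so $\mathrm{ev}_c^{-1}(\mathcal{C})$ is both closed and convex; the identity $S(\mathcal{C}) = \bigcap_{c \in \mathcal{C}} \mathrm{ev}_c^{-1}(\mathcal{C})$ then yields closedness and convexity of $S(\mathcal{C})$ in one stroke. Simple connectedness is automatic, because any nonempty convex subset of a finite-dimensional vector space is contractible via the straight-line homotopy to any fixed element (e.g.\ the identity operator, which belongs to $S(\mathcal{C})$ by construction).

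For (ii), closedness of $S(\mathcal{C})$ is already in hand from (i), so only boundedness needs to be produced. The hypothesis that $\mathcal{C}_0 = \mathbb{R}_+ \cdot \mathcal{C}$ has non-empty interior in $\mathcal{Z}$ forces $\mathcal{C}_0$ to span $\mathcal{Z}$ as a real vector space, so one can pick a basis $\{v_1,\ldots,v_d\} \subset \mathcal{C}_0$; writing $v_i = \mu_i c_i$ with $\mu_i>0$ and $c_i \in \mathcal{C}$, the rescaled family $\{c_1,\ldots,c_d\}$ is still a basis of $\mathcal{Z}$ and lies entirely in $\mathcal{C}$. For every $A \in S(\mathcal{C})$ each image $A c_i$ then lies in the compact set $\mathcal{C}$, giving a uniform bound $\|A c_i\| \leq M$ that is independent of $A$. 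Since a linear operator on a finite-dimensional space is determined by its values on a basis, this bounds the operator norm of $A$ by a constant multiple of $M$; combined with closedness, boundedness upgrades to compactness.

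For (iii), the plan is to identify $M_{S_0}(x)$ with the image of $S_0$ under a continuous linear map and invoke the appropriate preservation property. Under the standing assumption $x \in \mathcal{C}$ (implicit in the stated equality, since otherwise $S_0 \cdot x$ need not lie in $\mathcal{C}$), the inclusion $S_0 \cdot x \subseteq \mathcal{C}$ holds and the identity $M_{S_0}(x) = S_0 \cdot x$ is immediate from the definition of $\prec_{S_0}$. Moreover $M_{S_0}(x)$ is the image of $S_0$ under $\mathrm{ev}_x \colon A \mapsto Ax$: linear maps preserve convexity, and continuous maps preserve compactness, so each hypothesis on $S_0$ transfers to $M_{S_0}(x)$. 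The only genuinely geometric step in the whole argument is the basis extraction in part (ii); without the interior assumption on $\mathcal{C}_0$, the set $\mathcal{C}$ could sit inside a proper linear subspace and $S(\mathcal{C})$ would then contain the unbounded family of operators that vanish on a complement, so the hypothesis serves precisely to rule out such degeneracies.
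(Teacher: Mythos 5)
Your proposal is correct. For parts (i) and (iii) the paper simply declares the claims obvious, and your arguments (writing $S(\mathcal C)=\bigcap_{c\in\mathcal C}\mathrm{ev}_c^{-1}(\mathcal C)$ as an intersection of closed convex sets, and $M_{S_0}(x)=\mathrm{ev}_x(S_0)$ as a continuous linear image) are exactly the details one would fill in. For part (ii) you take a genuinely different route from the paper. The paper works with the truncated cone $\mathcal C':=[0,1]\cdot\mathcal C$, observes that it is compact, $S(\mathcal C)$-invariant, and (by the interior hypothesis on $\mathcal C_0$) contains a ball $B_r(x_0)$, and then bounds $\|A\Delta\|\leq\|A(\Delta+x_0)\|+\|Ax_0\|\leq 2K$ for all $\Delta\in B_r(0)$, giving the operator-norm bound $2K/r$ directly by scaling. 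You instead extract a real basis $\{c_1,\dots,c_d\}\subseteq\mathcal C$ from the spanning cone $\mathcal C_0$ and bound $A$ through its values on that basis via the coordinate functionals. Both mechanisms convert ``bounded on a full-dimensional subset'' into an operator-norm bound; yours has the small advantage of not needing the (true but unstated in the paper) step that non-empty interior of $\mathcal C_0$ yields non-empty interior of $\mathcal C'$, while the paper's ball argument yields an explicit constant without choosing a basis. One nitpick on your closing aside: if $\mathcal C$ sits in a proper subspace $W$, the unbounded family witnessing failure of compactness consists of operators acting \emph{arbitrarily} on a complement of $W$, not of operators \emph{vanishing} there (the latter family is still bounded); this does not affect the proof itself.
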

\begin{proof}
  (i) and (iii) are obvious so we only prove (ii). If $\mathcal C$ is compact, then so is $\mathcal C' := [0,1] \cdot \mathcal C$;
  in particular $\mathcal C'\subseteq B_K(0)$ for some $K>0$. Moreover, $\mathcal C'$ is invariant under $S(\mathcal C)$ by
  linearity. Now, by assumption on $\mathcal C_0$ there exist $x_0\in\mathcal C'$ and $r>0$ such that $B_r(x_0)\subseteq\mathcal C'$. Thus---given
  any $A \in S(\mathcal C)$---for all $\Delta \in B_r(0)$ we obtain the estimate
  \begin{equation*}
    \|A\Delta\| = \|A(\Delta+x_0) - Ax_0\| \leq \|A(\Delta+x_0)\| + \|Ax_0\| \leq 2 K\,,
  \end{equation*}
  where we used the invariance of $\mathcal C'$ and its boundedness.
This shows that $S(\mathcal C)$ is bounded by $2Kr$ (with respect to the operator norm).
Together with closedness from (i) we conclude that $S(\mathcal C)$ is compact.
\end{proof}
\noindent The relation $\prec_{S_0}$ induced by $S_0$ 
 is always reflexive and transitive and thus a preorder on $\mathcal C$.
In general, however, $\prec_{S_0}$ fails to be anti-symmetric and hence a partial order, e.g., if $S_0$ contains
a subgroup which acts non-trivially on $\mathcal C$. This
can always be resolved by a suitable
equivalence relation\footnote{$x \sim y$ $:\Longleftrightarrow$ $x \prec_{S_0} y$ and $y \prec_{S_0} x$} such that
the corresponding quotient space provides a natural domain on which $\prec_{S_0}$ becomes an order. In the above
example of classical vector-major{is}ation such a ``natural domain'' can be identified with a fixed ``Weyl
chamber''.
%
%

\paragraph*{Lie Theory for Semigroups, e.g., Quantum Maps.} Next, we elucidate Markovianity 
of quantum maps
from the perspective
of Lie-semigroup theory. To this end let $S$ be any closed (sub-)semigroup of $\mathcal B(\mathcal Z)$ or $\GL(\mathcal Z)$,
where $\GL(\mathcal Z)$ denotes the set of all invertible elements of $\mathcal B(\mathcal Z)$. For what follows, recall that
$\mathcal B(\mathcal Z)$ is the Lie algebra of $\GL(\mathcal Z)$. Then one defines the concept of a
\emph{Lie wedge of $S$} via
\vspace{-2mm}
\begin{equation}\label{eq:liewedge_def}
  \mathsf{L}(S) := \{A \in \mathcal B(\mathcal Z) \,:\, e^{tA} \in S \text{ for all } t\geq 0\}\,.
  \vspace{-2mm}
\end{equation}
Obviously this construction general{is}es the notion of a Lie (sub-)algebra for Lie
(sub-)groups\footnote{For simplicity, we drop the prefix ``sub'' whenever the corresponding superset is obvious.} 
of $\GL(\mathcal Z)$.  The natural 
example
of a {\em Lie wedge} in relation to quantum control is the set of all \gks{}-generators \cite{GKS76,Lindblad76} 
which constitutes the
Lie wedge to the semigroup $\cptp(n)$ of {\em all} quantum maps \cite{DHKS08} (of dimension $n$).
Here, $\mathcal Z$ can be taken to be ${i}\mathfrak u(n)$,
the set of all Hermitian $n \times n$-matrices, and $\mathcal C$ to be $\pos n$.
Then $\cptp(n)$ is a compact convex subsemigroup
of $S(\pos n) \subset \mathcal B({i}\mathfrak u(n))$.

The following proposition and corollary summar{is}e elementary
properties of the above construction. In particular, part (iv) below justifies calling $\mathsf{L}(S)$ the ``tangent cone''
of $S$ at the identity.

\begin{proposition}
  \label{prop:Lie-wedge}
  Let $S$ be a closed semigroup in $\mathcal B(\mathcal Z)$ or $\GL(\mathcal Z)$ and let $\mathsf{L}(S)$ be its Lie wedge.
  Then the following properties hold:
  \begin{enumerate}
  \item[(i)]
    $\mathsf{L}(S)$ is a closed convex cone of $\mathcal B(\mathcal Z)$.
  \item[(ii)]
    $\mathsf{L}(S)$ is invariant under conjugation by arbitrary edge elements, that is,
    $e^A \mathsf{L}(S) e^{-A} = \mathsf{L}(S)$ for all
   $A \in \mathsf{E}(\mathsf{L}(S))$, where the edge $\mathsf{E}(\mathsf{L}(S))$ is defined as largest subspace contained in $\mathsf{L}(S)$.
  \item[(iii)]
    $\mathsf{E}(\mathsf{L}(S))$ is a Lie subalgebra of $\mathcal B(\mathcal Z)$. More precisely, it is the Lie algebra
      of $\mathsf{E}(S)$, where $\mathsf{E}(S)$ denotes the largest subgroup\footnote{Note that $\mathsf{E}(S)$ is also called
          edge---yet edge of $S$ instead of $\mathsf{L}(S)$.} of $S$.
  \item[(iv)]
    The operator $A \in \mathcal B(\mathcal Z)$ belongs to $\mathsf{L}(S)$ if and only if there exists a $C^1$-curve
    $\gamma:[0,\varepsilon]\to S$ for some $\varepsilon>0$ with $\gamma(0) = \operatorname{id}$, $\dot{\gamma}(0) = A$, i.e.
    \begin{equation}
      \label{eq:tangent-cone}
    \mathsf{L}(S) = \{\dot{\gamma}(0) \,:\, \gamma \in C^1([0,\varepsilon],S)
\text{ 
and } \gamma(0) = \operatorname{id}\}\,.
    \end{equation}
  \end{enumerate}
\end{proposition}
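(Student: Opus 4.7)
The plan is to verify the four properties using elementary semigroup arguments together with standard Lie-theoretic tools. For (i), closedness of $\mathsf{L}(S)$ follows at once from continuity of the exponential map and closedness of $S$: if $A_n \to A$ with each $A_n \in \mathsf{L}(S)$, then $e^{tA_n} \to e^{tA}$ for every $t \geq 0$, and the limit lies in $S$. The cone property is immediate from $e^{t(\lambda A)} = e^{(\lambda t) A}$ for $\lambda \geq 0$. For convexity I would invoke the Lie--Trotter product formula
\[
e^{t(A+B)} \;=\; \lim_{n \to \infty} \bigl( e^{tA/n}\, e^{tB/n} \bigr)^n,
\]
noting that each factor $e^{tA/n}, e^{tB/n}$ belongs to $S$ (by the very definition of $\mathsf{L}(S)$, since $t/n \geq 0$), using the semigroup property to keep every $n$-fold product in $S$, and letting closedness of $S$ absorb the limit.

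For (ii), the starting point is that $A \in \mathsf{E}(\mathsf{L}(S))$ means both $A$ and $-A$ lie in $\mathsf{L}(S)$, so in particular $e^A, e^{-A} \in S$. Then for any $B \in \mathsf{L}(S)$ and $t \geq 0$,
\[
e^A e^{tB} e^{-A} \;=\; e^{\,t\, e^A B e^{-A}} \;\in\; S
\]
(the inclusion using the semigroup property), which gives $e^A B e^{-A} \in \mathsf{L}(S)$; the reverse inclusion follows by replacing $A$ with $-A$, which is again in $\mathsf{E}(\mathsf{L}(S))$.

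Part (iii) requires more work and will be the main obstacle. The key identification is $A \in \mathsf{E}(\mathsf{L}(S)) \iff \pm A \in \mathsf{L}(S) \iff e^{tA} \in S$ for all $t \in \mathbb{R}$, i.e.\ that the one-parameter group $\{e^{tA}\}_{t \in \mathbb{R}}$ lies in the group of units $\mathsf{E}(S)$. To see $\mathsf{E}(\mathsf{L}(S))$ is closed under the Lie bracket, I would combine (ii) with a differentiation trick: for $A, B \in \mathsf{E}(\mathsf{L}(S))$, applying (ii) to $\pm A$ shows $e^{tA} B e^{-tA} \in \mathsf{L}(S)$ for \emph{every} $t \in \mathbb{R}$, and the same with $-B$ in place of $B$, so the curve lies in the subspace $\mathsf{E}(\mathsf{L}(S))$; closedness of this subspace together with $[A, B] = \tfrac{d}{dt}\big|_{t=0} e^{tA} B e^{-tA}$ then forces $[A, B] \in \mathsf{E}(\mathsf{L}(S))$. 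The structural claim that $\mathsf{E}(S)$ is itself a Lie group whose Lie algebra equals $\mathsf{E}(\mathsf{L}(S))$ is where I expect real difficulty: since $\mathsf{E}(S) = S \cap S^{-1}$ need not be closed in $\mathrm{GL}(\mathcal{Z})$ a priori, one cannot simply apply Cartan's closed-subgroup theorem, and the correct framework is the semigroup-Lie theory of Hilgert--Hofmann--Lawson \cite{HHL89,Lawson99}, which I would invoke.

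Finally, (iv) unpacks the definition. The forward direction is trivial via $\gamma(t) := e^{tA}$. For the converse, given a $C^1$-curve $\gamma : [0, \varepsilon] \to S$ with $\gamma(0) = \operatorname{id}$ and $\dot\gamma(0) = A$, I would apply the classical product-limit formula $e^{tA} = \lim_{n \to \infty} \gamma(t/n)^n$, valid because $n(\gamma(t/n) - \operatorname{id}) \to tA$: for $t \geq 0$ fixed and $n$ large enough that $t/n \in [0, \varepsilon]$, every factor $\gamma(t/n)$ lies in $S$, the semigroup property keeps the $n$-fold power in $S$, and closedness of $S$ promotes the limit $e^{tA}$ to $S$, i.e.\ $A \in \mathsf{L}(S)$.
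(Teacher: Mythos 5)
Your proposal is correct, and on part (iv) --- the only part the paper actually proves (in its Appendix~A) --- you take essentially the same route: the product-limit formula $e^{tA}=\lim_{n\to\infty}\gamma(t/n)^n$, with the semigroup property keeping each power in $S$ and closedness absorbing the limit; the paper merely packages this through the matrix logarithm and spends some effort justifying that $\log$ may be applied to $\gamma(t/n)^n$, whereas you cite the classical formula directly. For parts (i)--(iii) the paper gives no proof at all, deferring entirely to Hilgert--Hofmann--Lawson and Lawson; your elementary arguments there are sound and add genuine content: Lie--Trotter for convexity of the wedge, the conjugation identity $e^{A}e^{tB}e^{-A}=e^{t\,e^{A}Be^{-A}}$ combined with $e^{\pm A}\in S$ for edge invariance, and differentiation of the curve $t\mapsto e^{tA}Be^{-tA}$ inside the closed subspace $\mathsf{E}(\mathsf{L}(S))=\mathsf{L}(S)\cap(-\mathsf{L}(S))$ for bracket closure. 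You correctly identify the one point that cannot be handled by such elementary means --- that $\mathsf{E}(S)$ is a Lie group whose Lie algebra is exactly $\mathsf{E}(\mathsf{L}(S))$, where Cartan's closed-subgroup theorem is unavailable because $S\cap S^{-1}$ need not be closed in $\GL(\mathcal Z)$ --- and defer it to the same references the paper uses, so nothing is missing.
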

\noindent For the statements (i), (ii), and (iii) we refer to \cite[Prop.~1.14]{LNM1552} and \cite[Thm.~4.4]{Lawson99}.
Statement (iv) can be found in \cite[Def.~4.2]{Lawson99} (without proof) and in \cite[Prop.~V.1.7]{HHL89}
(yet under a much more general setting). 
For completeness, a sketch of a 
proof is given in App.~\appref{A}.
Due to Prop.~\ref{prop:Lie-wedge} any subset of $\mathcal B(\mathcal Z)$ featuring property (i) and (ii)
(and thus also (iii)) will be called an (abstract) \emph{Lie wedge}.

Later we will encounter a scenario where the first derivative of a certain family of curves is not only in the Lie wedge but
rather in its edge. In this case one can extract further information from higher derivatives:
\begin{corollary}
  \label{cor:Lie-wedge}
  Let $\varepsilon>0$, 
 a closed, convex semigroup $S\subseteq \mathcal B(\mathcal Z)$, and a $C^2$-curve $\gamma:[0,\varepsilon]\to S$
   with $\gamma(0) = \operatorname{id}$, $\dot{\gamma}(0) \in \mathsf{E}(\mathsf{L}(S))$ be given. Then $\ddot{\gamma}(0) \in \mathsf{L}(S)$.
\end{corollary}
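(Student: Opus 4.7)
The plan is to reduce the claim to Proposition~\ref{prop:Lie-wedge}(iv), which characterises $\mathsf{L}(S)$ as the set of velocities at $t=0$ of $C^1$-curves in $S$ starting at the identity. To convert the second-order datum $\ddot\gamma(0)$ into a first-order derivative I would apply a square-root reparametrization, but this works cleanly only after killing off the first-order term of $\gamma$. Writing $A:=\dot\gamma(0)\in\mathsf{E}(\mathsf{L}(S))$ and exploiting that $-A\in\mathsf{L}(S)$ so that $e^{-tA}\in S$ for all $t\geq 0$, I form
\[
\tilde\gamma(t):=e^{-tA}\gamma(t)\in S\,,\qquad t\in[0,\varepsilon]\,.
\]
A direct Taylor expansion gives $\tilde\gamma(0)=\operatorname{id}$, $\dot{\tilde\gamma}(0)=0$, and $\ddot{\tilde\gamma}(0)=\ddot\gamma(0)-A^2$. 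Next consider
\[
\sigma(t):=\tilde\gamma(\sqrt t)\in S\,,\qquad t\in[0,\varepsilon^2]\,.
\]
Because $\dot{\tilde\gamma}(0)=0$, the expansion $\tilde\gamma(\sqrt t)=\operatorname{id}+\tfrac{t}{2}(\ddot\gamma(0)-A^2)+o(t)$ shows $\sigma$ is differentiable at $0$ with $\dot\sigma(0)=\tfrac12(\ddot\gamma(0)-A^2)$, and expanding $\dot\sigma(t)=\tfrac{1}{2\sqrt t}\dot{\tilde\gamma}(\sqrt t)$ for $t>0$ further shows continuity of $\dot\sigma$ at $0$. Hence $\sigma$ is a $C^1$-curve in $S$ with $\sigma(0)=\operatorname{id}$, and Proposition~\ref{prop:Lie-wedge}(iv) yields $\ddot\gamma(0)-A^2\in\mathsf{L}(S)$.

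It remains to show $A^2\in\mathsf{L}(S)$, for then the convex cone property of $\mathsf{L}(S)$ (Proposition~\ref{prop:Lie-wedge}(i)) gives $\ddot\gamma(0)=(\ddot\gamma(0)-A^2)+A^2\in\mathsf{L}(S)$. This is where the convexity of $S$ enters: since both $e^{tA}$ and $e^{-tA}$ lie in $S$ for all $t\geq 0$, the curve
\[
\beta(t):=\tfrac12\bigl(e^{tA}+e^{-tA}\bigr)=\cosh(tA)
\]
lies in $S$ for $t\geq 0$ and satisfies $\beta(0)=\operatorname{id}$, $\dot\beta(0)=0$, $\ddot\beta(0)=A^2$. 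Applying the same square-root reparametrization to $\beta$ in place of $\tilde\gamma$ and invoking Proposition~\ref{prop:Lie-wedge}(iv) once more yields $A^2\in\mathsf{L}(S)$, completing the proof.

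The main technical point, and the step most prone to slips, is verifying that the reparametrization $t\mapsto\tilde\gamma(\sqrt t)$ really is $C^1$ on a neighbourhood of $0$ despite the $1/(2\sqrt t)$ singularity in its formal derivative. This forces the Taylor expansion to be carried out to second order and is precisely where the assumption $A\in\mathsf{E}(\mathsf{L}(S))$, used via $\dot{\tilde\gamma}(0)=0$, is essential rather than the weaker $A\in\mathsf{L}(S)$. The convexity hypothesis on $S$ is likewise indispensable and is used only once, to realise $\cosh(tA)$ as a curve in $S$.
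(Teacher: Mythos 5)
Your proof is correct, but it takes a genuinely different route from the paper's. The paper argues by contradiction: assuming $\ddot\gamma(0)\notin\mathsf{L}(S)$, it picks a Hahn--Banach functional $\alpha$ separating $\ddot\gamma(0)$ from the closed convex cone $\mathsf{L}(S)$, kills the first-order term via $\alpha(\dot\gamma(0))=0$ (which follows from $\dot\gamma(0)$ lying in the edge), and uses convexity of $S$ only to establish $S-\operatorname{id}\subseteq\mathsf{L}(S)$, so that $\alpha(\gamma(t))\geq\alpha(\operatorname{id})$ forces $\alpha(\ddot\gamma(0))\geq 0$, a contradiction. You instead give a direct, constructive argument: you gauge away the first-order term by multiplying with $e^{-tA}$ (legitimate precisely because $A$ lies in the edge, so $e^{-tA}\in S$, and $S$ is a semigroup), apply a square-root reparametrization to turn the second derivative into a first derivative, and invoke Proposition~\ref{prop:Lie-wedge}(iv) twice --- once for $\ddot\gamma(0)-A^2$ and once for $A^2$ via the curve $\cosh(tA)$, which is where your single use of convexity enters. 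Your computations ($\ddot{\tilde\gamma}(0)=\ddot\gamma(0)-A^2$, the $C^1$-regularity of $t\mapsto\tilde\gamma(\sqrt t)$ at $0$) check out, and the final assembly uses only that $\mathsf{L}(S)$ is a convex cone, hence closed under addition. What your approach buys is slightly finer information --- the separate memberships $\ddot\gamma(0)-A^2\in\mathsf{L}(S)$ and $A^2\in\mathsf{L}(S)$, the former being the natural ``second-order logarithmic derivative'' of the curve --- and it avoids the separation theorem entirely; the paper's argument is shorter and sidesteps the regularity bookkeeping for the reparametrized curve.
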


\begin{proof}
  Let us assume, by way of contradiction, that $A_0 := \dot{\gamma}(0) \in \mathsf{E}(\mathsf{L}(S))$ and $B_0 := \ddot{\gamma}(0) \not\in \mathsf{L}(S)$.
  Since $\mathsf{L}(S)$ is closed and convex there exists a separating linear functional $\alpha:\mathcal B(\mathcal Z) \to \R$,
  i.e.~$\alpha(B_0) < 0$ and $\alpha(A) \geq 0$ for all $A \in \mathsf{L}(S)$ \cite[Thm.~3.4]{Rudin91}. Because $A_0\in \mathsf{E}(\mathsf{L}(S))$ we know $\R A_0 \in \mathsf{L}(S)$
  which forces $\alpha(A_0) = 0$. Moreover, convexity of $S$ together with Prop.~\ref{prop:Lie-wedge} (iv) implies
    $S -{\operatorname{id}} \subseteq \mathsf{L}(S)$: 
to see this, simply consider the curve $t\mapsto tA+(1-t){\operatorname{id}}\in S$ where $A\in S$ is arbitrary.
  With this we obtain the estimate
  \begin{equation*}
    \begin{split}
      \alpha({\operatorname{id}}) & \leq  \alpha({\operatorname{id}}) + \alpha(\gamma(t) - {\operatorname{id}}) = \alpha\big(\gamma(t)\big)
      = \alpha\Big(\gamma(0) + t A_0 + \frac{B_0}{2}t^2 + { \scriptstyle \mathcal{O}}(t^2)\Big)\\
      & = \alpha({\operatorname{id}}) + \frac{\alpha(B_0)}{2}t^2 + { \scriptstyle \mathcal{O}}(t^2)
    \end{split}
  \end{equation*}
  leading to the contradiction $\alpha(B_0) + { \scriptstyle \mathcal{O}}(1) \geq 0$. Thus
  $B_0 \in \mathsf{L}(S)$ which concludes the proof.
\end{proof}

Now, trivial examples of closed (and path-connected) semigroups of $\mathcal B(\mathcal Z)$ reveal that, in
general, semigroups cannot be recovered via the exponential map from their Lie wedge---in contrast to path-connected
subgroups of $\mathcal B(\mathcal Z)$ which are fully character{is}ed by their Lie subalgebras, cf.~\cite{Yamabe1950}
or \cite[Thm.~9.6.1]{HN12}. Therefore, the above concept of a Lie wedge naturally suggests the notion of a
\emph{Lie semigroup} for those closed semigroups $S$ which can be reconstructed from their Lie wedge in the following sense:
\begin{equation}
  \label{eq:Lie-semigroup}
  S = \overline{\langle \exp\big(\mathsf{L}(S)\big) \rangle}_{\rm SG}\,.
\end{equation}
Here, the overbar denotes the topological closure\footnote{\label{f-note} In Eq.~\eqref{eq:Lie-semigroup}, one can consider either the
  closure with respect to $\mathcal B(\mathcal Z)$ or with respect to $\GL(\mathcal Z)$ or an even finer
  topology. The latter case is only of interest if $S$ is not assumed to be closed in $\mathcal B(\mathcal Z)$ or in
  $\GL(\mathcal Z)$, cf.~\cite{HHL89}.} and $\langle \exp\big(\mathsf{L}(S)\big)\rangle_{\rm SG}$ is the semigroup generated by the
set $\exp\big(\mathsf{L}(S)\big)$. Notably, the operation $\langle \;\cdot\;\rangle_{\rm SG}$ cannot be avoided as
$\exp\big(\mathsf{L}(S)\big)$ is in general not closed under multiplication.

Conversely, given an (abstract) Lie wedge $\mathfrak{w}$ in $\mathcal B(\mathcal Z)$ 
does there exist a Lie
semigroup\footnote{Strictly speaking, one has to relax the concept of a Lie semigroup slightly by admitting subgroups
    which are not necessarily closed with respect to $\GL(\mathcal Z)$, cf.~\cite{HHL89} Def.~V.1.11 and footnote~\ref{f-note} .
}
$S \subseteq\GL(\mathcal Z)$
such that $\mathfrak{w} = \mathsf{L}(S)$? Again, contrary to Lie algebras, the answer is in general no. Therefore, Lie wedges which
do allow for such a representation are called \emph{global} (in $\GL(\mathcal Z)$) in the following theorem.
We collect some key results on globality, which
will be important for constructing Markovian counterparts to  (enhanced) thermal operations and Gibbs preserving maps
in Eqs.~\eqref{eqn:MSemigroups}, see also Rem.~\ref{rem:global-wedge-Markov}.
For technical details and more sophisticated
character{is}ations see \cite[Ch.~V \& VI]{HHL89}, \cite[Ch.~1]{LNM1552},
and \cite[Prop.~6.2]{Lawson99}.

\begin{thm}\label{thm_lie_global}
  Given any Lie wedge $\mathfrak{w} \subseteq\mathcal B(\mathcal Z)$ the following statements hold.
  \begin{enumerate}
  \item[(a)]
    If there exists a closed semigroup $S$ of $\mathcal B(\mathcal Z)$ or of $\GL(\mathcal Z)$ such that $\mathsf{L}(S) = \mathfrak{w}$,
    then there exists a unique closed Lie semigroup $S_0 \subseteq S$ such that $\mathsf{L}(S_0) = \mathfrak{w}$, i.e.~$\mathfrak{w}$
    is global. In particular, 
    $ S_0 = \overline{\langle \exp\big(\mathsf{L}(S)\big) \rangle}_{\rm SG}\,. $
    Moreover, $S_0$ can be character{is}ed\\[-7mm]
    \begin{enumerate}
      \item[(i)]
        either as the largest Lie semigroup in $S$
      \item[(ii)]
        or as (the closure of) the reachable set of the identity ${\rm id}$ of the bilinear control system\footnote{
        Here \eqref{eq:lift} can be regarded as the operator lift of system \eqref{eq:bilin}.
          }
          \vspace{-2mm}
        \begin{equation}\label{eq:lift}
          \dot{\Phi}(t) = L(t) \Phi(t)\,,
          \vspace{-2mm}
        \end{equation}
        where $L(t) \in \mathsf{L}(S)$ acts as control and the set of admissible controls can be any set of
        locally integrable functions which contains at least all piecewise constant ones.
      \end{enumerate}\vspace{-1mm}
        \item[(b)]
    If there exists a Lie wedge $\mathfrak{w}_0$ which is global in $\mathcal B(\mathcal Z)$ or $\GL(\mathcal Z)$
    and $\mathfrak{w}$ satisfies $\mathfrak{w} \subseteq\mathfrak{w}_0$ as well as
    $
    \mathfrak{w} \setminus \mathsf{E}(\mathsf{L}(\mathfrak{w})) \subseteq \mathfrak{w}_0 \setminus \mathsf{E}(\mathsf{L}(\mathfrak{w}_0))\,,
    $
    then $\mathfrak{w}$ is global in $\mathcal B(\mathcal Z)$ or $\GL(\mathcal Z)$.
  \end{enumerate}
\end{thm}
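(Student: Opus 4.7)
The plan is to prove part~(a) by an explicit construction and to reduce part~(b) to the globality criterion from \cite[Ch.~V--VI]{HHL89}.

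For part~(a), I take $S_0 := \overline{\langle \exp(\mathfrak w)\rangle}_{\rm SG}$ as the candidate. The inclusion $S_0 \subseteq S$ is immediate, because $\exp(\mathfrak w)=\exp(\mathsf L(S))\subseteq S$ by the definition \eqref{eq:liewedge_def} of the Lie wedge, and $S$ is closed under multiplication and topological closure. The identity $\mathsf L(S_0)=\mathfrak w$ splits into two easy inclusions: $\mathfrak w \subseteq \mathsf L(S_0)$ because $\{e^{tA}\}_{t\geq 0}\subseteq S_0$ for every $A\in\mathfrak w$, and $\mathsf L(S_0)\subseteq \mathsf L(S)=\mathfrak w$ by monotonicity of the Lie-wedge assignment under $S_0\subseteq S$. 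This establishes globality. Uniqueness of $S_0$ inside $S$ is then forced by the defining identity \eqref{eq:Lie-semigroup}: any closed Lie semigroup $S_0'\subseteq S$ with $\mathsf L(S_0')=\mathfrak w$ must equal $\overline{\langle \exp(\mathfrak w)\rangle}_{\rm SG}$.

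Characterisation~(i) follows by the same logic: an arbitrary closed Lie semigroup $S'\subseteq S$ satisfies $\mathsf L(S')\subseteq \mathsf L(S)=\mathfrak w$, whence $S'=\overline{\langle \exp(\mathsf L(S'))\rangle}_{\rm SG}\subseteq S_0$. For characterisation~(ii), a piecewise constant control $L(t)\in\mathfrak w$ produces a solution of \eqref{eq:lift} that is a finite product of exponentials $e^{t_k A_k}$ with $A_k\in\mathfrak w$, $t_k\geq 0$; hence the piecewise constant reachable set of $\operatorname{id}$ coincides exactly with $\langle \exp(\mathfrak w)\rangle_{\rm SG}$. Standard $L^1_{\rm loc}$ approximation extends this to the full admissible control class, and taking closure yields $S_0$.

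For part~(b), the strategy is to verify that $T := \overline{\langle \exp(\mathfrak w)\rangle}_{\rm SG}$ realises $\mathsf L(T)=\mathfrak w$. Let $S_1$ be a closed Lie semigroup witnessing the globality of $\mathfrak w_0$, so $\mathsf L(S_1)=\mathfrak w_0$. From $\mathfrak w \subseteq \mathfrak w_0$ one gets $\exp(\mathfrak w)\subseteq S_1$, hence $T\subseteq S_1$ and the upper bound $\mathsf L(T)\subseteq \mathsf L(S_1)=\mathfrak w_0$; the matching lower bound $\mathfrak w\subseteq\mathsf L(T)$ is immediate from the construction of $T$. The principal obstacle --- and the reason part~(b) is delicate --- is sharpening $\mathsf L(T)\subseteq\mathfrak w_0$ to $\mathsf L(T)\subseteq\mathfrak w$, i.e.~ruling out directions in $\mathfrak w_0\setminus\mathfrak w$. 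Here the edge hypothesis (equivalent to $\mathfrak w\cap \mathsf E(\mathsf L(\mathfrak w_0))\subseteq \mathsf E(\mathsf L(\mathfrak w))$) is essential: it guarantees that the strictly one-sided directions of $\mathfrak w$ remain strictly one-sided relative to $\mathfrak w_0$, so that the local Campbell--Hausdorff / tangent-cone analysis used to identify $\mathsf L(T)$ inside $\mathsf L(S_1)$ cannot leak past the boundary of $\mathfrak w$ via group-like directions of the ambient edge. The technical execution of this step is precisely the known globality criterion of \cite[Thm.~V.1.16 and Ch.~VI]{HHL89}, reproduced in \cite[Thm.~1.36]{LNM1552} and \cite[Prop.~6.2]{Lawson99}; I would invoke it directly rather than reconstruct it from scratch.
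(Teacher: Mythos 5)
Your proof is correct and follows essentially the same route as the paper's own Appendix~B argument: the explicit construction $S_0=\overline{\langle\exp(\mathfrak w)\rangle}_{\rm SG}$ with the two elementary inclusions giving $\mathsf L(S_0)=\mathfrak w$ for part~(a), the maximality argument for~(i), the product-of-exponentials plus $L^1$-approximation argument for~(ii), and deferral to the globality criterion of \cite{HHL89} for the substance of part~(b). The only detail the paper supplies that you omit is how part~(b) covers the case $S\subseteq\mathcal B(\mathcal Z)$ in addition to $S\subseteq\GL(\mathcal Z)$ (treated in \cite[Cor.~VI.5.2]{HHL89}), namely by noting that the invertible elements of any Lie subsemigroup are dense in it.
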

\noindent
The proof is given in App.~\appref{B}.


  \begin{remark}\label{rem:semigroup-scaling}
    As $\mathsf L(S)$ is a convex cone,
any ``time''-scaling $\tau \mapsto \Phi(\mu\tau)$, $\mu > 0$ of a solution $\tau \mapsto \Phi(\tau)$
    of \eqref{eq:lift} is again a solution of \eqref{eq:lift}.  Thus one has 
    $\overline{\mathfrak{reach}({\rm id},\tau)} = S_0$ for all $\tau > 0$.
    Yet, a distinction $\overline{\mathfrak{reach}_{[0,\tau]}({\rm id})} \neq S_0$ (for $\tau$ sufficiently small) is brought about for example by restricting $L(t)$ to a bounded spanning set (e.g., $\mathsf{L}(S)\cap B_K(0)$ for $K>0$), cf.~\cite{Lawson99}. 
    This allows for introducing
    a non-arbitrary physical time parameter: imposing appropriate restrictions on the global energy-conserving Hamiltonian (for system and bath in
    Stinespring dilation, see Eq.~\eqref{eqn:CD-TO})
    concomitantly and consistently restricts the generators of (Markovian) thermal operations to a bounded spanning set 
    via the constructive projection of 
    Thm.~\ref{thm:markov_generator}.
    --- For qubit thermal operations, this is explicitly carried out in the 
    Worked Example\footnote{{Once a timescale of interest is fixed, such as the dephasing of the off-diagonals (in the example denoted by $x$),
    then the only parameter in the scaled global Hamiltonian 
    $\tfrac{1}{\sqrt{x}}H_{\mathsf{tot}}$ is the \textit{bounded} ratio of thermal{is}ation of diagonal elements to dephasing $\frac{u}{x}\in[0,\frac{2}{1+\varepsilon}]$.}} in Sec.~\ref{sec_thermo_markov}.
  \end{remark}


\paragraph*{Relation to Markovianity of Quantum Maps \cite{DHKS08,OSID17}.} 
These abstract ideas have concrete implications and a direct link to Kossakowski's work. They provide
the tools for constructing Markovian counterparts to various types of quantum maps.
Following~\cite{DHKS08}, the well-established \gks-results
can be recast to 
  a characterisation of the Lie wedge of the semigroup $\cptp$ of {\em all} quantum maps (of fixed finite dimension): 
  the set of {\em all} infinitesimal generators $-(i\adr_H+\hGA)$ of \gks-form \eqref{eq:lindblad_V} constitutes the (global)
  Lie wedge $\fw_{{\sf GKSL}}$ of $\cptp$. It generates the largest {\em Lie semigroup} $S_{{\sf GKSL}}$ contained in $\cptp$.
  Thus $S_{{\sf GKSL}}$ is given by the closure of all maps which can be written as finite product of maps
  ${\rm e}^{-t(i\operatorname{ad}_H+\hGA)}$, $t\geq 0$. Equivalently, one may see $S_{{\sf GKSL}}$ 
  as the set of quantum
  maps which can be obtained as solutions to the operator lifts of (possibly time-dependent) \gks-master equations,
  cf.~\eqref{eq:lift}. This is why---in anticipation of the proper definitions given in \eqref{eqn:MSemigroups} below---we identify $S_{{\sf GKSL}}$ 
  with the set of all {\em (time-dependent) Markovian quantum maps} ${\sf MCPTP}$.
  To sum up, one gets 
  $$
  S_{{\sf GKSL}} = \overline{\expt{\exp \fw_{{\sf GKSL}}}}_{\rm SG}\,=\,{\sf MCPTP} \quad \text{from} \quad
  \fw_{{\sf GKSL}} = \mathsf{L}(S_{{\sf GKSL}}) = \mathsf{L}(\sf CPTP)\,.
  $$
  
  The remaining quantum maps which do {\em not} belong to the Lie semigroup $S_{{\sf GKSL}}$ are called {\em non-Markovian}:
  they cannot be obtained by solutions to the operator lifts of (possibly time-dependent) \gks-master
  equations and hence they are not infinitesimal $\cptp$-divisible in the sense of \cite{Wolf08a}.

  The connection to {\em time-independent Markovian quantum maps}
is revealing: 
Although the set of all time-{\em in}\/dependent {Markovian quantum maps} (i.e.~the collection $\exp(\fw_{{\sf GKSL}})$ of all one-parameter
Lie semigroups) generates up to closure the entire {\em Lie semigroup} $S_{{\sf GKSL}}$, the concatenation of
two time-{\em in}\/dependent Markovian quantum maps does in general {\em not} give yet another time-{\em in}\/dependent Markovian quantum map\footnote{More
  precisely, close to the identity 
  the concatenation of two time-{\em in}\/dependent Markovian quantum maps is
  again a time-{\em in}\/dependent Markovian quantum map if their generators are part of a Lie subwedge $\fw$ 
  of $\fw_{{\sf GKSL}}$
  taking the special form of a {Lie semialgebra}:
  A Lie wedge $\fw$ is called {\em Lie semialgebra}, if it is locally (i.e.\ near the
  origin) closed under Baker-Campbell-Hausdorff (BCH) multiplication $X,Y\in\fw \mapsto X\star Y:=\log(e^Xe^Y)$.
  This requires an open BCH neighbourhood $B$ of the origin such that $(\fw\cap B)\star(\fw\cap B) \subseteq\fw$~\cite{HHL89}.\label{footnote_BCH}}.
  These results are summarised in Tab.~\ref{table0},
  further details can be found in \cite{DHKS08,OSID17}. 
\begin{table}[!h]
\caption{Lie-semigroup properties decide Markovianity type of quantum maps (QMs) in terms of their infinitesimal
generators as detailed in \cite{DHKS08,OSID17}.}\label{table0}\vspace{1mm}
\begin{tabular}{l|cc} 
\hline\\[-2mm]
Markovianity Type 
& structure of QMs & inf.~generators\\[2mm]
\hline\hline\\[-2mm]
time-{\em in}\/dependent\textsuperscript a & collection of 1-par.~semigroups\textsuperscript d & \gks-Lie wedge\\
time-{\em de}\/pendent\textsuperscript b & largest Lie semigroup $S_{{\sf GKSL}}$ & \gks-Lie wedge\\[3mm]
{\em non}\/-Markovian\textsuperscript c & QMs outside Lie semigroup $S_{{\sf GKSL}}$ & --- \\[2mm]
\hline\hline
\multicolumn{3}{c}{}\\[-2mm]
\multicolumn{3}{l}{\footnotesize \textsuperscript a: infinitely resp.~\textsuperscript b: infinitesimal (and not just infinitely) $\mathsf{CP}$-divisible \cite{Wolf08a}}\\[-1mm]
\multicolumn{3}{l}{\footnotesize \textsuperscript c: {\em not} infinitesimal $\mathsf{CP}$-divisible \cite{Wolf08a}}\\[-1mm]
\multicolumn{3}{l}{\footnotesize \textsuperscript d: generally not closed under concatenation, cf.~footnote \ref{footnote_BCH}
}\\
\end{tabular}
\end{table}%

\subsection{Application to Thermomajor{is}ation}\label{sec_thermo_markov}

Having settled the foundations,
let us apply above semigroup theory to 
explicit sets of quantum maps.
For understanding how to ``use
thermal resources to enhance control systems''
we make the notion of thermal resources precise and
specify which resources fit into the (Markovian) dynamical picture of the
control framework.
As mentioned in the introduction
this is the goal of the resource theory approach to quantum 
thermodynamics:
it formal{is}es 
which operations can 
be carried out at no cost (e.g., work) by defining a set of operations 
``allowed'' under some basic thermodynamic assumptions.
One set commonly used is the following:
given an $n$-level system with Hamiltonian $H_0$ and some fixed background 
temperature $T\in(0,\infty]$
the \textit{thermal operations} $\mathsf{TO}(H_0,T)$ are defined to be \cite{Lostaglio19r,vomEnde22thermal}
\begin{align}\label{def:TO}
\Big\{ \operatorname{tr}_B\Big(U\Big((\cdot)\otimes 
\frac{e^{-H_B/T}}{\operatorname{tr}(e^{-H_B/T})}\Big)U^\dagger\Big) :\ 
\substack{m\in\mathbb N,H_B\in{i}\mathfrak{u}(m),U\in\mathsf{U}(mn)\\
U(H_0\otimes\mathbbm{1}_B+\mathbbm{1}\otimes 
H_B)U^\dagger=H_0\otimes\mathbbm{1}_B+\mathbbm{1}\otimes H_B }\Big\}.
\end{align}
Here $\mathsf U(m)$ is the unitary group in $m$ dimensions 
with its Lie algebra $\mathfrak{u}(m)$ being the set of all $m\times m$ 
skew-Hermitian matrices and $\operatorname{tr}_B$ is the partial trace over the bath, i.e.~the unique linear map $\operatorname{tr}_B:\mathbb C^{n\times n}\otimes\mathbb C^{m\times m}\to\mathbb C^{n\times n}$ which satisfies $\operatorname{tr}(X\operatorname{tr}_B(\rho))=\operatorname{tr}((X\otimes\mathbbm1)\rho)$ for all $X\in\mathbb C^{n\times n}$, $\rho\in\mathbb C^{n\times n}\otimes\mathbb C^{m\times m}$.

Using the short-hand notation $\rho_B^{(T)}:={e^{-H_B/T}}/{\operatorname{tr}(e^{-H_B/T})}$
for the Gibbs state of the bath
and restricting the global unitaries $U$ to the commutant of ($H_0\otimes\mathbbm{1}_B+\mathbbm{1}\otimes H_B$)
to preserve the total energy of system (S) and bath (B), 
quantum maps brought about by thermal operations of the form
of Eq.~\eqref{def:TO} can be envisaged in the spirit of a Stinespring dilation as
\begin{equation}\label{eqn:CD-TO}
\begin{tikzcd}
{\rho_S(0)\otimes\rho_B^{(T)}\quad} \arrow[r, "\Adr_{U}", "(2)" '] 
& {\quad \rho_{SB}(U)} \arrow[d, "\operatorname{tr}_B", "(3)" '] \\[2mm]
\rho_S(0)\quad
\arrow[hook,
u, "\iota_B", "(1)" '] \arrow[r,"\mathsf{TO}^{\phantom |}\negthickspace{(H_0,T)}" ']
& \quad\rho_S(U)\,.
\end{tikzcd}
\end{equation}

\noindent
Thus $\mathsf{TO}(H_0,T)$ collects all quantum 
maps $\Phi$ which model a three-step procedure:
(1) coupling the system to an arbitrary finite-dimensional 
bath with Hamiltonian $H_B$ (and 
temperature $T$), followed (2) by a unitary transformation
on the full system which leaves the global energy invariant, and finally (3) discarding the 
bath by projecting back onto the system via $\tr_B$. 

Up to now, the description has been ``thermostatic" 
with no explicit continuous time-parameter involved. 
When introducing time evolutions by constructing quantum maps as curves $\Phi(t)\in\mathsf{TO}(H_0,T)$
(see Lem.~\ref{lem:wedge} and Thm.~\ref{thm:markov_generator}), one may adopt above
diagram to a Schr{\"o}dinger picture by taking  
one-parameter groups of time evolutions $U(t)$ [again from the commutant of 
($H_0\otimes\mathbbm{1}_B+\mathbbm{1}\otimes H_B$) to preserve energy of system and bath]
as global unitaries in \eqref{eqn:CD-TO} 
to arrive at quantum maps $\Phi(t)\in\mathsf{TO}(H_0,T) : \rho_S(0) \mapsto \rho_S(t)$
describing the time evolutions of the system by curves of thermal operations.
As mentioned in Rem.~\ref{rem:semigroup-scaling} above, the time scaling itself can then be induced
by the choice of global Hamiltonian, see again Thm.~\ref{thm:markov_generator} below.

Key topological properties of $\mathsf{TO}(H_0,T)$ are collected in the following: 

\begin{proposition}\label{prop_1}
Given $H_0\in{i}\mathfrak u(n)$, $T\in(0,\infty]$ the following statements hold:
\begin{itemize}
\item[(i)] $\mathsf{TO}(H_0,T)$ is a bounded, path-connected semigroup with 
identity.
\item[(ii)] $\overline{\mathsf{TO}(H_0,T)}$ is a convex, compact semigroup with 
identity.
\item[(iii)] $\overline{\mathsf{TO}(H_0,T)}$ is a subset of $\mathsf{Gibbs}(H_0,T)$ which is defined to be the collection of all $\cptp$ maps 
which leave $e^{-H_0/T}$ invariant.
\item[(iv)] One has $[\Phi,\operatorname{ad}_{H_0}]=0$ for all $
\Phi\in\overline{\mathsf{TO}(H_0,T)}$.
\end{itemize}
\end{proposition}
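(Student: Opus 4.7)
I will split the four items into two groups: (i)--(ii), which address the algebraic and topological structure of $\mathsf{TO}(H_0,T)$, and (iii)--(iv), which are consequences of the energy-conservation constraint $[U,\,H_0 \otimes \mathbbm{1} + \mathbbm{1} \otimes H_B]=0$ imposed on the dilating unitary. The bulk of the work lies in establishing convexity of the closure in (ii); the remaining items are largely routine given the Stinespring picture~\eqref{eqn:CD-TO}.

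For (i), the identity map arises from the trivial bath ($m=1$, $U=\mathbbm{1}$), and boundedness is immediate because every $\Phi \in \mathsf{TO}(H_0,T)$ is CPTP, hence a trace-norm contraction. The semigroup property is obtained by composing two thermal operations: given data $(m_i, H_{B_i}, U_i)$ for $i=1,2$, one realises $\Phi_2 \circ \Phi_1$ via the joint bath Hamiltonian $H_{B_1} \otimes \mathbbm{1} + \mathbbm{1} \otimes H_{B_2}$ (whose Gibbs state factorises) and an appropriately ordered product of the $U_i$, each extended trivially to the other bath; energy conservation of the composite unitary is then immediate. For path-connectedness, observe that every admissible $U$ lies in the centraliser of $H_0 \otimes \mathbbm{1} + \mathbbm{1} \otimes H_B$ in $\mathsf{U}(mn)$, which is a connected Lie subgroup (a product of unitary groups indexed by the joint eigenspaces of the total Hamiltonian). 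Hence $U=e^{iK}$ with $K$ Hermitian and commuting with $H_0 \otimes \mathbbm{1} + \mathbbm{1} \otimes H_B$, and the curve $s \in [0,1] \mapsto \Phi_{(m,H_B,e^{isK})}$ connects the identity to $\Phi$ inside $\mathsf{TO}(H_0,T)$.

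Item (ii) then follows quickly: closure is closed by definition, boundedness from (i) makes it compact in the finite-dimensional space $\mathcal{L}(\mathbb{C}^{n\times n})$, and the semigroup and identity properties transfer by continuity of composition. The hard part is convexity of $\overline{\mathsf{TO}(H_0,T)}$: the plan is to realise a convex combination $\lambda \Phi_1 + (1-\lambda)\Phi_2$ as a limit of thermal operations by adjoining to the joint bath an auxiliary system whose Hamiltonian has widely separated energy levels, so that its Gibbs weights approximate $(\lambda,1-\lambda)$ to arbitrary precision. The resulting global unitary is chosen to act as the dilation of $\Phi_1$ on the subspace controlled by one classical level and as the dilation of $\Phi_2$ on the other, producing $\lambda\Phi_1+(1-\lambda)\Phi_2$ in the limit. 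For $T<\infty$ exact equality is in general not attainable, which is precisely why closure is required; this approximation argument is the main obstacle.

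Items (iii) and (iv) both follow from the commutation constraint on $U$. For (iii), the global Gibbs state $\rho_{SB}^{(T)} \propto e^{-(H_0\otimes\mathbbm{1}+\mathbbm{1}\otimes H_B)/T}$ factorises as $\rho_S^{(T)}\otimes \rho_B^{(T)}$ and commutes with $U$, hence $\Phi(e^{-H_0/T}) = \operatorname{tr}_B(U \rho_{SB}^{(T)} U^\dagger) \cdot Z = \operatorname{tr}_B(\rho_{SB}^{(T)})\cdot Z = e^{-H_0/T}$ for a suitable normalisation constant $Z$; Gibbs preservation is a closed linear condition and thus persists under closure. For (iv), energy conservation yields $[U,\,e^{itH_0}\otimes e^{itH_B}]=0$ for all $t$, and combined with the invariance of $\rho_B^{(T)}$ under conjugation by $e^{itH_B}$ plus cyclicity of the partial trace in the bath factor one computes $\Phi(e^{itH_0}\rho e^{-itH_0})=e^{itH_0}\Phi(\rho)e^{-itH_0}$. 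Differentiating at $t=0$ gives $[\Phi,\operatorname{ad}_{H_0}]=0$, which again persists in the closure since it is a closed linear equation in $\Phi$.
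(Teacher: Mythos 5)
Your proof is correct and follows the standard route: the paper itself offers no proof of this proposition, merely citing Sec.~II of \cite{vomEnde22thermal} for (i)--(iii) and Thm.~1 of \cite{Lostaglio15_2} for (iv), and your arguments (tensor-product baths for the semigroup property, the connected centraliser for path-connectedness, an energy-degenerate control register for convexity, and commutation of $U$ with functions of the total Hamiltonian for (iii) and (iv)) are exactly the ones used in those references. One small slip: your caveat on convexity is reversed --- for $T<\infty$ a two-level ancilla with gap $E=T\ln\bigl(\lambda/(1-\lambda)\bigr)$ realises the weights $(\lambda,1-\lambda)$ \emph{exactly}, whereas it is the $T=\infty$ case where only rational weights are exactly attainable and irrational ones require a limit; either way the convex combination lies in $\overline{\mathsf{TO}(H_0,T)}$, so the conclusion stands.
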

Statements (i) through (iii) can be found in Sec.~II of \cite{vomEnde22thermal} 
while statement (iv) is Thm.~1 in \cite{Lostaglio15_2}.
Some intuition as to whence condition (iv)
can be gained from the following basic observation:
given any system with Hamiltonian $H_0=\sum_{j=1}^nE_j|g_j\rangle
\langle g_j|$ in state $\rho=(\langle g_j,\rho g_k\rangle)_{j,k=1}^n$ there exists 
a thermal operation which mixes $\rho_{ij}$ and $\rho_{kl}$ if and only if
$E_i-E_j=E_k-E_l$ \cite[Rem.~3]{vomEnde22thermal}. But $E_i-E_j,E_k-
E_l\in\sigma(\ad_{H_0})$ so the action of any thermal operation is restricted by the 
degeneracies of $\operatorname{ad}_{H_0}$.
{
In particular, choosing $i=j$ or $k=l$ shows that a thermal operation can mix diagonal and
off-diagonal elements only if $H_0$ is degenerate.
This turns out to be a special case of the following (well-known and readily verified) result:
\begin{corollary}\label{cor:thermo-diagonal}
If $H_0$ has
non-degenerate spectrum, then the diagonal and the off-diagonal (w.r.t~any eigenbasis of $H_0$) action of any 
$\cptp$ map satisfying the covariance law from Prop.~\ref{prop_1}~(iv) are strictly separated. 
\end{corollary}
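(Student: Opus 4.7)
The plan is to diagonalise $\operatorname{ad}_{H_0}$ and exploit the covariance law. Write $H_0 = \sum_{j=1}^n E_j \ketbra{g_j}{g_j}$ in the given eigenbasis; then the matrix units $\ketbra{g_i}{g_j}$ form a basis of $\mathbb{C}^{n\times n}$ consisting of eigenvectors of $\operatorname{ad}_{H_0}$ with respective eigenvalues $E_i - E_j$. Grouping them according to their eigenvalue yields the spectral decomposition
\begin{equation*}
\mathbb{C}^{n\times n} \;=\; \bigoplus_{\omega \in \sigma(\operatorname{ad}_{H_0})} \mathcal{E}_\omega\,, \qquad \mathcal{E}_\omega := \operatorname{span}\{\ketbra{g_i}{g_j} : E_i - E_j = \omega\}\,.
\end{equation*}
Since $[\Phi,\operatorname{ad}_{H_0}] = 0$, the map $\Phi$ commutes with every spectral projector of $\operatorname{ad}_{H_0}$, hence it preserves each subspace $\mathcal{E}_\omega$.

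The key step in which non-degeneracy enters is the identification of $\mathcal{E}_0 = \ker(\operatorname{ad}_{H_0})$ with the diagonal subspace $\mathcal{D} := \operatorname{span}\{\ketbra{g_j}{g_j} : j = 1,\ldots,n\}$. Indeed, $E_i - E_j = 0$ forces $i = j$ precisely when the $E_j$ are all distinct; hence the off-diagonal subspace $\mathcal{O} := \operatorname{span}\{\ketbra{g_i}{g_j} : i \neq j\}$ equals $\bigoplus_{\omega \neq 0} \mathcal{E}_\omega$. From invariance of each eigenspace we immediately deduce $\Phi(\mathcal{D}) \subseteq \mathcal{D}$ and $\Phi(\mathcal{O}) \subseteq \mathcal{O}$, which is precisely the claimed strict separation of the diagonal and off-diagonal actions.

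There is no real obstacle here, the statement being essentially a direct consequence of simultaneous diagonalisability of commuting operators; what one should highlight is only the role of non-degeneracy, mirroring the discussion preceding the corollary: if two distinct Bohr frequencies coincided (in particular if some $E_i = E_j$ with $i \neq j$, giving the spurious zero eigenvalue), then $\ker(\operatorname{ad}_{H_0})$ would strictly contain $\mathcal{D}$ and covariant maps could transport weight between diagonal and off-diagonal sectors via the matrix units $\ketbra{g_i}{g_j}$ lying in $\mathcal{E}_0$.
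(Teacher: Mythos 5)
Your proposal is correct and follows exactly the argument the paper intends: the paper states the corollary as ``well-known and readily verified,'' and the verification it has in mind is precisely your observation that a map commuting with $\operatorname{ad}_{H_0}$ preserves each Bohr-frequency eigenspace $\mathcal{E}_\omega$, with non-degeneracy of $H_0$ guaranteeing $\mathcal{E}_0$ is exactly the diagonal subspace (this is the same mechanism as the mixing criterion $E_i-E_j=E_k-E_l$ discussed just before the corollary). Nothing is missing.
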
}
This symmetry 
is the defining property
of what is called the set of \textit{enhanced thermal operations} (sometimes \textit{thermal processes})~\cite{Cwiklinski15}:
\begin{align*}
\mathsf{EnTO}(H_0,T):=\{S\in \mathsf{Gibbs}(n)\,:\,[S,\operatorname{ad}_{H_0}]=0\}\;.
\end{align*}
Note that the enhanced thermal operations---just like the closure of the thermal operations---form a 
convex, compact semigroup with identity.

Now a central task in this framework is character{is}ing if 
some initial state can be transformed into a given target state
by a thermal operation.
Thus one naturally defines \textit{thermomajor{is}ation} as the major{is}ation induced by the semigroup $\overline{\mathsf{TO}(H_0,T)}$ (the latter sometimes denoted $\mathsf{CTO}(H_0,T)$ \cite{Gour22}), and the
set of all states thermomajor{is}ed by some $\rho\in\pos n$ is defined as
\begin{equation}\label{eq:def_M_H_T}
 M_{H_0,T}(\rho):=M_{\overline{\mathsf{TO}(H_0,T)}}(\rho)=\big\{ \Phi(\rho) : \Phi\in\overline{\mathsf{TO}(H_0,T)} \big\}\,.
\end{equation}
This
is also known as {\em (future) thermal 
cone} \cite{Lostaglio15_2,Korzekwa17,Oliveira22} or, in the case of 
quasi-classical states (i.e.~states $\rho$ which satisfy $[\rho,H_0]=0$) as thermal polytope \cite{Alhambra19} or thermomajor{is}ation polytope \cite{PolytopeDegen22}.
Note that in \eqref{eq:def_M_H_T} we did not choose $\mathsf{TO}(H_0,T)$ 
but its closure (which is known to make a difference \cite{Lostaglio15_2})
because this guarantees ``reasonable'' mathematical structure:
combining Prop.~\ref{prop:S-majorization} and~\ref{prop_1} shows that 
$M_{H_0,T}(\rho)$ is a convex compact subset of $\pos n$.
Similarly one can define the semigroup major{is}ation induced by $\mathsf{EnTO}(H_0,T)$ and $\mathsf{Gibbs}(H_0,T)$, respectively.
While $\overline{\mathsf{TO}(H_0,T)}
\subseteq\mathsf{EnTO}(H_0,T)\subseteq\mathsf{Gibbs}(H_0,T)$ for all $H_0\in{i}\mathfrak u(n)$, $T\in(0,
\infty]$ (Prop.~\ref{prop_1}) 
it is known that---although the action of these sets coincides on the diagonal (cf.~Sec.~\ref{sec:d_maj})---the corresponding notions of major{is}ation are strictly different
(as long as $n>2$ \cite{Cwiklinski15,vomEnde22})
due to an explicit counterexample
by Ding et al.~\cite{Ding21}.
Therefore it makes a conceptual difference whether one defines 
thermomajor{is}ation via thermal operations, enhanced thermal operations, or
the Gibbs-preserving quantum maps \cite{Faist17}.

In a recent approach to unify {\em dynamics} of open quantum systems with 
quantum thermodynamics, Lostaglio and Korzekwa 
\cite{LosKor22a} studied the intersection of enhanced thermal 
operations with (time-dependent) Markovian quantum maps: they
character{is}ed which states can be generated by such maps
in case the initial state $\rho$ is quasi-classical.
Their approach was to consider Markovian dynamics $(e^{-tL})_{t\geq 0}$ (i.e.~$L=i\operatorname{ad}_{H_0}+\hGA$ with $\hGA$ as in \eqref{eq:lindblad_V})
and study the thermodynamic restraints the condition $(e^{-tL})_{t\geq 0}\subseteq \mathsf{EnTO}(H_0,T)$ 
imposes on the generator of 
the system-environment-interaction $\hGA$ (which until now was arbitrary except its \gks-form).
They found that such dynamics are in $\mathsf{EnTO}(H_0,T)$ at all times
if and  only if\footnote{
The straightforward identity $[B^m,A]=\sum_{k=1}^mB^{m-k}[B,A]B^{k-1}$
($m\in\mathbb N$) shows that 
$[B,A]=0$ if and only if $[B^m,A]=0$ for all $m$ if and only if $[e^{tB},A]=0$ for 
all $t\geq 0$.
}
$[L,\operatorname{ad}_{H_0}]=0$ and $L(e^{-H_0/T})=0$, cf.~also 
\cite{Kosloff13}.

Cast into the framework of Lie semigroups what they did translates into 
characterising the Lie wedge of $\mathsf{EnTO}(H_0,T)$.
Accordingly
this motivates us to introduce the following constructive {\em definitions} for the {\em Markovian} counterparts of the sets above:
\begin{equation}\label{eqn:MSemigroups}
\begin{split}
\mathsf{MTO}(H_0,T)&:=\overline{\langle \exp\big(\mathsf{L}(\overline{\mathsf{TO}(H_0,T)})\big) \rangle}_{\rm SG}\\
\mathsf{MEnTO}(H_0,T)&:=\overline{\langle \exp\big(\mathsf{L}(\mathsf{EnTO}(H_0,T))\big) \rangle}_{\rm SG}\\
\mathsf{MGibbs}(H_0,T)&:=\overline{\langle \exp\big(\mathsf{L}(\mathsf{Gibbs}(H_0,T))\big) \rangle}_{\rm SG}
\end{split}
\end{equation}
As before, the overbar denotes the closure and
  $\langle \;\cdot\;\rangle_{\rm SG}$ is the semigroup generated by the
set in question.
In other words $\mathsf{MEnTO}(H_0,T)$ is the collection of all enhanced thermal operations that are time-dependent Markovian, 
which likewise holds for
$\mathsf{MTO}(H_0,T)$ and $\mathsf{MGibbs}(H_0,T)$. 
Translated to this language Lostaglio and Korzekwa studied the semigroup major{is}ation induced by $\mathsf{MEnTO}(H_0,T)$ in the quasi-classical realm.
We provide a sketch of the sets introduced in this section in Fig.~\ref{fig:setinclusions}.

\begin{remark}
By definition $\mathsf{MTO}(H_0,T)$, $\mathsf{MEnTO}(H_0,T)$, $\mathsf{MGibbs}(H_0,T)$ contain all propagators of bilinear control systems
(Sec.~\ref{sec:MarkovControl}) under the corresponding 
thermodynamic constraints.
This connection allows us to translate results from quantum control theory
into this major{is}ation framework:
for example Prop.~5.2.1 in \cite{vE_PhD_2020} implies that if $e^{-H_0/T}$ is a multiple of the identity, then the semigroup major{is}ation induced by $\mathsf{Gibbs}(H_0,T)$ and by $\mathsf{MGibbs}(H_0,T)$ coincide.
In other words each unital state transfer can also be real{is}ed by unital (time-dependent) {\em Markovian} maps.
\end{remark}

\begin{figure}[!ht] \centering
\includegraphics[width=0.55\textwidth]{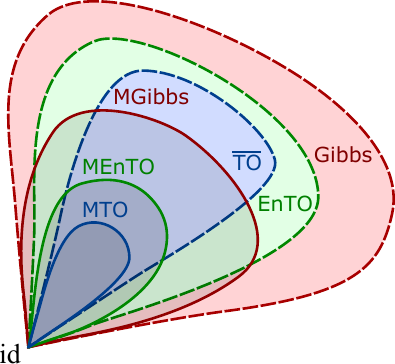}
\caption{(Colour online). 
Sketch of the semigroups $\mathsf{(En)TO}$ and $\mathsf{Gibbs}$, as well as their Markovian counterparts $\mathsf{M(En)TO}$ and $\mathsf{MGibbs}$  defined as {\em Lie semigroups} generated by their corresponding Lie wedges, see also the qubit example in Fig.~\ref{fig_TO_qubit_Markov}\,. Note that Markovianity depends on the semigroup, i.e.~there might be elements in $\mathsf{EnTO}\cap\mathsf{MGibbs}$ 
outside of $\mathsf{MEnTO}$.
{{
A similar set-inclusion for {\em elementary} thermal operations ($\mathsf{ETOs}$, which in general are a proper subset of $\overline{{\sf TO}}$s) 
can be found in~\cite[Fig.~3]{Lostaglio18}}.
}}\label{fig:setinclusions}
\end{figure}

In the remainder of this section we general{is}e the result of Lostaglio and Korzekwa 
\cite{LosKor22a} to the more physically motivated set of thermal operations.
%
%
We begin
studying its {\em Lie semigroup structure} by specifying the edge of the sets defined above:


\begin{lemma}\label{lemma_edge}
Given $H_0,H\in{i}\mathfrak u(n)$ and $T\in(0,\infty]$ the following statements are equivalent:
\begin{itemize}
\item[(i)] $[H,H_0]=0$
\item[(ii)] $(\operatorname{Ad}_{e^{-itH}})_{t\geq 0}\subseteq\mathsf{TO}(H_0,T)$
\item[(iii)] $(\operatorname{Ad}_{e^{-itH}})_{t\geq 0}\subseteq\overline{\mathsf{TO}(H_0,T)}$
\item[(iv)] $(\operatorname{Ad}_{e^{-itH}})_{t\geq 0}\subseteq\mathsf{EnTO}(H_0,T)$
\end{itemize}
Moreover, if $T\in(0,\infty)$, then all of the above statements are equivalent to:
\begin{itemize}
\item[(v)] $\operatorname{Ad}_{e^{-itH}}(e^{-H_0/T})=e^{-H_0/T}$ for all $t\geq 0$.
\end{itemize}
\end{lemma}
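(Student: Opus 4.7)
The natural structure is the cycle (i)~$\Rightarrow$~(ii)~$\Rightarrow$~(iii)~$\Rightarrow$~(iv)~$\Rightarrow$~(i), plus a separate equivalence (i)~$\Leftrightarrow$~(v) under the additional hypothesis $T<\infty$.

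For (i)~$\Rightarrow$~(ii) I would realise $\operatorname{Ad}_{e^{-itH}}$ as an explicit element of $\mathsf{TO}(H_0,T)$ using the trivial bath ($m=1$, $H_B=0$) together with the global unitary $U_t := e^{-itH}\otimes\mathbbm{1}_B$. The hypothesis $[H,H_0]=0$ is exactly what is needed for $U_t$ to commute with the total Hamiltonian $H_0\otimes\mathbbm{1}_B + \mathbbm{1}\otimes H_B$, so the energy-conservation constraint in \eqref{def:TO} is met; the Stinespring formula \eqref{eqn:CD-TO} then collapses (using $\operatorname{tr}\rho_B^{(T)}=1$) to $\operatorname{Ad}_{e^{-itH}}$ itself. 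The implications (ii)~$\Rightarrow$~(iii) and (iii)~$\Rightarrow$~(iv) are immediate: the first by passing to the closure, and the second by combining parts~(iii) and~(iv) of Prop.~\ref{prop_1}, which together give $\overline{\mathsf{TO}(H_0,T)} \subseteq \mathsf{EnTO}(H_0,T)$.

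For (iv)~$\Rightarrow$~(i) I would exploit the covariance built into $\mathsf{EnTO}$, namely $[\operatorname{Ad}_{e^{-itH}},\operatorname{ad}_{H_0}]=0$ for all $t\geq 0$. Differentiating at $t=0$ (using $\tfrac{d}{dt}\big|_{t=0}\operatorname{Ad}_{e^{-itH}} = -i\operatorname{ad}_H$) yields $[\operatorname{ad}_H,\operatorname{ad}_{H_0}]=0$, and the Lie-algebra identity $[\operatorname{ad}_H,\operatorname{ad}_{H_0}]=\operatorname{ad}_{[H,H_0]}$ (already invoked in the paper) then forces $\operatorname{ad}_{[H,H_0]}=0$. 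Hence $[H,H_0]$ is central in $\mathbb{C}^{n\times n}$ and therefore a scalar multiple of $\mathbbm{1}$; being a commutator of Hermitian matrices it is also traceless, so it must vanish.

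Finally, for (i)~$\Leftrightarrow$~(v) with $T<\infty$: the forward direction is immediate since commuting operators have commuting continuous functional calculi, so $[H,H_0]=0$ implies $[e^{-itH},e^{-H_0/T}]=0$. Conversely, differentiating (v) at $t=0$ gives $[H,e^{-H_0/T}]=0$. The key observation---and the only place finiteness of $T$ is genuinely used---is that $\lambda\mapsto e^{-\lambda/T}$ is injective on $\mathbb{R}$ for $T\in(0,\infty)$, so distinct eigenvalues of $H_0$ produce distinct eigenvalues of $e^{-H_0/T}$; consequently the two operators share the same spectral projectors, hence the same commutant, and $[H,e^{-H_0/T}]=0$ is equivalent to $[H,H_0]=0$. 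I do not anticipate a real obstacle; the only genuine subtlety is this last spectral step, which would fail at $T=\infty$ (where $e^{-H_0/T}=\mathbbm{1}$ trivially commutes with everything), thereby also explaining why (v) is only formulated in the finite-temperature case.
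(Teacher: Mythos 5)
Your proposal is correct and follows essentially the same route as the paper: the trivial-bath realisation for (i)$\Rightarrow$(ii), the inclusion chain $\mathsf{TO}\subseteq\overline{\mathsf{TO}}\subseteq\mathsf{EnTO}$ for (ii)$\Rightarrow$(iii)$\Rightarrow$(iv), differentiation of the covariance condition plus tracelessness of the commutator for (iv)$\Rightarrow$(i), and differentiation of (v) at $t=0$ combined with invertibility of $\lambda\mapsto e^{-\lambda/T}$ for the finite-temperature part. The only cosmetic difference is that the paper closes the loop via (iv)$\Rightarrow$(v) (by Gibbs-preservation of $\mathsf{EnTO}$) and phrases your spectral-projector step as simultaneous diagonalisation followed by taking the logarithm, which is the same underlying argument.
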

\begin{proof}
``(i) $\Rightarrow$ (ii)'': In the definition of the thermal operations \eqref{def:TO} 
choose $m=1$ and $U=e^{-itH}$. Doing so is allowed as $[H,H_0]=0$ by assumption.
\mbox{``(ii) $\Rightarrow$ (iii) $\Rightarrow$ (iv)'':} Obvious from $\mathsf{TO}(H_0,T)\subseteq\overline{\mathsf{TO}(H_0,T)}\subseteq \mathsf{EnTO}(H_0,T)$ \cite{Lostaglio15_2}.
``(iv) $\Rightarrow$ (i)'': Evidently, the generator of $(\operatorname{Ad}_{e^{-itH}})_{t\geq 0}$ is $-i\operatorname{ad}_H$. By our previous 
considerations $-i\operatorname{ad}_H\in \mathsf{L}(\mathsf{EnTO}(H_0,T))$ implies $[\operatorname{ad}_H,\operatorname{ad}_{H_0}]\equiv 0$.
Using $[\operatorname{ad}_H,\operatorname{ad}_{H_0}]=\operatorname{ad}_{[H,H_0]}$ this is equivalent to $[H,H_0]=\lambda\mathbbm 1$ for some $\lambda\in\mathbb C$; but this $\lambda$ has to vanish as $0=\operatorname{tr}([H,H_0])=\lambda\operatorname{tr}(\mathbbm1)=\lambda n$.

Now assume that $T<\infty$. ``(iv) $\Rightarrow$ (v)'': By definition of $\mathsf{EnTO}(H_0,T)$.
``(v) $\Rightarrow$ (i)'': Differentiating (v) at zero gives $[H,e^{-H_0/T}]=0$ so there exists $U\in\mathbb C^{n\times n}$ unitary such that $UHU^\dagger$ and $Ue^{-H_0/T}U^\dagger$ are both diagonal \cite[Thm.~2.5.5]{HJ1}.
But $Ue^{-H_0/T}U^\dagger=e^{-UH_0U^\dagger/T}$ so using functional calculus, the matrix
$
(-T)\ln\big(e^{-UH_0U^\dagger/T}\big)=UH_0U^\dagger
$
has to be diagonal due to $T\in(0,\infty)$.
Thus $U[H,H_0]U^\dagger=[UHU^\dagger,UH_0U^\dagger]=0$, hence (i) follows.
\end{proof}

As the inverse of a bijective $\cptp$ map is again $\cptp$ if and only if it is a unitary map\footnote{
Actually the map is unitary iff the inverse is positive, cf.~\cite[Cor.~3]{Wolf08a} \& \cite[Prop.~1]{vE_dirr_semigroups}.
} \cite[Thm.~III.2]{Buscemi05},
$\mathsf{E}(\mathsf{L}(\cptp(n)))=\{{i}\ad_H:H\in{i}\mathfrak u(n)\}$.
But the edge of any subsemigroup of $\cptp(n)$ has to be a subspace of $\mathsf{E}(\mathsf{L}(\cptp(n)))$, and the condition for one-parameter groups $(\operatorname{Ad}_{e^{-itH}})_{t\geq 0}$ to live in the corresponding semigroup---depending on the temperature $T$---is precisely condition (i) from Lem.~\ref{lemma_edge}.
This immediately yields the following:

\begin{corollary}\label{coro_edge}
Given $H_0\in{i}\mathfrak u(n)$, $T\in(0,\infty]$ one finds that $\mathsf{E}(\mathsf{L}(\mathsf{TO}(H_0,T)))$, $\mathsf{E}(\mathsf{L}(\overline{\mathsf{TO}(H_0,T)}))$, and $\mathsf{E}(\mathsf{L}(\mathsf{EnTO}(H_0,T)))$ equal $\{{i}\ad_H:H\in{i}\mathfrak u(n), [H,H_0]=0\}$,
and if $T\in(0,\infty)$ the same holds true for 
$\mathsf{E}(\mathsf{L}(\mathsf{Gibbs}(H_0,T)))$.
\end{corollary}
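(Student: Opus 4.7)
The corollary is essentially a direct consequence of the lemma together with the observation about bijective $\cptp$ inverses already made in the paragraph preceding the statement. My plan is to proceed in three short steps.

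First, I would record the ambient constraint: each of $\mathsf{TO}(H_0,T)$, $\overline{\mathsf{TO}(H_0,T)}$, $\mathsf{EnTO}(H_0,T)$, and (for $T<\infty$) $\mathsf{Gibbs}(H_0,T)$ is a subsemigroup of $\cptp(n)$ containing the identity. Since the edge of any such subsemigroup is a real vector subspace contained in $\mathsf{E}(\mathsf{L}(\cptp(n)))=\{i\ad_H:H\in i\mathfrak u(n)\}$ (by the Buscemi/Wolf characterisation cited in the preamble), every candidate element of the edge already has the form $A=i\ad_H$ for some Hermitian $H$. This reduces the question to: for which $H$ does $e^{tA}=\operatorname{Ad}_{e^{-itH}}$ lie in $S$ for \emph{all} $t\in\mathbb R$ (not just $t\ge 0$)?

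Second, I would invoke Lem.~\ref{lemma_edge} twice --- once with $H$ and once with $-H$ --- to collapse the edge condition to the single requirement $[H,H_0]=0$. Concretely, by definition of the edge, $i\ad_H\in\mathsf{E}(\mathsf{L}(S))$ iff both $(\operatorname{Ad}_{e^{-itH}})_{t\ge 0}\subseteq S$ and $(\operatorname{Ad}_{e^{itH}})_{t\ge 0}=(\operatorname{Ad}_{e^{-it(-H)}})_{t\ge 0}\subseteq S$. For $S\in\{\mathsf{TO}(H_0,T),\overline{\mathsf{TO}(H_0,T)},\mathsf{EnTO}(H_0,T)\}$, the equivalences (i)--(iv) of Lem.~\ref{lemma_edge} translate each of these inclusions into $[H,H_0]=0$ (respectively $[-H,H_0]=0$, which is the same condition). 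Conversely, if $[H,H_0]=0$, the same equivalences show that the full one-parameter group $(\operatorname{Ad}_{e^{-itH}})_{t\in\mathbb R}$ lies in $S$, so $i\ad_H$ belongs to the edge.

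Third, for $\mathsf{Gibbs}(H_0,T)$ with $T\in(0,\infty)$ I would use the additional equivalence (v) of the lemma. By definition $\operatorname{Ad}_{e^{-itH}}\in\mathsf{Gibbs}(H_0,T)$ amounts precisely to $\operatorname{Ad}_{e^{-itH}}(e^{-H_0/T})=e^{-H_0/T}$, so the membership of the one-parameter group in $\mathsf{Gibbs}(H_0,T)$ for all $t\in\mathbb R$ is condition (v), which by Lem.~\ref{lemma_edge} is equivalent to $[H,H_0]=0$. Combining the three steps yields the claimed equality for all four semigroups.

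The argument is essentially bookkeeping; the only genuine subtlety I see is the passage from the one-sided exponential condition in the definition \eqref{eq:liewedge_def} of a Lie wedge to the two-sided one needed for the edge, which is handled transparently by applying the lemma with both signs of $H$. I would also briefly flag why the Gibbs statement \emph{fails} at $T=\infty$: then $e^{-H_0/T}\propto\mathbbm 1$ and equivalence (v) degenerates, so every $i\ad_H$ lies in the edge of $\mathsf{L}(\mathsf{Gibbs}(H_0,\infty))$, illustrating that the lemma's restriction $T\in(0,\infty)$ in item (v) is essential.
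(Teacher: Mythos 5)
Your proposal is correct and follows essentially the same route as the paper: reduce to $\mathsf{E}(\mathsf{L}(\cptp(n)))=\{i\ad_H\}$ via the Buscemi/Wolf fact, then feed the one-parameter groups $\operatorname{Ad}_{e^{-itH}}$ into Lem.~\ref{lemma_edge} (items (i)--(iv) for the thermal-operation variants, item (v) for $\mathsf{Gibbs}$ when $T<\infty$). Your explicit handling of the two-sided edge condition by applying the lemma to both $H$ and $-H$ is a minor tidying of a step the paper leaves implicit (harmless, since $[H,H_0]=0$ is invariant under $H\mapsto -H$), and your remark on the degeneration at $T=\infty$ matches the paper's restriction.
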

The next step is to specify the {\em Lie wedge} of the semigroups in question.
This turns out to be almost trivial for the Gibbs-preserving maps and the enhanced thermal operations:
\begin{lemma}\label{lem:wedge}
Given $H_0,H\in{i}\mathfrak u(n)$ and $T\in(0,\infty]$ one has:
\begin{align*}
\mathsf{L}(\mathsf{Gibbs}(H_0,T))\phantom{.}&= \{L\in\mathcal L(\mathbb C^{n\times n})\,:\,L\text{ is of \gks{}-form and }L(e^{-H_0/T})=0 
  \}\\
\mathsf{L}(\mathsf{EnTO}(H_0,T))&= \{L\in\mathcal L(\mathbb C^{n\times n})\,:\,  
L\in\mathsf{L}(\mathsf{Gibbs}(H_0,T))\text{ and }[L,\operatorname{ad}_{H_0}]=0
  \}
\end{align*}
\end{lemma}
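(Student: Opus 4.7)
My plan is to deduce both wedge descriptions directly from the definition \eqref{eq:liewedge_def} of $\mathsf{L}(S)$, combined with the cornerstone fact (recalled just before Tab.~\ref{table0} and originally in \cite{DHKS08}) that $\mathsf{L}(\cptp(n))=\mathfrak{w}_{\sf GKSL}$. Since $\mathsf{EnTO}(H_0,T)\subseteq\mathsf{Gibbs}(H_0,T)\subseteq\cptp(n)$, any element $L$ of either wedge is automatically a \textsc{gksl}-generator, so in each case I only have to identify which additional algebraic constraint on $L$ is equivalent to the defining property of the larger semigroup being preserved for all $t\ge 0$.

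For the Gibbs wedge I would argue by two inclusions. For ``$\subseteq$'': if $L\in\mathsf{L}(\mathsf{Gibbs}(H_0,T))$, then $e^{tL}\in\cptp(n)$ for every $t\ge 0$, so $L$ is \textsc{gksl}; and differentiating the identity $e^{tL}(e^{-H_0/T})=e^{-H_0/T}$ at $t=0$ yields $L(e^{-H_0/T})=0$. For ``$\supseteq$'': assume $L$ is \textsc{gksl} and $L(e^{-H_0/T})=0$. Because $L$ commutes with its own semigroup, the curve $\phi(t):=e^{tL}(e^{-H_0/T})$ satisfies $\dot\phi(t)=e^{tL}L(e^{-H_0/T})=0$, hence $\phi(t)\equiv e^{-H_0/T}$ for all $t\ge 0$; together with $e^{tL}\in\cptp(n)$ this places $e^{tL}$ in $\mathsf{Gibbs}(H_0,T)$, giving $L\in\mathsf{L}(\mathsf{Gibbs}(H_0,T))$.

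For the enhanced-thermal wedge I would use the very definition $\mathsf{EnTO}(H_0,T)=\mathsf{Gibbs}(H_0,T)\cap\{\Phi:[\Phi,\operatorname{ad}_{H_0}]=0\}$. Thus $L\in\mathsf{L}(\mathsf{EnTO}(H_0,T))$ is equivalent to the conjunction of $L\in\mathsf{L}(\mathsf{Gibbs}(H_0,T))$ and $[e^{tL},\operatorname{ad}_{H_0}]=0$ for every $t\ge 0$. The second condition is equivalent to $[L,\operatorname{ad}_{H_0}]=0$ via the elementary identity $[L^m,\operatorname{ad}_{H_0}]=\sum_{k=1}^m L^{m-k}[L,\operatorname{ad}_{H_0}]L^{k-1}$ (flagged in the footnote preceding Cor.~\ref{coro_edge}): one direction propagates commutativity from $L$ to each power and hence to the exponential via the series, while the reverse direction is obtained by differentiating at $t=0$. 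Combined with the Gibbs result of the preceding paragraph this yields the stated description.

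I do not anticipate any serious obstacle: the whole argument is a textbook application of ``differentiate at the identity and propagate with the semigroup'' and the only point requiring care is the final equivalence between $[L,\operatorname{ad}_{H_0}]=0$ and $[e^{tL},\operatorname{ad}_{H_0}]=0$ for all $t\ge 0$, which is settled by the cited footnote. Note that I do \emph{not} need globality of either wedge here; that question is deferred and will be handled separately by invoking Thm.~\ref{thm_lie_global}(b) with $\mathfrak w_0=\mathfrak w_{\sf GKSL}$.
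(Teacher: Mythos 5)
Your proposal is correct and is essentially the paper's own argument in expanded form: the paper disposes of the lemma in one sentence ("taking the condition in Eq.~\eqref{eq:liewedge_def} and differentiating at zero"), and your two-inclusion write-up—identifying membership in $\cptp(n)$ with the \gks-form via $\mathsf{L}(\cptp(n))=\fw_{\sf GKSL}$, differentiating the fixed-point and covariance conditions at $t=0$, and propagating back with the power series resp.\ the footnote identity $[L^m,\operatorname{ad}_{H_0}]=\sum_{k}L^{m-k}[L,\operatorname{ad}_{H_0}]L^{k-1}$—is exactly the intended reasoning. No gaps.
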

\noindent One can see this by taking the condition in Eq.~\eqref{eq:liewedge_def} and differentiating at zero to get back the generator $L$.

In contrast, specifying the Lie wedge of $\overline{\mathsf{TO}(H_0,T)}$
is more involved.
This is where we make use of Prop.~\ref{prop:Lie-wedge}~(iv) as well as Cor.~\ref{cor:Lie-wedge}
which state that elements of $\mathsf{L}(S)$ are character{is}ed via
derivatives of certain curves in $S$ starting at the identity.
Thereby introducing a ``time" parameter, an obvious way to specify such curves 
for the case of $\mathsf{TO}(H_0,T)$ is by interpreting the energy-preserving unitary as 
endpoint of a curve of unitaries starting at zero in the sense of the diagram Eq.~\eqref{eqn:CD-TO}.
This is done
by choosing $H_\mathsf{tot}$ such that $U=e^{-iH_\mathsf{tot}}$ in
$$
 \operatorname{tr}_B\big(U\big((\cdot)\otimes 
\rho_B^{(T)}
\big)U^\dagger\big)= \operatorname{tr}_B\big(e^{-itH_\mathsf{tot}}\big((\cdot)\otimes 
\rho_B^{(T)}
\big)e^{itH_\mathsf{tot}}\big)\big|_{t=1}\;.
$$
By definition this curve is in $\mathsf{TO}(H_0,T)$ at all times.
Such curves and their first and second derivative have been studied recently 
by one of us: for all $m\in\mathbb N$, $H\in{i}\mathfrak u(mn)$, $\omega\in\pos m$
\begin{equation}\label{eq:stinespring_taylor}
\begin{split}
\operatorname{tr}_{\mathbb C^m}\big(e^{-itH}&((\cdot)\otimes\omega)e^{itH}\big)\equiv \\
&\equiv\operatorname{id}- it\big[\operatorname{tr}_\omega(H),\,\cdot\,\big]-\frac{t^2}{2} \sum_{j,k=1}^m \hGA_{\sqrt{2r_k}\operatorname{tr}_{|g_k\rangle\langle g_j|}(H)} +\mathcal O(t^3)
\end{split}
\end{equation}
if $\omega=\sum_{i=1}^mr_i|g_i\rangle\langle g_i|$ for $r_i\geq 0$ and an orthonormal basis $\{g_i\}_{i=1}^m$ of $\mathbb C^m$.
Here, given matrices $X\in\mathbb C^{m\times m}$, $B\in\mathbb C^{mn\times mn}$ the expression $\operatorname{tr}_X(B)$ is called ``partial trace of $B$ with respect to $X$''\,\footnote{
More precisely, $\operatorname{tr}_X(B)$ is set to be the unique $n\times n$-matrix which satisfies $\operatorname{tr}(A\operatorname{tr}_X(B))=\operatorname{tr}((A\otimes X)B)$ for all $A\in\mathbb C^{n\times n}$ \cite[Ch.~9, Lem. 1.1]{Davies76}.
Note that this recovers the ``usual'' partial trace when setting $X=\mathbbm1$.
}.
A proof can be found in \cite[Eq.~(5)]{vE22_Stinespring} or, for the reader's convenience, in App.~\appref{C}.

Now applying Prop.~\ref{prop:Lie-wedge}~(iv) to these curves yields the Hamiltonian generator ${i}
[\operatorname{tr}_{\rho_B^{(T)}}(H_\mathsf{tot}),\,\cdot\,]$ which by
Cor.~\ref{coro_edge} and Eq.~\eqref{eq:stinespring_taylor} is in $\mathsf{E}(\mathsf{L}(\overline{\mathsf{TO}(H_0,T)}))$.
However, we are interested in the dissipative action of the curve which is locked away behind the second derivative.
This is where we apply Cor.~\ref{cor:Lie-wedge}:
\begin{thm}\label{thm:markov_generator}
Let $m\in\mathbb N$, $H_B\in{i}\mathfrak u(m)$, as well as $H_{\mathsf{tot}}\in{i}\mathfrak u(mn)$ be given such that $[H_{\mathsf{tot}},H_0\otimes\mathbbm{1}+\mathbbm{1}\otimes H_B]=0$.
If
$\Phi$ is the solution to
\begin{align*}
\dot\Phi(t)=\big(-i\operatorname{ad}_H-\hGA_{B,{\mathsf{tot}}}\big)\Phi(t)\,,\quad \Phi(0)=\operatorname{id}\,
\end{align*}
with $H$ any element of ${i}\mathfrak u(n)$ such that $[H,H_0]=0$ and 
\begin{equation*}
\hGA_{B,{\mathsf{tot}}}:= \sum_{j, k=1}^m\Big( \tfrac12 \big(V_{j k}^\dagger V_{j k} (\cdot)+(\cdot) V_{j k}^\dagger V_{j k}\big)-V_{j k}(\cdot) V_{j k}^\dagger \Big)\,,
\end{equation*} 
$V_{jk}=e^{-E_k'/(2T)}\operatorname{tr}_{|g_k\rangle\langle g_j|}(H_{\mathsf{tot}})$ for all $j, k=1,\ldots,m$
where $\sum_{j=1}^m E_j'|g_j\rangle\langle g_j|$ is any spectral decomposition of the bath Hamiltonian $H_B$,
then $(\Phi(t))_{t\geq 0}$ is a 
continuous one-parameter semigroup in $\overline{\mathsf{TO}(H_0,T)}$.
\end{thm}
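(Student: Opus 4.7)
The plan is to realise the generator $-i\operatorname{ad}_H-\hGA_{B,{\mathsf{tot}}}$ as an element of the Lie wedge $\mathsf{L}(\overline{\mathsf{TO}(H_0,T)})$: once this is established, the very definition of a Lie wedge in~\eqref{eq:liewedge_def} immediately yields $(\Phi(t))_{t\geq 0}\subseteq\overline{\mathsf{TO}(H_0,T)}$. Because $\overline{\mathsf{TO}(H_0,T)}$ is convex and compact by Prop.~\ref{prop_1}~(ii), its Lie wedge is a closed convex cone (Prop.~\ref{prop:Lie-wedge}~(i)), and it therefore suffices to show the two ingredients separately: (a) $-i\operatorname{ad}_H\in\mathsf{E}(\mathsf{L}(\overline{\mathsf{TO}(H_0,T)}))$, and (b) $-\hGA_{B,{\mathsf{tot}}}\in\mathsf{L}(\overline{\mathsf{TO}(H_0,T)})$.

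Part (a) is immediate from Cor.~\ref{coro_edge} since $[H,H_0]=0$ by assumption. For part (b) I would apply Cor.~\ref{cor:Lie-wedge} to the explicit Stinespring curve
\[
\gamma(t):=\operatorname{tr}_B\bigl(e^{-itH_{\mathsf{tot}}}\bigl((\cdot)\otimes\rho_B^{(T)}\bigr)e^{itH_{\mathsf{tot}}}\bigr),
\]
which by the energy-conservation hypothesis $[H_{\mathsf{tot}},H_0\otimes\mathbbm 1+\mathbbm 1\otimes H_B]=0$ lies in $\mathsf{TO}(H_0,T)\subseteq\overline{\mathsf{TO}(H_0,T)}$ for every $t\in\mathbb R$ and manifestly satisfies $\gamma(0)=\operatorname{id}$.

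The main step requiring genuine work is to verify that $\dot\gamma(0)\in\mathsf{E}(\mathsf{L}(\overline{\mathsf{TO}(H_0,T)}))$, the hypothesis needed to invoke Cor.~\ref{cor:Lie-wedge}. From the Taylor expansion \eqref{eq:stinespring_taylor} one reads off $\dot\gamma(0)=-i\operatorname{ad}_{\operatorname{tr}_{\rho_B^{(T)}}(H_{\mathsf{tot}})}$, so by Cor.~\ref{coro_edge} the claim reduces to $[\operatorname{tr}_{\rho_B^{(T)}}(H_{\mathsf{tot}}),H_0]=0$. I would prove this by dualising against an arbitrary $A\in\mathbb C^{n\times n}$: the defining property of the partial trace combined with the covariance hypothesis (to replace $[H_{\mathsf{tot}},H_0\otimes\mathbbm 1]$ by $-[H_{\mathsf{tot}},\mathbbm 1\otimes H_B]$) and the cyclicity of the trace funnel the computation into an expression carrying a tensor factor $[\rho_B^{(T)},H_B]$, which vanishes because $\rho_B^{(T)}$ is a function of $H_B$; consequently $\operatorname{tr}\bigl(A\,[\operatorname{tr}_{\rho_B^{(T)}}(H_{\mathsf{tot}}),H_0]\bigr)=0$ for every $A$.

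With $\dot\gamma(0)\in\mathsf{E}(\mathsf{L}(\overline{\mathsf{TO}(H_0,T)}))$ in hand, Cor.~\ref{cor:Lie-wedge} supplies $\ddot\gamma(0)\in\mathsf{L}(\overline{\mathsf{TO}(H_0,T)})$. Substituting $r_k=e^{-E_k'/T}/\operatorname{tr}(e^{-H_B/T})$ into \eqref{eq:stinespring_taylor} and using that $\hGA_{(\cdot)}$ is quadratic in its argument shows that $\ddot\gamma(0)$ equals $-\hGA_{B,{\mathsf{tot}}}$ up to the strictly positive scalar factor $2/\operatorname{tr}(e^{-H_B/T})$; closure of the Lie wedge under non-negative scalar multiplication then delivers (b). Combining (a) and (b) and invoking~\eqref{eq:liewedge_def} settles the theorem, with continuity and the one-parameter semigroup property being automatic from the exponential form of the solution $\Phi(t)=e^{-t(i\operatorname{ad}_H+\hGA_{B,{\mathsf{tot}}})}$.
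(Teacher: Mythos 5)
Your proposal is correct and follows essentially the same route as the paper's proof: the same Stinespring curve $\gamma$, the Taylor expansion \eqref{eq:stinespring_taylor}, Cor.~\ref{coro_edge} for the first derivative, Cor.~\ref{cor:Lie-wedge} for the second, and the convex-cone property of the Lie wedge to assemble $-i\operatorname{ad}_H-\hGA_{B,\mathsf{tot}}$. The only (welcome) addition is that you spell out the verification of $[\operatorname{tr}_{\rho_B^{(T)}}(H_{\mathsf{tot}}),H_0]=0$, which the paper leaves implicit when citing Cor.~\ref{coro_edge}.
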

\begin{proof}
Consider
$
\gamma:[0,\infty) \to\overline{\mathsf{TO}(H_0,T)}$,
$t \mapsto \operatorname{tr}_B(e^{-itH_\mathsf{tot}}((\cdot)\otimes 
\rho_B^{(T)})e^{itH_\mathsf{tot}})$
and note that $\gamma$ is well-defined by assumption on $H_B,H_{\mathsf{tot}}$.
Also $\gamma(0)=\operatorname{id}$ and $\dot\gamma(0)\in \mathsf{E}(\mathsf{L}(\overline{\mathsf{TO}(H_0,T)}))$ by  Eq.~\eqref{eq:stinespring_taylor}
 \& Cor.~\ref{coro_edge}.
Thus---because $\overline{\mathsf{TO}(H_0,T)}$ is a compact, convex semigroup (Prop.~\ref{prop_1}~(ii))---we may apply Cor.~\ref{cor:Lie-wedge} to $\gamma$ which shows that $\ddot\gamma(0)\in \mathsf{L}(\overline{\mathsf{TO}(H_0,T)})$.
If $H_B=\sum_{j=1}^m E_j'|g_j\rangle\langle g_j|$, then $e^{-H_B/T}=\sum_{j=1}^m e^{-E_j'/T}|g_j\rangle\langle g_j|$ so, again using Eq.~\eqref{eq:stinespring_taylor}
\begin{align*}
\ddot\gamma(0)=-\frac{2}{\operatorname{tr}(e^{-H_B/T})} \sum_{j,k=1}^m \hGA_{\sqrt{e^{-E_k'/T}}\operatorname{tr}_{|g_k\rangle\langle g_j|}(H_{\mathsf{tot}})}
\end{align*}
is in $\mathsf{L}(\overline{\mathsf{TO}(H_0,T)})$.
But the latter is a convex cone (Prop.~\ref{prop:Lie-wedge}~(i)) so for all $H\in{i}\mathfrak u(n)$, $[H,H_0]=0$
\begin{align}\label{eq:wedge-TO}
\mathsf{L}(\overline{\mathsf{TO}(H_0,T)})\ni
-i\operatorname{ad}_H+
\frac{\operatorname{tr}(e^{-H_B/T})}{2}\ddot\gamma(0)=
-i\operatorname{ad}_H-\hGA_{B,{\mathsf{tot}}}
\end{align}
which concludes the proof.
\end{proof}
\begin{remark}\label{rem:TO-scaling}
Though $(\operatorname{tr}_{|g_k\rangle\langle g_j|}(H_{\mathsf{tot}}))^\dagger=\operatorname{tr}_{|g_j\rangle\langle g_k|}(H_{\mathsf{tot}})$, note the ``asymmetry'' in the $V_{jk}$ induced by temperature via the factor $e^{-E_k'/(2T)}$. 
Assuming $T<\infty$ prevents the dynamics from automatically being unital (i.e.~identity-preserving).
It is worth stressing that the temperature-independent projection $\operatorname{tr}_{|g_j\rangle\langle g_k|}(H_{\mathsf{tot}})$ also preserves the time scaling
chosen for the global energy conserving one-parameter unitary time evolution $e^{-itH_{\sf tot}}$ 
and takes it over into scaling the generators of the semigroup $\overline{\mathsf{TO}(H_0,T)}$ to be
{\em consistent} with the same time parameter---as anticipated in Rem.~\ref{rem:semigroup-scaling}.
\end{remark}

\noindent
At this point, Andrzej Kossakowski would have insisted that Thm.~\ref{thm:markov_generator} does not provide 
a complete characterisation yet---and rightly so: we have to leave the converse open for now,
but do so at the level of a well supported conjecture.
%
\begin{conjecture}\label{conj1}
The converse of Thm.~\ref{thm:markov_generator} holds true, up to taking the closure of the collection of all these generators.
\end{conjecture}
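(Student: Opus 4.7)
Let $\mathcal{W}_0 \subseteq \mathcal{L}(\mathbb{C}^{n\times n})$ denote the closed convex cone generated by all operators of the form $-i\operatorname{ad}_H - \hGA_{B,\mathsf{tot}}$ produced by Theorem~\ref{thm:markov_generator}, as $m$, $H_B$, $H_{\mathsf{tot}}$ (with $[H_{\mathsf{tot}},H_0\otimes\mathbbm1+\mathbbm1\otimes H_B]=0$) and $H$ (with $[H,H_0]=0$) vary. The direction $\mathcal{W}_0 \subseteq \mathsf{L}(\overline{\mathsf{TO}(H_0,T)})$ is immediate from Theorem~\ref{thm:markov_generator} together with the fact that the Lie wedge is a closed convex cone (Prop.~\ref{prop:Lie-wedge}~(i)). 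The task is the reverse inclusion.

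The plan is to pick $L \in \mathsf{L}(\overline{\mathsf{TO}(H_0,T)})$, so that $e^{tL} \in \overline{\mathsf{TO}(H_0,T)}$ for all $t \geq 0$, and extract $L$ as a limit of generators in $\mathcal{W}_0$. Choose a sequence $t_n \downarrow 0$ and, for each $n$, thermal operations
\begin{equation*}
\Phi_n = \operatorname{tr}_B\bigl(e^{-i H_{\mathsf{tot}}^{(n)}}\bigl((\cdot)\otimes \rho_{B,n}^{(T)}\bigr) e^{i H_{\mathsf{tot}}^{(n)}}\bigr)
\end{equation*}
with bath data $(m_n, H_B^{(n)})$, approximating $e^{t_n L}$ to order $o(t_n)$. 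Reinterpret each $\Phi_n$ as the time-$1$ value of the curve $s \mapsto \operatorname{tr}_B(e^{-is H_{\mathsf{tot}}^{(n)}}((\cdot)\otimes \rho_{B,n}^{(T)}) e^{is H_{\mathsf{tot}}^{(n)}})$ and rescale by writing $H_{\mathsf{tot}}^{(n)} = \sqrt{t_n}\,\widetilde H_{\mathsf{tot}}^{(n)}$. Applying the Taylor expansion~\eqref{eq:stinespring_taylor} at second order, the first-order contribution is the commutator with $\operatorname{tr}_{\rho_{B,n}^{(T)}}(\widetilde H_{\mathsf{tot}}^{(n)})$ (an edge element by Cor.~\ref{coro_edge}), while the second-order contribution yields exactly a dissipator of the form $\hGA_{B,\mathsf{tot}}$ built from $\widetilde H_{\mathsf{tot}}^{(n)}$. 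If the $\widetilde H_{\mathsf{tot}}^{(n)}$ can be kept in a compact set, pass to a subsequential limit to obtain $L$ as an element of $\mathcal{W}_0$ modulo the edge, which finishes the argument because $\mathcal{W}_0$ already contains the full edge.

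The main obstacle is precisely the control of $(m_n,\widetilde H_{\mathsf{tot}}^{(n)})$ along the approximating sequence: in general the bath dimension $m_n$ may grow, and the norm $\|\widetilde H_{\mathsf{tot}}^{(n)}\|$ may blow up even after the $\sqrt{t_n}$-rescaling. Two strategies suggest themselves to handle this. First, use a Stinespring-dilation argument for each $e^{t_n L}$: since $e^{t_n L}$ is a quantum map on $\mathbb{C}^{n\times n}$, it can be dilated with a bath of dimension at most $n^2$, and one must then argue that among such dilations one can choose one respecting the energy-conservation constraint, which is nontrivial and may force enlarging the bath. Second, attempt a Hahn--Banach separation: if $L \in \mathsf{L}(\overline{\mathsf{TO}(H_0,T)}) \setminus \overline{\mathcal{W}_0}$, a separating functional $\alpha$ satisfying $\alpha(L) < 0 \leq \alpha(\mathcal{W}_0)$ would have to vanish on the edge (by Cor.~\ref{cor:Lie-wedge}-style arguments applied to Stinespring curves), and one would need to derive a contradiction by constructing, for each $\varepsilon > 0$, an explicit generator in $\mathcal{W}_0$ within $\varepsilon$ of $L$.

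A third, more structural route is to argue in two steps: first characterise $\mathsf{L}(\mathsf{EnTO}(H_0,T))$ as a direct sum over $\operatorname{ad}_{H_0}$-eigenspaces by decomposing any covariant GKSL generator into its Bohr-frequency components (as in the Davies weak-coupling literature), and then show that each such component can be realised, up to closure, by a dilation $H_{\mathsf{tot}}$ supported on a pair $(|g_j\rangle,|g_k\rangle)$ of bath eigenstates with the appropriate energy gap---the thermal factor $e^{-E_k'/(2T)}$ in $V_{jk}$ then automatically produces the detailed-balance structure expected of $L$. The hardest step in any of these approaches is the explicit reconstruction of a compatible global Hamiltonian $H_{\mathsf{tot}}$ from an abstract generator $L$ satisfying the necessary covariance and Gibbs-preservation conditions; it is precisely this dilation step that the authors leave as a conjecture, and experience from Davies' theory suggests the converse holds only after taking the indicated closure.
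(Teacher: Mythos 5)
This statement is a \emph{conjecture} in the paper: the authors explicitly do not prove the converse of Thm.~\ref{thm:markov_generator}, and they only offer two pieces of supporting evidence (that the analogous statement without the energy-conservation constraint holds for general quantum maps by \cite[Thm.~2]{vE22_Stinespring}, and that the set of generators from Thm.~\ref{thm:markov_generator} passes the necessary edge-invariance test of Prop.~\ref{prop:Lie-wedge}~(ii)). Your proposal is likewise not a proof; it is a catalogue of three possible attack routes, and you yourself flag the step where each of them breaks down. To be concrete about the gap: the easy inclusion $\mathcal{W}_0 \subseteq \mathsf{L}(\overline{\mathsf{TO}(H_0,T)})$ is just Thm.~\ref{thm:markov_generator} restated, and everything of substance lies in the reverse inclusion. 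Your first strategy requires, for each $e^{t_nL}$, an \emph{energy-conserving} Stinespring dilation whose bath dimension $m_n$ and rescaled Hamiltonian norm $\|\widetilde H_{\mathsf{tot}}^{(n)}\|$ stay uniformly bounded; nothing in the paper or in the standard dilation theory guarantees this once the constraint $[H_{\mathsf{tot}},H_0\otimes\mathbbm1+\mathbbm1\otimes H_B]=0$ is imposed, and without that compactness the subsequential-limit step has no content. Moreover, even with bounded data, the limit you extract via Eq.~\eqref{eq:stinespring_taylor} is only the second-order Taylor coefficient of a curve of thermal operations; identifying that coefficient with the given $L$ (rather than merely with \emph{some} element of the wedge) requires a quantitative version of Cor.~\ref{cor:Lie-wedge} that neither you nor the paper supplies. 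Your second (Hahn--Banach) and third (Bohr-frequency decomposition) routes both terminate at the same unresolved reconstruction problem: producing a compatible global Hamiltonian $H_{\mathsf{tot}}$ from an abstract covariant, Gibbs-annihilating GKSL generator.

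Your diagnosis of \emph{why} the problem is hard is accurate and consistent with the authors' reasons for leaving it open, and your proposed strategies are reasonable starting points. But since no route is carried to completion, the proposal cannot be accepted as a proof of the statement; it should be presented as a discussion of approaches to the conjecture, not as a resolution of it.
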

{\small
This conjecture is supported by the following two observations:
\begin{itemize}
\item Conjecture~\ref{conj1} holds for general quantum maps, that is, after getting rid of the commutator relation $[H_{\mathsf{tot}},H_0\otimes\mathbbm{1}+\mathbbm{1}\otimes H_B]=0$ as well as replacing $\overline{\mathsf{TO}}$ with $\cptp(n)$.
The only difference here is that only full-rank ancilla states are allowed whereas
\cite[Thm.~2]{vE22_Stinespring} uses arbitrary ancilla states.
This difference, however, vanishes in the closure.
\item 
For any
$H_0\in{i}\mathfrak u(n)$, $T\in(0,\infty]$
the collection of all generators described in Thm.~\ref{thm:markov_generator} is invariant under conjugation by arbitrary edge elements (i.e.~the necessary condition from Prop.~\ref
{prop:Lie-wedge}~(ii)).
Indeed given any $H'\in {i}\mathfrak u(n)$ with $[H',H_0]=0$---hence ${i}\ad_{H'}\in\mathsf{E}(\mathsf{L}(\overline{\mathsf{TO}(H_0,T)}))$ by Cor.~\ref{coro_edge}---one finds
\begin{align*}
&\operatorname{Ad}_{e^{-iH'}}\circ\big(-i\operatorname{ad}_H-\hGA_{B,{\mathsf{tot}}}\big)\circ\operatorname{Ad}_{e^{iH'}}\\
&\quad\ \ =-i[e^{-iH'}He^{iH'},\,\cdot\,]- \sum_{j, k=1}^m \hGA_{e^{-E_k'/(2T)}e^{-iH'}\operatorname{tr}_{|g_k\rangle\langle g_j|}(H_{\mathsf{tot}})e^{iH'}}\\
&\quad\ \  =-i[e^{-iH'}He^{iH'},\,\cdot\,]- \sum_{j, k=1}^m \hGA_{e^{-E_k'/(2T)}\operatorname{tr}_{|g_k\rangle\langle g_j|}((e^{-iH'}\otimes\mathbbm1)H_{\mathsf{tot}}(e^{-iH'}\otimes\mathbbm1)^\dagger)}\,.
\end{align*}
But this generator is of the form described in Thm.~\ref{thm:markov_generator} as can be seen by replacing $H_{\mathsf{tot}}$ by $(e^{-iH'}\otimes\mathbbm1)H_{\mathsf{tot}}(e^{-iH'}\otimes\mathbbm1)^\dagger$, replacing $H$ by $e^{-iH'}He^{iH'}$, and keeping $H_B$ as is.
One readily verifies that the new $H$ ($H_{\mathsf{tot}}$) commutes with $H_0$ ($H_0\otimes\mathbbm1+\mathbbm1\otimes H_B$), and thus is a valid generator;
this follows from $e^{-iH'}H_0e^{iH'}=e^{-i\operatorname{ad}_{H'}}(H_0)=0$ which is a direct consequence of $[H',H_0]=0$.
\end{itemize}
}%
%
\bigskip

\begin{remark}\label{rem:global-wedge-Markov}
To sum up, we emphasise that the Lie wedges to the quantum maps $\mathsf{Gibbs}$, $\mathsf{EnTO}$
(Lem.~\ref{lem:wedge}) and to $\overline{\mathsf{TO}}$ (Thm.~\ref{thm:markov_generator} and Eq.~\eqref{eq:wedge-TO})
are {\em global}, as by Thm.~\ref{thm_lie_global} they generate the corresponding Lie semigroups, which in turn
{\em define} the respective Markovian counterparts
$\mathsf{MGibbs}$,
$\mathsf{MEnTO}$, and
$\mathsf{MTO}$ via the construction of Eqs.~\eqref{eqn:MSemigroups}. The set inclusions are illustrated in Fig.~\ref{fig:setinclusions}.
\end{remark}

\bigskip
Explicitly scrutinising 
and visual{is}ing the above sets and relations in the case of qubits
will elucidate the concepts introduced in this section concretely. 

\medskip
\begin{myexample}

\subsection{\negthickspace Worked Example: Markovian Thermal Operations in Qubits}\vspace{1mm}\label{par:worked-ex1}

\noindent
Let us demonstrate how to arrive from a general non-Markovian (enhanced) thermal operation via the 
\gks-generator in its Lie wedge at the corresponding Lie semigroup giving the desired
{\em Markovian} counterpart to the given (enhanced) thermal operation as in Eq.~\eqref{eqn:MSemigroups}.
We study the simple case of a single qubit.
What is special about two-dimensional systems is twofold: first, time-independent and time-dependent Markovianity 
coincide (as explained at the end of this example), and secondly 
the thermal operations and
 the enhanced thermal operations approximately coincide \cite{Ding19}, \cite[Thm.~10]{vomEnde22thermal}.
 Thus,
every thermal operation for $H_0:=\tfrac{1}{2}\operatorname{diag}(-1,1)$ acts like
$$
\begin{pmatrix}
\rho_{11}&\rho_{12}\\\rho_{21}&\rho_{22}
\end{pmatrix}\mapsto
\begin{pmatrix}
(1-w\varepsilon)\rho_{11}+w\rho_{22}&z\rho_{12}\\z^*\rho_{21}&w\varepsilon\rho_{11}+(1-w)\rho_{22}
\end{pmatrix}
$$
for some $w\in[0,1]$, $z\in\mathbb C$ such that $|z|\leq\sqrt{(1-w)(1-w\varepsilon)}$ \cite[Sec.~IV]{vomEnde22thermal} with the short-hand $\varepsilon:=e^{-1/T}$ used henceforth. \\[0mm]

{\em Parameter{is}ing the Elements in the Lie Wedge}\\
With these stipulations, 
Lem.~\ref{lem:wedge} then allows to specify the elements in the Lie wedge of the (enhanced) thermal operations:
a linear map $L$ on $\mathbb C^{2\times 2}$ is a generator of a one-parameter semigroup in $\mathsf{EnTO}(H_0,T)$ if and only if
\begin{itemize} 
\item[(i)] $L$ preserves Hermiticity,
\item[(ii)] $L$ is trace-annihilating,
\item[(iii)] $L$ is conditionally completely positive (\cite[Cor.~1]{Lindblad76} \& \cite[Thm.~14.7]{EL77}),
	$$(\mathbbm1-|\Omega\rangle\langle\Omega|)\big((\operatorname{id}\otimes L)
	(|\Omega\rangle\langle\Omega|)\big)(\mathbbm1-|	\Omega\rangle\langle\Omega|)\geq 0$$
	where $\Omega=\frac1{\sqrt2}\begin{pmatrix}
	1&0&0&1
	\end{pmatrix}^\top$,
\item[(iv)] $L(e^{-H_0/T})=0$, and
\item[(v)] $[L,\operatorname{ad}_{H_0}]=0$.
\end{itemize}

Property (v) is equivalent to
$$
L\begin{pmatrix}
\rho_{11}&\rho_{12}\\
\rho_{21}&\rho_{22}
\end{pmatrix}=\begin{pmatrix}
x\rho_{11}+u\rho_{22}&z\rho_{12}\\ y\rho_{21}&v\rho_{11}+w\rho_{22}
\end{pmatrix}
$$
for some $u,v,w,x,y,z\in\mathbb C$ (cf.~also Cor.~\ref{cor:thermo-diagonal}).
Thereby (i), (ii), and (iv) combined are equivalent to $v=-x=ue^{-1/T}$, $w=-u$,
and $y=z^*$ where $u\in\mathbb R$.
Finally, evaluating conditional complete positivity 
yields
$u\geq 0$ and $2\operatorname{Re}z\leq -u(1+\varepsilon)$.
Altogether this shows $L\in\mathsf L(\mathsf{EnTO}(H_0,T))$
if and only if
$$
L\begin{pmatrix}
\rho_{11}&\rho_{12}\\
\rho_{21}&\rho_{22}
\end{pmatrix}=\begin{pmatrix}
u(\rho_{22}-\rho_{11}\varepsilon)&(-x+i\omega)\rho_{12}\\(-x-i\omega)\rho_{21}&u(\rho_{11}\varepsilon-\rho_{22})
\end{pmatrix}
$$
for some $u\geq 0$, $x,\omega\in\mathbb R$ such that $2x\geq u(1+\varepsilon)$.
Casting this into the Markovian $\gks$ master equation \eqref{eq:diss_evolution}
yields
%
(in superoperator representation) 
\begin{equation*}
L = -i\adr_{\operatorname{diag}({-\omega,\omega})/2} - \hGA_{\gks} 
\;\widehat=
\begin{pmatrix} -\varepsilon u &0 &0 &u\\ 0 &-x-i\omega &0 &0\\ 0 &0 &-x+i\omega &0\\ \varepsilon u &0 &0 &-u  \end{pmatrix}.
\end{equation*}
where the (by assumption well-defined) generators of $\hGA_{\gks}$ are
$$
V_{11}=\,\frac{\sqrt{2x-u(1+\varepsilon)}}{2}\begin{pmatrix}
1&0\\0&-1
\end{pmatrix},\ 
V_{12}=\,\sqrt{u}\begin{pmatrix}
0&1\\0&0
\end{pmatrix},\ 
V_{21}=\,\sqrt{u\varepsilon}
\begin{pmatrix}
0&0\\
1&0
\end{pmatrix}.
$$
Setting $V_{22}=0$, in the language of Thm.~\ref{thm:markov_generator} this corresponds to
$H_B:=\operatorname{diag}(-1,1)/2=H_0$  
and a global Hamiltonian (with $u\geq 0$ playing the role of a scaling parameter later used for fixing the time scale $t$)
$$
H_\mathsf{tot}:=\begin{pmatrix}
\frac12\sqrt{2x-u(1+\varepsilon)}&0&0&0\\
0&0&\sqrt{u}&0\\
0&\sqrt{u}&-\frac12\sqrt{2x-u(1+\varepsilon)}&0\\
0&0&0&0
\end{pmatrix}\in{i}\mathfrak u(4)\;.
$$

\noindent
{\em Character{is}ing One-Parameter Semigroups in $\mathsf{MEnTO}(H_0,T)$}\\[1mm]
For the one-parameter semigroup of time evolutions ($t\geq 0$)
one 
finds 
\begin{eqnarray}
S(t)\; =\;\; e^{tL} &\widehat=& \begin{pmatrix} \tfrac{1+\varepsilon e^{-tu(1+\varepsilon)}}{1+\varepsilon} &0 & 0 &\tfrac{1-e^{-tu(1+\varepsilon)}}{1+\varepsilon} \\ 
0 & e^{-(x+i{\omega})t} & 0 & 0\\
0 & 0 & e^{-(x-i{\omega})t} & 0\\
		\tfrac{\varepsilon(1-e^{-tu(1+\varepsilon)})}{1+\varepsilon}  &0 &0 &\tfrac{\varepsilon + e^{-tu(1+\varepsilon)}}{1+\varepsilon}   \end{pmatrix}\notag \\[2mm]
&=&\;%
\begin{pmatrix} 1-\varepsilon\mu_t &0 & 0 &\mu_t\\ 
0 & e^{-xt} e^{-i{\omega}t} & 0 & 0\\
0 & 0 & e^{-xt}e^{i{\omega}t} & 0\\
\varepsilon\mu_t &0 &0 &1-\mu_t  
\end{pmatrix}\;,\label{eq:S_t_qubit}
\end{eqnarray}
where the last identity uses another short-hand $\mu_t:=(1-e^{-tu(1+\varepsilon)})/(1+\varepsilon)$ with non-negative times
enforcing $0\leq\mu_t\leq 1/(1+\varepsilon)$.
For the action on just the vector of diagonal elements of the density operator one gets
\begin{equation*}
G(t) =
\begin{pmatrix} 1-\varepsilon\mu_t &\mu_t\\ 
\varepsilon\mu_t &1-\mu_t  
\end{pmatrix}\;,
\end{equation*}
which resembles the $\beta$-swap for a two-level system: Note that $S(t)$ and $G(t)$ are both Gibbs-stochastic for the entire fictitious 
parameter range\footnote{in which $S(t)$ resp.~$G(t)$ stabilise the Gibbs state $\frac{1}{1+\varepsilon}\operatorname{diag}(1,\varepsilon)$ 
	resp.~$\frac{1}{1+\varepsilon}(1,\varepsilon )^\top$}  
$\mu_t\in[0,1]$ exploited in the $\beta$-swap 
(then including ``complex times"\footnote{
	Formally $t=-{\ln(1-\mu_t(1+\varepsilon))}/{u(1+\varepsilon)}$; recall that  
	for complex $\ell=|\ell|e^{i\lambda}$ one has \mbox{\quad\;\, $\ln(\ell)=\ln(|\ell|)+ i\lambda$}, 
	as used in Fig.~\ref{fig_TO_qubit_Markov}\,.}), 
whereas
they are {\em Markovian} only in the positive-time segment $\mu_t\in[0,1/(1+\varepsilon)]$, whence $G(t)$ derives from a
one-parameter semigroup $S(t)\in\mathsf{MEnTO}(H_0,T)$ of the form of Eq.~\eqref{eq:S_t_qubit}
and has itself a non-negative determinant. 
\bigskip
\bigskip

\noindent
{\em How the Markovian ${\mathsf{MTO}}$s  Sit in General ${\mathsf{TO}}$s}\\[1mm]
More generally, (the superoperator of) any map $\Phi\in\overline{\mathsf{TO}(H_0,T)}$
is of the form
\begin{equation}\label{eq:TO_general_qubit}
\begin{pmatrix}
1-\varepsilon\mu&0&0&\mu\\
0&e^{-x}e^{-i\omega }&0&0\\
0&0&e^{-x}e^{i\omega }&0\\
\varepsilon\mu&0&0&1-\mu
\end{pmatrix}
\end{equation}
for $\mu\in[0,1]$, $x\geq 0$ so that $e^{-2x}\leq (1-\varepsilon\mu)(1-\mu)=1-\mu(1+\varepsilon)+\varepsilon\mu^2$, and it is again 
in $\mathsf{MTO}(H_0,T)$ if and only if $\mu\in[0, \tfrac{1}{(1+\varepsilon)}]$ (see \cite{Alhambra19,Ptas22}) plus $e^{-2x}\leq1-\mu(1+\varepsilon)$.
%
%
Since $\varepsilon=e^{-1/T}\to 0$ for temperatures $T\to 0^+$ %
it is obvious that---in the case of a single qubit---Markovianity becomes
no longer a restriction and $\mathsf{MTO}(H_0,T) \to 
\overline{\mathsf{TO}(H_0,T)}$ as $T\to 0^+$ (e.g., in the Hausdorff metric \cite[§21.VII]{Kuratowski66}).
Not only are these two scenarios equivalent in terms of \textit{state conversion} as $T\to 0^+$ 
(as was known for arbitrary finite dimensions \cite[Thm.~1]{CDC19}),
but equality also holds on the level of \textit{quantum maps}---in the case of qubits, which is new.
Whether this holds for qutrits and higher dimensions is an open question.

\bigskip
\bigskip

\noindent
{\em Visualisation}\\[1mm]
By Eqs.~\eqref{eq:S_t_qubit} \& \eqref{eq:TO_general_qubit}
one can visual{is}e the set of (Markovian) qubit thermal operations.
The main tool is the semigroup homomorphism $\Psi_T$ from the Hermiticity-preserving linear maps on $\mathbb C^{2\times 2}$ to $(\mathbb R\times\mathbb C,\circ_T)$\,\footnote{\label{fnote:assoc}%
The operation $\circ_T$ is defined
via
$(\mu_1,c_1)\circ_T(\mu_2,c_2):=(\mu_1+\mu_2-\mu_1\mu_2(1+\varepsilon),c_1c_2)$.
}
which acts like $\Psi_T:\Phi\mapsto(\langle e_1,\Phi(|e_2\rangle\langle e_2|)e_1\rangle,\langle e_1,\Phi(|e_1\rangle\langle e_2|)e_2\rangle)^\top$, cf.~\cite[Sec.~IV]{vomEnde22thermal}.
One finds 
$$
\Psi_T\big(\overline{ \exp{\big(\mathsf{L}(\overline{\mathsf{TO}(H_0,T)})}\big)}  \big)=\Big\{\left(\begin{smallmatrix}
\mu_t\\[.5mm] e^{2\pi i\phi}\, r\,
\sqrt{1-\mu_t(1+\varepsilon)}
\end{smallmatrix}\right):r, 
\phi\in[0,1];\, t\in[0,\infty]\Big\}\,.
$$
This set turns out to be a subsemigroup of $(\mathbb R\times\mathbb C,\circ_T)$ which implies that the closure of ${\exp{\big(\mathsf{L}(\overline{\mathsf{TO}(H_0,T)})\big)}}$ is a subsemigroup of $\overline{\mathsf{TO}(H_0,T)}$.
One gets
$$
\mathsf{MTO}(H_0,T)=\overline{\langle \exp\big(\mathsf{L}(\overline{\mathsf{TO}(H_0,T)})\big) \rangle}_{\rm SG}=\overline{\exp\big(\mathsf{L}(\overline{\mathsf{TO}(H_0,T)})}\big),
$$
which shows that $\mathsf{MTO}(H_0,T)$ is 
{\em weakly exponential}\,\footnote{A closed subsemigroup $\mathbf{S}\subseteq\mathcal B(\mathcal Z)$ 
with Lie wedge $\mathsf{L}(\mathbf{S})$
is {\em exponential} and {\em weakly exponential} if $\mathbf{S} = \exp(\mathsf{L}(\mathbf{S}))$
and $\mathbf{S} = \overline{\exp(\mathsf{L}(\mathbf{S}))}$, respectively. It is {\em locally exponential} if there exists
a $\operatorname{id}$-neighbourhood basis w.r.t.~$\mathbf{S}$ 
consisting of exponential subsets, both being detailed in~\cite{HofRupp97div,DHKS08}, refer also to footnote~\ref{f-note}.
Note that $\mathsf{MTO}(H_0,T)$ is actually exponential once the closure in Eq.~\eqref{eq:Lie-semigroup} is taken with respect to $\mathsf{GL}(\mathcal Z)$.
}. 
Actually, commutativity of $\circ_T$ for {\em fixed temperature}\,\footnote{
    Commutativity is lost for different temperatures, see the comment in App.~\appref{E}\,.
  }
  immediately implies that $\mathsf{MTO}(H_0,T)$ is {\em locally} as well as {\em (weakly) exponential}
  and its Lie wedge takes the special form of a Lie semialgebra, cf.~\cite[Ch.~1, Thm.~A \& Ch.~2, Ex.~1.6]{HofRupp97div}
  and \cite[Thm.~2.2]{DHKS08}.
So---here in the qubit case---the two notions of time-dependent and time-independent Markovianity coincide.
Taking the union of $\mathsf{MTO}(H_0,T)$ over {\em all temperatures} one obtains a semigroup
$\mathsf{MTO}(H_0)
:=\overline{\bigcup_{T\in(0,\infty)}\mathsf{MTO}(H_0,T)}$
which turns out to be both, weakly exponential and locally exponential as shown in App.~\appref{E}\,.\medskip

%

Finally let us actually visual{is}e how $\mathsf{MTO}(H_0,T)$ sits inside $\overline{\mathsf{TO}(H_0,T)}$ in~Fig.~\ref{fig_TO_qubit_Markov} 
(cf.~\cite[Fig.~2]{vomEnde22thermal}) for the case of a single qubit. Its upper left panel 
also illustrates how for small times $\mu_t\to 0$ the non-Markovian blue cone extends to the identity map
including an infinitesimal region {\em outside} the connected component of the (orange) Markovian Lie semigroup.

Now this becomes obvious by the shapes determined by the outer curves of the blue (orange) cone 
in the time direction $\mu_t$ which---up to rotational
symmetry---follow $\sqrt{1-\mu_t(1+\varepsilon)+\mu_t^2\varepsilon}$ 
 where $\mu_t\in[0,1]$ (resp.~$\sqrt{1-\mu_t(1+\varepsilon)}$ in the Markovian case where $\mu_t\in[0,1/(1+\varepsilon)]$).\medskip

In the lower panel of Fig.~\ref{fig_TO_qubit_Markov} the time dependence in $\mu_t=(1-e^{-tu(1+\varepsilon)})/(1+\varepsilon)$ formally 
leads to a time in multiples of the scaling factor $u > 0$ reading $t[u]=-{\ln(1-\mu_t(1+\varepsilon))}/{(1+\varepsilon)}$. 
Taking ``the'' preimage of the interval $[0,1]$ under this function yields a (partially) complex 
set: $t(\mu_t)$ is real (orange) for $0\leq\mu_t\leq 1/(1+\varepsilon)=:\mu_*$, i.e.~up to its pole at the  
Markovian limit $\mu_*$,
whence it becomes complex (blue) with constant imaginary part $\pm\pi/(1+\varepsilon)$
in the non-Markovian segment $\mu_*<\mu_t\leq 1$. 
%
\end{myexample}
\medskip
%


\begin{figure}[!ht]
\centering
\includegraphics[width=0.34\textwidth]{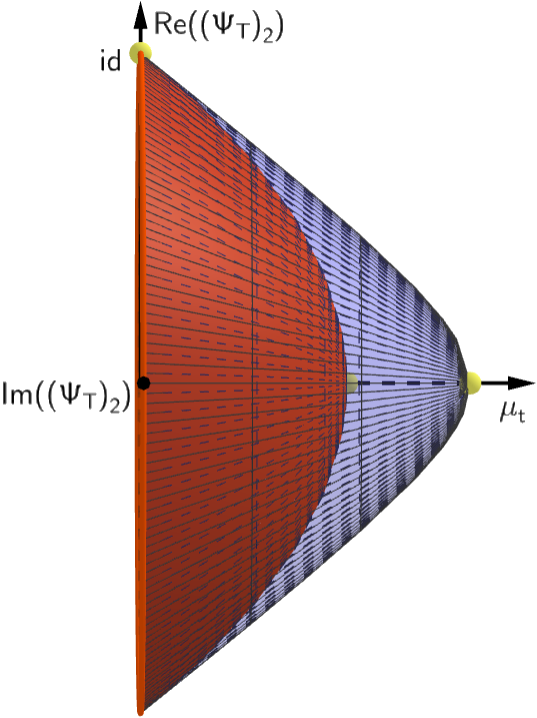}\qquad
\includegraphics[width=0.42\textwidth]{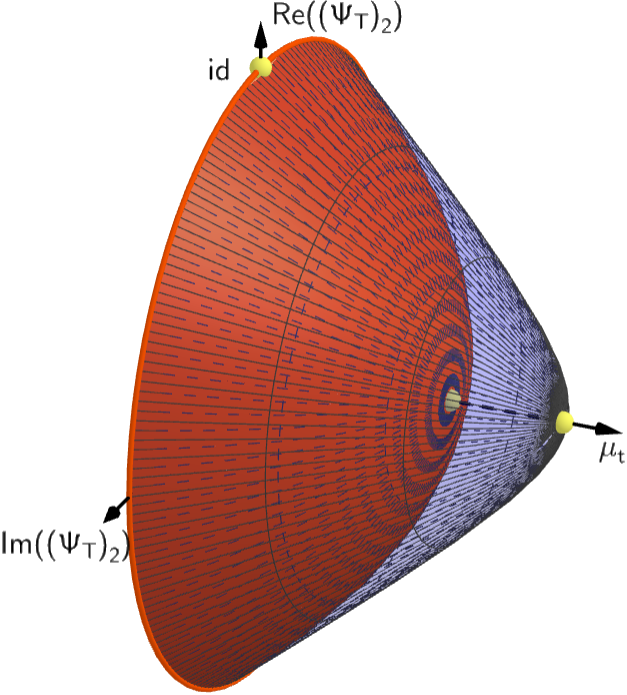}\\
\mbox{\phantom{.}\hspace{4.0cm}\hfill \includegraphics[width=0.64\textwidth]{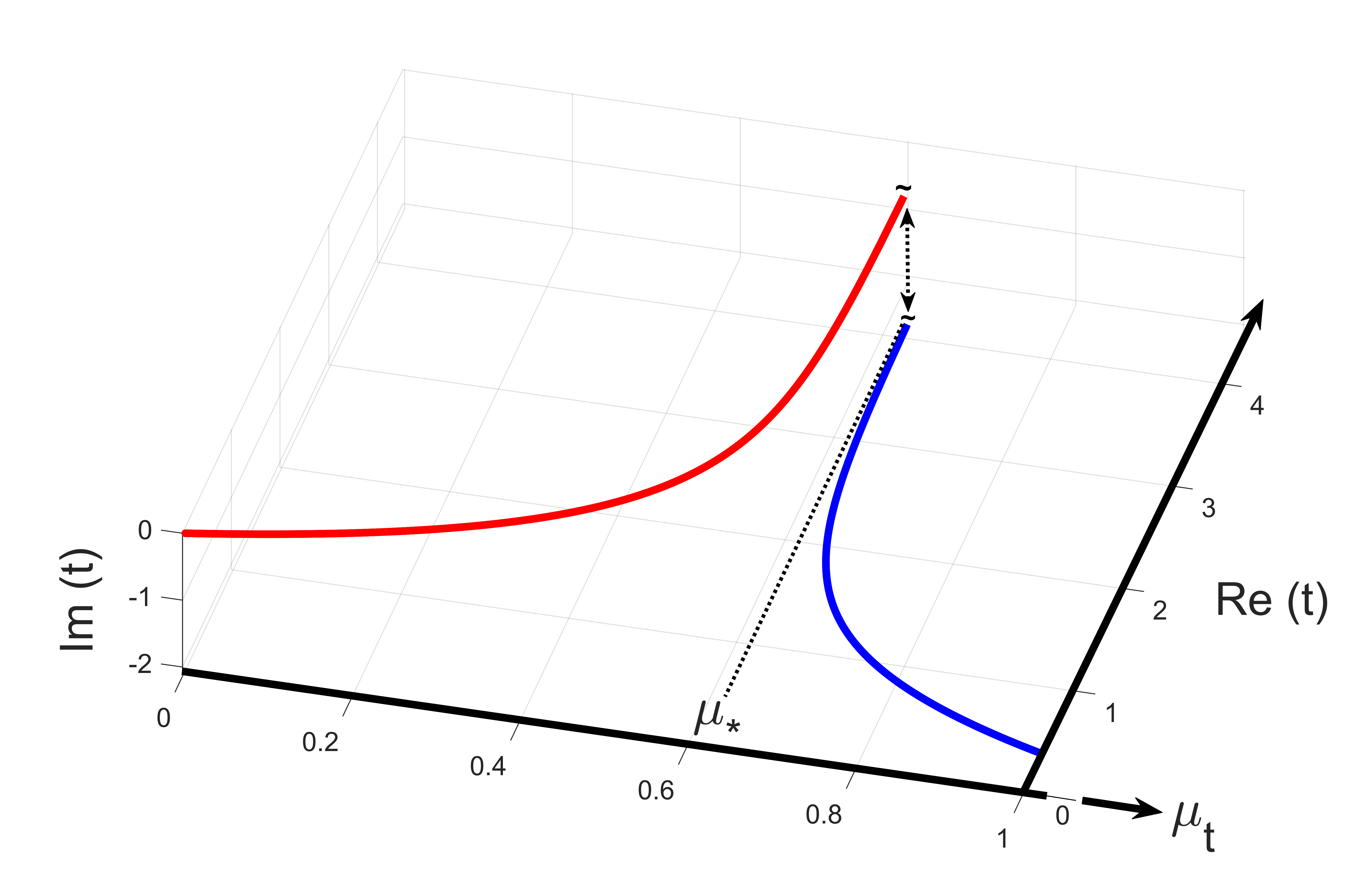}}
\caption{{\small{
(Colour online). Upper panels:
Two aspect angles for the graphs of $\Psi_T$ 
(with $\varepsilon=0.6$) when restricting the domain to 
$\overline{\mathsf{TO}(H_0,T)}$ (blue cone), and its 
Markovian counterpart $\mathsf{MTO}(H_0,T)$ (orange cone), respectively.
The yellow ``tip'' of the blue (orange) cone corresponds to the respective $\beta$-swaps (the ``thermal reset'' 
map $\rho\mapsto\rho_{\sf{Gibbs}}(H_0,T)$). ---
%
Lower panel: The time dependence in $\mu_t$ 
 formally leads to a time in multiples of the scaling factor $u > 0$ of $t[u]$, 
which is real in the Markovian segment $0\leq\mu_t\leq \mu_*$, i.e.~up to the pole at $\mu_*$
and complex in the non-Markovian segment $\mu_*<\mu_t\leq 1$ as detailed in the text.
The critical $\mu_*\in[\frac12,1]$ tends to one as $T\to 0^+$, thus illustrating
$\mathsf{MTO}(H_0,T) \to 
\overline{\mathsf{TO}(H_0,T)}$ in the zero-temperature limit.}
}}%
\label{fig_TO_qubit_Markov}
\end{figure}

\begin{myexample}
To sum up, the worked example discussed here also elucidated that the {\em single-qubit case} 
is special in as much as\\[-6.5mm]
\begin{itemize}
\item[(1)] For fixed temperature $T$ (and for the union over all $T$)  $\mathsf{MTO}(H_0,T)$ (resp.\ $\mathsf{MTO}(H_0)$)
	are generated by Lie semialgebras, respectively.
\item[(2)] So here time-independent and time-dependent Markovianity coincide.
\item[(3)] In the zero-temperature limit, Markovian thermal operations even exhaust all thermal operations: $\mathsf{MTO}(H_0,T) \to \overline{\mathsf{TO}(H_0,T)}$ as $T\to 0^+$.
\end{itemize}%
Due to statement (2) throughout this example we have been using the term ``Markovian" without
further specification.
\end{myexample}

\medskip

\noindent
{\small{%
To avoid misunderstandings, let us emphasise that 
in the {\em general case} (i.e.~beyond single qubits)
one has the following:
(1) For fixed temperature $T$ (and for the union over all $T$)  $\mathsf{MTO}(H_0,T)$ (resp.\ $\mathsf{MTO}(H_0)$)
	need {\em not} be generated by Lie semialgebras.
(2) Hence time-independent and time-dependent Markovianity need {\em not} coincide in general.
(3) In the zero-temperature limit, the relation between Markovian thermal operations $\mathsf{MTO}(H_0,T)$ and thermal operations $\overline{\mathsf{TO}(H_0,T)}$
is an open problem. }}

\color{black}
\subsection{The Role of $d$-Major{is}ation and the Associated Polytope}\label{sec:d_maj}
Let us return to the character{is}ation of state transitions via 
(enhanced) thermal operations
for the case of non-degenerate $H_0$ and quasi-classical initial states $\rho$,
i.e.~$[\rho,H_0]=0$.
In this case $\rho$ is diagonal in some basis which diagonal{is}es $H_0$ so w.l.o.g.~$\rho=\operatorname{diag}(y)$, $H_0=\operatorname{diag}(E_i)_{i=1}^n$.
It follows
\begin{equation*}
M_{H_0,T}(\operatorname{diag}(y))=\big\{ \operatorname{diag}(Ay): A\in\mathbb R^{n\times n}\text{ Gibbs-stochastic}\big\}\,,
\end{equation*}
where $\subseteq$ is due to Cor.~\ref{cor:thermo-diagonal} and $\supseteq$ is shown in \cite{Shiraishi20}
(cf.~also \cite{PolytopeDegen22}). Recall that $A\in\mathbb R^{n\times n}$ is called
\textit{Gibbs-stochastic} if $A$ is column-stochastic ($a_{ij}\geq 0$ for all $i,j$ 
and all columns of $A$ sum up to $1$) and the Gibbs-vector
$d:=(e^{-E_i/T})_{i=1}^n$ is a fixed point of $A$, that is $Ad=d$ \cite{Lostaglio19r}. In the
mathematics literature such a matrix is called \textit{$d$-stochastic} \cite[Ch.~14.B]{MarshallOlkin} which 
motivates defining
\begin{equation*}
M_d(y):=\big\{Ay:A\in\mathbb R^{n\times n}\ d\text{-stochastic}\,\big\}
\end{equation*}
as {\em all diagonals of states} one can thermodynamically generate (i.e.~by $\overline{\mathsf{TO}}$
and hence by $\mathsf{EnTO}$ or of course by $\mathsf{Gibbs}$)
starting from $\operatorname{diag}(y)$.
In other words the diagonal action of every Gibbs-preserving map (when projecting onto the diagonal) is a $d$-stochastic matrix,
and every $d$-stochastic matrix is the diagonal action of some element of $\overline{\mathsf{TO}}$.
The object $M_d(y)$ is known as \textit{$d$-major{is}ation polytope} \cite{vomEnde22}.
Note that in the high-temperature limit $d$ becomes the vector of equal weights $\frac1n(1,\ldots,1)^\top$ which recovers the concept of doubly stochastic
matrices, leading back to classical major{is}ation
\cite[Ch.~2.B]{MarshallOlkin}.

There are several ways to character{is}e the conditions for a $d$-stochastic matrix to exist so that it maps one real vector to another. 
Thus let us start with the most common one in the physics literature 
originally defined by Horodecki and Oppenheim \cite{Horodecki13}: given any vector of Gibbs weights $d\in\mathbb R^n$ with $d>0$ as well as some $y\in\mathbb R^n$, 
the \textit{thermomajor{is}ation curve of $y$} is defined to be the piecewise linear, continuous curve fully character{is}ed by the elbow points $\{ \big(\sum_{i=1}^j d_{\pi(i)},\sum_{i=1}^j y_{\pi(j)}\big) \}_{j=0}^n$. Here, $\pi\in S_n$ is any permutation such that $\frac{y_{\pi(1)}}{d_{\pi(1)}}\geq\ldots\geq\frac{y_{\pi(n)}}{d_{\pi(n)}}$.
Equivalently \cite[Rem.~7]{vomEnde22}, this map---which we denote by $\mathsf{th}_{d,y}:[0,\mathbbm e^{\top}d]\to \mathbb R$ where $\mathbbm e:=(1,\ldots,1)^\top$---satisfies
$$
\mathsf{th}_{d,y}(c)= \min_{\{i=1,\ldots,n\,:\,d_i>0\}} \Big(\Big(\sum_{j=1}^n\max\Big\{y_j-\frac{y_i}{d_i}d_j,0\Big\}\Big)+\frac{y_i}{d_i}c\Big)
$$
for all $c\in [0,\mathbbm e^{\top}d]$.
Together with \cite[Prop.~1]{vomEnde22} one thus gets:
\begin{proposition}\label{prop_char_dmaj}
Let $x,y,d\in\mathbb R^n$ with $d>0$ be given. The following statements are equivalent:
\begin{itemize}
\item[(i)] There exists a $d$-stochastic matrix $A\in\mathbb R^{n\times n}$ such that $Ay=x$. We denote this by $x\prec_d y$ and say that $x$ is \textnormal{$d$-major{is}ed} by $y$.
\item[(ii)] $\mathbbm e^{\top} x=\mathbbm e^{\top} y$ and $\mathsf{th}_{d,x}(c)\leq\mathsf{th}_{d,y}(c)$ for all $c\in[0,\mathbbm e^\top d]$.
\item[(iii)] $\mathbbm e^{\top}x=\mathbbm e^{\top}y$, and for all $j=1,\ldots,n-1 $
$$
\sum_{i=1}^jx_{\pi(i)}=\mathsf{th}_{d,x}\Big(\sum_{i=1}^j d_{\pi(i)}\Big)\leq\mathsf{th}_{d,y}\Big(\sum_{i=1}^j d_{\pi(i)}\Big)
$$
if $\pi\in S_n$ is any permutation such that $\frac{x_{\pi(1)}}{d_{\pi(1)}}\geq\ldots\geq \frac{x_{\pi(n)}}{d_{\pi(n)}}$.
\item [(iv)] $\mathbbm e^{\top} x=\mathbbm e^{\top} y$ and $\|d_ix-y_id\|_1\leq\|d_iy-y_id\|_1$ for all $i=1,\ldots,n $ where $\|\cdot\|_1$ is the usual vector $1$-norm.
\end{itemize}
Moreover, if $H_0\in{i}\mathfrak u(n)$, $T\in(0,\infty]$ are such that $H_0$ is non-degenerate and $d$ is the vector of Gibbs weights w.r.t.~$H_0$ and $T$, then the above conditions are equivalent to $\operatorname{diag}(y)$ thermomajor{is}ing $\operatorname{diag}(x)$ w.r.t.~$H_0$ and $T$.
\end{proposition}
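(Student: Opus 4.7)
The plan is to treat this as primarily a synthesis of known characterisations of $d$-majorisation, with the last sentence being the genuinely new content that glues everything to the semigroup $\overline{\mathsf{TO}(H_0,T)}$.

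First, for the equivalence (i) $\Leftrightarrow$ (iv) I would simply cite \cite[Prop.~1]{vomEnde22}, where the triangle-inequality-type criterion $\|d_i x - y_i d\|_1 \leq \|d_i y - y_i d\|_1$ is shown to characterise the existence of a $d$-stochastic transport matrix. Similarly (i) $\Leftrightarrow$ (ii) is the classical Horodecki--Oppenheim thermomajorisation theorem; one can either invoke it directly or obtain it from (iv) using the min-of-affine-functions representation of $\mathsf{th}_{d,y}$ recalled in \cite[Rem.~7]{vomEnde22}, which makes the identification of the $\ell^1$-distance with the Legendre-type envelope transparent.

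Next, the equivalence (ii) $\Leftrightarrow$ (iii) is the part I would write out explicitly, since it is purely geometric. The direction (ii) $\Rightarrow$ (iii) is immediate by evaluating the inequality at the specific abscissae $c_j := \sum_{i=1}^j d_{\pi(i)}$ together with the identity $\sum_{i=1}^j x_{\pi(i)} = \mathsf{th}_{d,x}(c_j)$, which holds precisely because the $\pi$ in (iii) is chosen to sort the ratios $x_{\pi(i)}/d_{\pi(i)}$ in decreasing order (so the $c_j$ are exactly the elbows of $\mathsf{th}_{d,x}$). For the converse, I would exploit two facts: $\mathsf{th}_{d,y}$ is concave on $[0,\mathbbm e^\top d]$, and $\mathsf{th}_{d,x}$ is piecewise linear with nodes precisely at $c_0 = 0, c_1, \ldots, c_n = \mathbbm e^\top d$. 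Hence, for any $c$ in a segment $[c_{j-1}, c_j]$, linearity of $\mathsf{th}_{d,x}$ together with concavity of $\mathsf{th}_{d,y}$ above its chord yields
\[
\mathsf{th}_{d,y}(c) \geq \tfrac{c_j - c}{c_j - c_{j-1}} \mathsf{th}_{d,y}(c_{j-1}) + \tfrac{c - c_{j-1}}{c_j - c_{j-1}} \mathsf{th}_{d,y}(c_j) \geq \mathsf{th}_{d,x}(c),
\]
where the second inequality uses (iii) at the endpoints and the trace equality $\mathbbm e^\top x = \mathbbm e^\top y$ for the boundary node $c_n$.

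Finally, for the quasi-classical thermomajorisation statement, the plan is to combine the identity
\[
M_{H_0,T}(\operatorname{diag}(y)) = \{\operatorname{diag}(Ay) : A \in \mathbb R^{n\times n} \text{ Gibbs-stochastic}\}
\]
stated at the start of Sec.~\ref{sec:d_maj} with Cor.~\ref{cor:thermo-diagonal}. The inclusion ``$\subseteq$'' there uses that any $\Phi \in \overline{\mathsf{TO}(H_0,T)}$ commutes with $\operatorname{ad}_{H_0}$ and hence (by non-degeneracy of $H_0$) acts as a linear map on the diagonal which is necessarily $d$-stochastic because $\Phi$ is $\cptp$ and preserves the Gibbs state. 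The inclusion ``$\supseteq$'' is the realisation result from \cite{Shiraishi20}: every $d$-stochastic matrix is implementable by an element of $\overline{\mathsf{TO}(H_0,T)}$. Combining this identity with the equivalence (i) gives the last claim.

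The main obstacle I anticipate is bookkeeping around the elbow-point characterisation in (ii) $\Leftrightarrow$ (iii): one must make sure the chosen permutation $\pi$ in (iii) is compatible with the construction of $\mathsf{th}_{d,x}$ (unique up to ties, which do not affect the curve), so that $c_1 < c_2 < \cdots < c_n$ really exhaust the nodes of $\mathsf{th}_{d,x}$. Everything else is either a direct citation or the concavity/linearity sandwich displayed above.
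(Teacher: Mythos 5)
Your proposal is correct and matches the paper's approach: the paper itself gives no written proof but simply derives the proposition from \cite[Rem.~7]{vomEnde22} and \cite[Prop.~1]{vomEnde22} together with the identity $M_{H_0,T}(\operatorname{diag}(y))=\{\operatorname{diag}(Ay): A \text{ Gibbs-stochastic}\}$ (via Cor.~\ref{cor:thermo-diagonal} and \cite{Shiraishi20}), exactly the ingredients you cite. The one piece you work out explicitly, (ii) $\Leftrightarrow$ (iii) via concavity of $\mathsf{th}_{d,y}$ versus piecewise linearity of $\mathsf{th}_{d,x}$ between its elbows $c_j$, is sound (note $d>0$ guarantees $c_0<c_1<\cdots<c_n$, so the chord argument never degenerates), so nothing further is needed.
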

\noindent Interestingly, above character{is}ations extend to the case where entries of the $d$-vector are allowed to be zero\footnote{
More precisely Prop.~\ref{prop_char_dmaj} continues to hold if the background temperature equals zero and if the system's ground state energy is non-degenerate \cite{PolytopeDegen22}.
}.
While (iv) $\Rightarrow$ (i) is usually proven indirectly via Farkas' Lemma \cite[Cor.~7.1.d]{Schrijver86}
there also exists a constructive algorithm which translates two comparable thermomajor{is}ation curves into a $d$-stochastic transition matrix \cite{Shiraishi20}. Notably this procedure simplifies considerably if the final state is an extreme point of the $d$-major{is}ation polytope induced by the initial state \cite{Alhambra19,PolytopeDegen22}.
Either way this leads to the following character{is}ation of the thermomajor{is}ation polytope \cite[Thm.~10]{vomEnde22}:
\begin{equation*}
M_d(y)=\big\{x\in\mathbb R^n:\mathbbm e^\top x=\mathbbm e^\top y\quad\wedge\quad\forall_{m\in\{0,1\}^n}\ m^\top x\leq  \mathsf{th}_{d,y}(m^\top d) \big\}
\end{equation*}
This finally justifies calling $M_d(y)$ a polytope because any bounded set which is the intersection of finitely many halfspaces is a convex polytope.
From a geometric point of view, the facets of the polytope $M_d(y)$ have universal orientation given by $m^\top$,
whereas their location defined by $\mathsf{th}_{d,y}(m^\top d)$ depends on $y$ and $d$. 
In particular, $M_d(y)$ being a convex polytope means it has a finite number of extreme points which in turn generate $M_d(y)$ via the convex hull, cf.~\cite[Ch.~3]{Gruenbaum03}.
Remarkably, these
can even be computed analytically \cite{Lostaglio18,Alhambra19,vomEnde22}:
given $d,y\in\mathbb R^n$ with $d\geq 0$ define the
extreme point map $E_{d,y}:S_n\to\mathbb R^n$ on the permutation group $S_n$ via
\begin{equation*}
E_{d,y}(\sigma):=\Big( \mathsf{th}_{d,y}\Big( \sum_{i=1}^{\sigma^{-1}(j)}d_{\sigma(i)} \Big)-\mathsf{th}_{d,y}\Big( \sum_{i=1}^{\sigma^{-1}(j)-1}d_{\sigma(i)} \Big) \Big)_{j=1}^n\,.
\end{equation*}

For computing the corners of $M_d(y)$, in practice one has to go 
through all permutations $\sigma\in S_n$, find the value of the 
thermomajor{is}ation curve for the inputs $d_{\sigma(1)},d_{\sigma(1)}+d_{\sigma(2)},
\ldots,\sum_{i=1}^{n-1}d_{\sigma(i)}$, and finally arrange the consecutive differences into a vector ordered according to
$\sigma$.
This procedure turns out to be quite simple as is nicely illustrated by a straightforward qutrit example, cf.~\cite[Example 1]{PolytopeDegen22}.
Interestingly, said example also shows that there exists an extreme point (corresponding to the trivial permutation) which is maximal in the polytope in the sense that it major{is}es all other (extreme) points \textit{classically}.
For physical systems, this turns out to be a general property which will be the key to upper bounding 
the 
reachable sets in~Sec.~\ref{sec:toy_model}
\begin{thm}\label{thm_max_corner}
Let $y,d\in\mathbb R^n$ with $d>0$ be given. If $y\geq 0$, then
$
x\prec E_{d,y}(\sigma)
$
for all $x\in M_d(y)$
where $\sigma\in S_n$ is any permutation which orders $d$ decreasingly, that is, $d_{\sigma(1)}\geq\ldots\geq d_{\sigma(n)}$.
Moreover $d$ and $\frac{E_{d,y}(\sigma)}{d}$ are ordered likewise\footnote{This means that there exists a permutation $\pi\in S_n$ such that $d_{\pi(1)}\geq\ldots\geq d_{\pi(n)}$ and $\frac{(E_{d,y}(\sigma))_{\pi(1)}}{d_{\pi(1)}}\geq \ldots\geq \frac{(E_{d,y}(\sigma))_{\pi(n)}}{d_{\pi(n)}}$.
},
and if $y>0$ then $E_{d,y}(\sigma)>0$.
\end{thm}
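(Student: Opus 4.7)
The plan is to exploit three elementary properties of the thermomajorisation curve $\mathsf{th}_{d,y}$: it is (i) piecewise linear and concave on $[0,\mathbbm e^\top d]$, (ii) non-decreasing whenever $y\geq 0$, and (iii) strictly increasing whenever $y>0$. All three follow directly from the explicit formula for $\mathsf{th}_{d,y}$ recalled just before Prop.~\ref{prop_char_dmaj}, since the slopes of the curve are precisely the sorted ratios $y_{\pi(i)}/d_{\pi(i)}$.

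First I would analyse the components of $v := E_{d,y}(\sigma)$. Setting $D_k := \sum_{i=1}^{k} d_{\sigma(i)}$ and $D_0 := 0$, the definition of the extreme-point map yields
\begin{equation*}
v_{\sigma(k)} \;=\; \mathsf{th}_{d,y}(D_k) - \mathsf{th}_{d,y}(D_{k-1}),
\end{equation*}
so $v_{\sigma(k)}/d_{\sigma(k)}$ equals the average slope of $\mathsf{th}_{d,y}$ on $[D_{k-1},D_k]$. By concavity (i), these average slopes are non-increasing in $k$, which is exactly the ``ordered likewise'' claim: the same $\sigma$ orders $d$ and $v/d$ decreasingly. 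Since both these slopes and the widths $d_{\sigma(k)}$ decrease in $k$, their product $v_{\sigma(k)}$ does too, so $v_{\sigma(1)}\geq\ldots\geq v_{\sigma(n)}$ are already the sorted entries of $v$. The positivity claim under $y>0$ is then immediate from (iii): every slope is strictly positive, hence so is every increment.

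Next, for the classical majorisation $x\prec v$ I would verify the Hardy--Littlewood--P\'olya partial-sum inequalities directly from the polytope description
\begin{equation*}
M_d(y) = \{x\in\mathbb R^n : \mathbbm e^\top x = \mathbbm e^\top y,\ m^\top x \leq \mathsf{th}_{d,y}(m^\top d)\ \forall m\in\{0,1\}^n\}
\end{equation*}
stated before the theorem. Equality of total sums, $\mathbbm e^\top x = \mathbbm e^\top y = \mathsf{th}_{d,y}(\mathbbm e^\top d) = \mathbbm e^\top v$, is automatic. For $k<n$, let $m\in\{0,1\}^n$ be the indicator of the indices supporting the $k$ largest entries of $x$. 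The facet inequality gives $\sum_{j=1}^{k} x_{[j]} = m^\top x \leq \mathsf{th}_{d,y}(m^\top d)$. Since $\sigma$ orders $d$ decreasingly, $m^\top d \leq D_k$, and by monotonicity (ii),
\begin{equation*}
\sum_{j=1}^{k} x_{[j]} \;\leq\; \mathsf{th}_{d,y}(D_k) \;=\; \sum_{j=1}^{k}\big(\mathsf{th}_{d,y}(D_j) - \mathsf{th}_{d,y}(D_{j-1})\big) \;=\; \sum_{j=1}^{k} v_{\sigma(j)} \;=\; \sum_{j=1}^{k} v_{[j]},
\end{equation*}
which is exactly $x \prec v$.

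The hard part---and the single place where the hypothesis $y\geq 0$ is essential---is the monotonicity step (ii). Without $y\geq 0$, $\mathsf{th}_{d,y}$ can decrease, so choosing the widest bins first no longer maximises $\mathsf{th}_{d,y}(m^\top d)$ over $|S|=k$ and the key chain of inequalities breaks. Everything else is bookkeeping; the theorem ultimately reduces to combining concavity of the thermomajorisation curve with the widest-first ordering prescribed by $\sigma$.
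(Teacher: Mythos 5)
Your proposal is correct. Note that the paper does not actually reproduce a proof of Thm.~\ref{thm_max_corner}; it only points to \cite[Thm.~16 \& Rem.~3]{vomEnde22}. Your argument is a valid, self-contained derivation from exactly the ingredients the paper does state: the min-of-affine-functions formula for $\mathsf{th}_{d,y}$ (giving concavity, and monotonicity precisely when $y\geq 0$ since the slopes are the sorted ratios $y_{\pi(i)}/d_{\pi(i)}$), the half-space description of $M_d(y)$ indexed by $m\in\{0,1\}^n$, and the increment form of $E_{d,y}(\sigma)$. The three steps all check out: the ``ordered likewise'' claim is the statement that consecutive secant slopes of a concave function are non-increasing; the fact that $v_{\sigma(1)}\geq\dots\geq v_{\sigma(n)}$ are already the sorted entries of $v=E_{d,y}(\sigma)$ (product of two non-negative non-increasing sequences) is the small but essential bookkeeping point that lets you identify $\sum_{j\leq k}v_{[j]}$ with the telescoping sum $\mathsf{th}_{d,y}(D_k)$; and the chain $m^\top x\leq \mathsf{th}_{d,y}(m^\top d)\leq\mathsf{th}_{d,y}(D_k)$ uses $y\geq 0$ exactly where the paper warns (via \cite[Example~4]{vomEnde22}) that the hypothesis cannot be dropped. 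You correctly flag that monotonicity of the curve is the only place $y\geq 0$ enters. I see no gap.
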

\noindent
A proof can be found in \cite[Thm.~16 \& Rem.~3]{vomEnde22}. Note that the assumption $y\geq 0$ is necessary as a simple counterexample shows \cite[Example 4]{vomEnde22}.
Either way, these extreme point techniques break down as soon as one goes beyond quasi-classical states, regardless of whether one considers the action of (enhanced) thermal operations or of general Gibbs-preserving maps.
The latter leads to the theory of $D$-matrix major{is}ation \cite{vomEnde20Dmaj}---where $D$ plays the role of the Gibbs state of the physical system---which is a general{is}ation of classical major{is}ation.
However, requiring maps to only preserve the Gibbs 
state dismisses intrinsic thermodynamic symmetries (Prop.~\ref{prop_1}~(iv))
which is why we will focus on (enhanced) thermal operations in the sequel.
\section{Thermal Operations in Quantum Control by Example} \label{sec:toy_model}
In this section we specify Markovian control problems \eqref{eq:control-diss_evolution} 
by explicitly using Markovian thermal operations as additional control resource 
in the sense $-\hGA\in \mathsf L(\overline{\mathsf{TO}(H_0,T)})$.
Given a Hamiltonian $H_0\in{i}\mathfrak u(n)$
with increasing eigenvalues $E_k$,
the corresponding equilibrium state resulting from coupling to a bath of
temperature $T$ is $\rho_{\sf Gibbs}=\frac{e^{-H_0/T}}{\operatorname{tr}(e^{-H_0/T})} \in\pos n$
where the eigenvalues of the Gibbs state are collected in
the \textit{Gibbs vector}
\begin{equation}\label{eq:gibbs_vec}
 d:=\frac{(e^{-E_k/T})_{k=1}^n}{\sum_{k=1}^n e^{-E_k/T}}\in\Delta^{n-1}\,.
\end{equation}
Here, $\Delta^{n-1}:=\{ x\in\mathbb R_+^n\,|\, {\textstyle\sum}_{i=1}^nx_i=1\}$ is the collection of all probability vectors, called standard simplex.
 As shown in \cite{CDC19}, $\rho_{\sf Gibbs}$ can then be obtained as 
the unique fixed point of \eqref{eq:diss_evolution} when choosing the \gks-terms as
\begin{align}
V_1&=\sigma_+^d:=\sum_{k=1}^{n-1}\sqrt{k(n-k)}\cos(\theta_k)\,|k\rangle\langle k+1|\label{eq:sigma+}\\
V_2&=\sigma_-^d:=\sum_{k=1}^{n-1}\sqrt{k(n-k)}\sin(\theta_k)\,|k+1\rangle\langle k|\,,\label{eq:sigma-}
\end{align}
where $|k\rangle$ is ``the'' eigenvector of $H_0$ to the eigenvalue $E_k$ and
\begin{equation}\label{eq:thermal_angle}
\theta_k:=\arccos\Big(\big({1+\frac{d_{k+1}}{d_k}}\big)^{-\frac{1}{2}} \Big)\in\Big(0,\frac{\pi}{4}\Big].
\end{equation}
Assuming non-degenerate spectrum of $H_0$, the limiting 
cases of zero and infinite temperature can be included:
\begin{itemize}
\item Taking the limit $T\to 0^+$ yields $d=(1,0,\ldots,0)^\top$, $\theta_k\to\arccos(1)=0$ for all $k$, as well as
$\sigma_+^d\to\sigma_+=\sum_{k=1}^{n-1}\sqrt{k(n-k)}|k\rangle\langle k+1|$ and $\sigma_-^d\to 0$.
\item The limit $T\to\infty$ yields $d=\frac1n(1,\ldots,1)^\top$ so $\theta_k=\frac{\pi}{4}$, i.e.~$\cos(\theta_k)=\sin(\theta_k)=\frac{1}{\sqrt{2}}$.
\end{itemize}
Confining ourselves to $\sigma_+^d$ and $\sigma_-^d$ with their non-zero entries on the first off-diagonals
is in accordance with the common dipolar selection rules allowing for ``one-quantum transitions'' 
(as governed by Wigner's $3j$-symbol)~\cite[p.~185 ff.]{Zare88}.
This is further motivated by the fact that for spin systems these generators yield dynamics within the thermal operations:
\begin{corollary}\label{coro_ladder_ops_are_TO}
Let $\Delta E>0$, $n\in\mathbb N$, and $T>0$ be given. Defining the system's Hamiltonian $H_0:=\operatorname{diag}(0,\ldots,n-1)\cdot\Delta E$ as well as $\hGA_d:=\hGA_{\sigma_+^d}+\hGA_{\sigma_-^d}$ the generator induced by Eqs.~\eqref{eq:sigma+} \& \eqref{eq:sigma-} via Eq.~\eqref{eq:lindblad_V}, one finds the inclusion
$
( e^{ -t(i\operatorname{ad}_{H}+\hGA_d) } )_{t\geq 0}\subseteq\overline{\mathsf{TO}(H_0,T)}
$
for all $H\in{i}\mathfrak u(n)$ such that $[H,H_0]=0$.
\end{corollary}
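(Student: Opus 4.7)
The approach is to apply Thm.~\ref{thm:markov_generator} directly. Since $\exp(\mathsf L(\overline{\mathsf{TO}(H_0,T)}))\subseteq \overline{\mathsf{TO}(H_0,T)}$ holds by the very definition of the Lie wedge, it suffices to exhibit a bath Hamiltonian $H_B$ and an energy-conserving global Hamiltonian $H_{\mathsf{tot}}$ for which the construction of the theorem reproduces the generator $-i\operatorname{ad}_H-\hGA_d$. The commutation condition $[H,H_0]=0$ already matches the hypothesis, so no further work is needed on the Hamiltonian drift.

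First I would exploit the equidistant spectrum of $H_0$: all Bohr frequencies equal $\Delta E$, whence $d_{k+1}/d_k = e^{-\Delta E/T}$ is independent of $k$. Setting $c:=(1+e^{-\Delta E/T})^{-1/2}$ then gives $\cos\theta_k\equiv c$ and $\sin\theta_k\equiv c\,e^{-\Delta E/(2T)}$, so
\[
 \sigma_+^d = c\sum_{k=1}^{n-1}\sqrt{k(n-k)}\,|k\rangle\langle k+1|,\qquad \sigma_-^d = c\,e^{-\Delta E/(2T)}\sum_{k=1}^{n-1}\sqrt{k(n-k)}\,|k+1\rangle\langle k|.
\]
The single Boltzmann factor $e^{-\Delta E/(2T)}$ relating $\sigma_-^d$ to $\sigma_+^d$ is exactly the temperature weighting $e^{-E_k'/(2T)}$ that Thm.~\ref{thm:markov_generator} attaches to the $V_{jk}$, which suggests a two-level bath with a single gap $\Delta E$.

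Concretely, I would take $m=2$, $H_B:=\operatorname{diag}(0,\Delta E)$ with eigenvectors $|g_1\rangle,|g_2\rangle$, and the Hermitian interaction
\[
 H_{\mathsf{tot}} := c\sum_{k=1}^{n-1}\sqrt{k(n-k)}\bigl(|k\rangle\langle k+1|\otimes |g_2\rangle\langle g_1| + |k+1\rangle\langle k|\otimes |g_1\rangle\langle g_2|\bigr),
\]
which preserves total energy because each summand couples $|k+1\rangle|g_1\rangle \leftrightarrow |k\rangle|g_2\rangle$, both at the common total energy $k\,\Delta E$. A direct computation of the partial traces appearing in Thm.~\ref{thm:markov_generator} yields $\operatorname{tr}_{|g_1\rangle\langle g_2|}(H_{\mathsf{tot}}) = c\sum_k \sqrt{k(n-k)}\,|k\rangle\langle k+1|$, $\operatorname{tr}_{|g_2\rangle\langle g_1|}(H_{\mathsf{tot}}) = c\sum_k \sqrt{k(n-k)}\,|k+1\rangle\langle k|$, and zero on the diagonal bath blocks; multiplying by the Boltzmann weights then produces $V_{21}=\sigma_+^d$, $V_{12}=\sigma_-^d$, and $V_{11}=V_{22}=0$, so $\hGA_{B,\mathsf{tot}}=\hGA_{\sigma_+^d}+\hGA_{\sigma_-^d}=\hGA_d$ and the theorem closes the argument. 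There is no conceptual obstacle; the only non-routine step is the coefficient matching in the second paragraph, which works out cleanly precisely because the equidistant spectrum collapses the thermal angles $\theta_k$ to a single value — for non-equidistant $H_0$ the same strategy would still apply but would require a direct sum of two-level couplings, one per distinct Bohr frequency.
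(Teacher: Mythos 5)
Your proposal is correct and coincides with the paper's own proof: the same choice of two-level bath $H_B=\operatorname{diag}(0,\Delta E)$ and the same energy-conserving interaction $H_{\mathsf{tot}}$ (your $c\sqrt{k(n-k)}$ equals the paper's $\sqrt{k(n-k)/(1+e^{-\Delta E/T})}$), followed by the same application of Thm.~\ref{thm:markov_generator} and the identification $\{V_{jk}\}=\{0,0,\sigma_+^d,\sigma_-^d\}$ (your index assignment differs from the paper's only by the symmetric relabelling $j\leftrightarrow k$ in the double sum, which is immaterial). The explicit collapse of the thermal angles $\theta_k$ to a single value is left implicit in the paper but is exactly the reason the coefficient matching works.
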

\begin{proof}
We apply
Thm.~\ref{thm:markov_generator}
with $m=2$, $H_B=\operatorname{diag}(0,1)\cdot\Delta E$, and
$$
H_{\mathsf{tot}}:=\sum_{j=1}^{n-1}\sqrt{\frac{j(n-j)}{1+e^{-\Delta E/T}}}\big(|e_j\rangle\langle e_{j+1}|\otimes|e_2\rangle\langle e_1|+|e_{j+1}\rangle\langle e_j|\otimes|e_1\rangle\langle e_2|  \big)\,.
$$
We may do so because $H_{\mathsf{tot}}$ commutes with $H_0\otimes\mathbbm1+\mathbbm1\otimes H_B$ as can be seen easily.
Now, said theorem guarantees that the dynamical semigroup generated by $-{i}\operatorname{ad}_H-\sum_{j,\ell=1}^2\hGA_{V_{j\ell}}$ with
$V_{j\ell}=\sqrt{e^{-E_j '/T}}\operatorname{tr}_{|g_j\rangle\langle g_\ell|}(H_{\mathsf{tot}})$
is in $\overline{\mathsf{TO}(H_0,T)}$. But $V_{11}=V_{22}=0$ and
\begin{align*}
V_{12}=\operatorname{tr}_{|e_1\rangle\langle e_2|}(H_{\mathsf{tot}})&= \sum_{j=1}^{n-1} \sqrt{\frac{j(n-j)}{1+e^{-\Delta E/T}}}|e_{j}\rangle\langle e_{j+1}| \\
V_{21}=\sqrt{e^{-\Delta E/T}}\operatorname{tr}_{|e_2\rangle\langle e_1|}(H_{\mathsf{tot}})&= \sum_{j=1}^{n-1} \sqrt{\frac{j(n-j)e^{-\Delta E/T}}{1+e^{-\Delta E/T}}}|e_{j+1}\rangle\langle e_{j}| \,,
\end{align*}
that is, $V_{12}=\sigma_+^d$ and $V_{21}=\sigma_-^d$. This concludes the proof.
\end{proof}

\begin{remark}\label{rem_equidist_necessary}
The assumption in Cor.~\ref{coro_ladder_ops_are_TO} that $H_0$ has equidistant eigenvalues is necessary 
since otherwise
the generator $\hGA_d$
building simply on Eqs.~\eqref{eq:sigma+} and \eqref{eq:sigma-}
is no longer in 
$\mathsf{L}(\mathsf{EnTO}(H_0,T))$ for any $T>0$ (due to $[\hGA_d,\operatorname{ad}_{H_0}]\neq 0$)\footnote{
To see this, let $n\geq 3$ and let $H_0=\operatorname{diag}(E_1,\ldots,E_n)\in {i}\mathfrak u(n)$, $E_1\leq\ldots\leq E_n$ such that $E_i-E_{i+1}\neq E_j-E_{j+1}$ for some $i,j\in\{1,\ldots,n-1\}$, $i\neq j$ (i.e.~$H_0$ does not have equidistant eigenvalues).
A straightforward computation shows $\langle e_{i+1},[\hGA_d,\operatorname{ad}_{H_0}](|e_i\rangle\langle e_j|)e_{j+1}\rangle\neq 0$.
} so it in particular cannot be in $\mathsf{L}(\overline{\mathsf{TO}(H_0,T)})$ anymore.
--- In the non-equidistant case one just has to replace the simple uniform $\sigma_+^d$ of Eq.~\eqref{eq:sigma+} 
accordingly by a family $\sigma^d_{+,1},\ldots,\sigma^d_{+,l}$ ($l\geq 2$) such that the non-zero entries of $V_{+,l}$ correspond to the neighbouring levels of $H_0$ of a certain energy distance (and similarly for $\sigma_-^d$) to ensure the resulting $-\hGA$ is again in $\mathsf L(\overline{\mathsf{TO}(H_0,T)})$.
\end{remark}

Considering the standard control system \eqref{eq:control-diss_evolution} with dissipator $\hGA_d$, Cor.~\ref{coro_ladder_ops_are_TO} shows that if all coherent controls are compatible with the thermodynamic framework from Sec.~\ref{sec_thermo_markov}, i.e.~$[H_0,H_j]=0$ for all $j=1,\ldots,m$ (cf.~Lem.~\ref{lemma_edge}), then the reachable set of this control problem is automatically upper bounded by the future thermal cone defined by Eq.~\eqref{eq:def_M_H_T}.
Now the richness thermodynamic control systems have to offer comes from the interplay between thermodynamic dissipation (i.e.~$-\hGA\in\mathsf L(\overline{\mathsf{TO}(H_0,T)})$) and general unitary controls which become an asset due to \textit{not} stabil{is}ing $H_0$.
However, this overlap of different categories comes at the expense of making it more difficult to study.

Recalling from Sec.~\ref{sec_thermo_markov}~that
diagonal elements evolve separately from off-diagonal ones
under (enhanced) thermal operations,
in the next section we study
a modified version of control system \eqref{eq:control-diss_evolution}
(with $\hGA=\hGA_d$ from Cor.~\ref{coro_ladder_ops_are_TO})
focussing on \textit{diagonal states} represented by the standard simplex $\Delta^{n-1}$.

%
%
\subsection{Toy Models by Diagonal States} \label{sec_toy_model_sub1}

The idea for reducing
reachability problems of (finite-dimensional) Markovian open
quantum systems 
to hybrid control systems on the standard
simplex of $\mathbb R^n$
will 
be to 
\textit{only} include unitary controls in the model 
which do not mix the diagonal and the off-diagonal of any state $\rho(t)$---
as the same holds for all thermal operations (Cor.~\ref{cor:thermo-diagonal}). 
This means we have to restrict the coherent controls to generators of (unitary channels induced by) permutation matrices.

Recalling the bilinear control system \eqref{eq:bilin},
now we confine the discussion to $\mathbf{x}(t)$ denoting the vector of diagonal elements of $\rho(t)$ in ``the'' eigenbasis of $H_0$.
We address a scenario with coherent controls $\{B_j\}_{j=1}^m$ and a {\em bang-bang switchable} 
dissipator $B_{0}\in\mathsf L(\overline{\mathsf{TO}(H_0,T)})$ as motivated by recent experimental progress
\cite{Mart09,Mart13,Mart14,McDermott_TunDissip_2019} 
and used in~\cite{BSH16}.
%
The controls of the toy model shall amount to permutation matrices acting instantaneously on the entries
of $x(t)$ and a continuous-time one-parameter semigroup $(e^{-tB})_{t\in\mathbb R^+}$ of stochastic maps 
with a unique fixed point $d$ in $\Delta^{n-1}$. As $(e^{-tB})_{t\in\mathbb R^+}$ results from the 
restriction of the bang-bang switchable dissipator $B_{0}$, with abuse of notation we will denote its 
infinitesimal generator by $B$. The
 ``\/{\em equilibrium state}\/'' $d$ is defined in \eqref{eq:gibbs_vec} by system parameters and
the absolute temperature $T\geq 0$ of an external bath.

These stipulations suggest the following hybrid/impulsive scenario 
to define the {\em toy model} $\Lambda_B$ on $\Delta^{n-1} \subset \mathbb R^n$ by
\begin{equation}\label{eq:control-simplex_evolution}
\begin{split}
&\dot{x}(t)  = -B x(t)\,,\quad x(t_k) = \pi_k x_k\,, \quad t \in [t_k,t_{k+1})\,,\\
& x_0  \in \Delta^{n-1}\,, \quad x_{k+1} = e^{-(t_{k+1}-t_k)B}x(t_{k})\,, \quad k\geq 0\,.
\end{split}
\end{equation}
Furthermore, $0 =: t_0 \leq t_1 \leq t_2 \leq \dots$ is an arbitrary switching sequence and $\pi_k$ are arbitrary 
permutation matrices. Both the switching points and the permutation matrices are regarded as controls 
for \eqref{eq:control-simplex_evolution}. For simplicity, we assume that the switching points do not 
accumulate on finite intervals.
For more details on hybrid/impulsive control systems see,
e.g.,~\cite{book_impulsive89,Leela1991,book_HybridSytems96}.
The reachable sets of \eqref{eq:control-simplex_evolution} 
\begin{equation*}
\reach_{\Lambda_B}(x_0) := \{x(t) \,|\,
\text{$x(\cdot)$ is a solution of \eqref{eq:control-simplex_evolution}, $t \geq 0$}\}
\end{equation*}
allow for the character{is}ation
$
\mathfrak{reach}_{\Lambda_B}(x_0) =  {\mathcal S}_{\Lambda_B} x_0\,,
$
where ${\mathcal S}_{\Lambda_B} \subseteq \GL(n,\mathbb R)$ is the ($1$-norm-)contraction semigroup generated 
by $(e^{-tB})_{t\in\mathbb R_+}$ and the set of all permutation matrices
$\pi$.
\paragraph*{Recent Results.}
For the scenario just specified, the state-of-the-art \cite{CDC19} can be sketched as follows.
Take the $n$-level toy model 
where the 
infinitesimal generator results from 
coupling to a bath of temperature $T\in[0,\infty]$, i.e.~one has $B=B(\hGA_d)$ with $\hGA_d$ from Cor.~\ref{coro_ladder_ops_are_TO}.
We denote this particular toy model by $\Lambda_d:=\Lambda_{B(\hGA_d)}$.

\begin{thm}\label{thm_1}
The closure of the reachable set of any initial vector $x_0 \in \Delta^{n-1}$ under the dynamics of
$\Lambda_{e_1}$ exhausts the full standard simplex, i.e.
$$
\overline{\mathfrak{reach}_{\Lambda_{e_1}}(x_0)}=\Delta^{n-1}\,.
$$
\end{thm}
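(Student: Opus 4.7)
The plan is to first reduce to $x_0=e_1$ and then prove reachability from $e_1$ via a switched-system reformulation. For the reduction: with $d=e_1$ the semigroup $(e^{-tB})_{t\geq 0}$ has $e_1$ as its unique, globally attracting fixed point on $\Delta^{n-1}$, so $\lim_{t\to\infty}e^{-tB}=e_1\mathbbm{e}^\top$ belongs to $\overline{\mathcal{S}_{\Lambda_{e_1}}}$. For any $S\in\mathcal{S}_{\Lambda_{e_1}}$ the composition $S\cdot e_1\mathbbm{e}^\top$ is therefore also in $\overline{\mathcal{S}_{\Lambda_{e_1}}}$ and sends $x_0$ to $Se_1$, which yields $\overline{\reach_{\Lambda_{e_1}}(e_1)}\subseteq\overline{\reach_{\Lambda_{e_1}}(x_0)}$ and reduces the theorem to the case $x_0=e_1$.

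For the reduced case I would use the commutation identity $e^{-tB}\pi=\pi\,e^{-t\pi^{-1}B\pi}$. Iterating it, any admissible trajectory starting at $e_1$ rewrites as $\Pi\cdot e^{-t_k B_k}\cdots e^{-t_1 B_1}e_1$, where $\Pi\in S_n$ is a final permutation and each $B_j=\sigma_j^{-1}B\sigma_j$ (with $\sigma_j\in S_n$) is a conjugate of $B$, hence itself a GKSL generator of amplitude damping, now toward the vertex $e_{\sigma_j^{-1}(1)}$. Thus, modulo the final $\Pi$, the reachable set from $e_1$ coincides with that of the switched bilinear system on $\Delta^{n-1}$ whose admissible vector fields are the $n!$ vertex-damping generators $\{-B_\sigma\}_{\sigma\in S_n}$.

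The central step is to show that this switched system is locally small-time controllable at every interior point. At $x\in\mathrm{int}(\Delta^{n-1})$ the velocity set $V(x):=\{-B_\sigma x\,:\,\sigma\in S_n\}\subset T_x\Delta^{n-1}$ contains, for every $j=1,\dots,n$, at least one ``damping-toward-$e_j$'' vector. Writing $x=\sum_j c_j e_j$ with $c_j>0$ gives the balancing identity $\sum_j c_j(e_j-x)=0$, while a direct computation exploiting the triangular cascade structure of each $B_\sigma$ identifies $-B_{\sigma_j}x$ (for $\sigma_j^{-1}(1)=j$) with a vector whose leading nonzero component equals $(e_j-x)_j$ up to a positive scalar. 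Combined, these observations place $0$ in the relative interior of $\conv(V(x))$; for $n=3$ the six velocities at the barycenter $x=\tfrac{1}{n}\mathbbm{e}$ explicitly form a regular hexagon around the origin, and the general-$n$ case follows by a similar permutohedral arrangement together with continuity in $x$. By standard reachability results for switched bilinear systems whose convexified differential inclusion is locally small-time controllable, $\reach_{\Lambda_{e_1}}(e_1)$ is then dense in $\mathrm{int}(\Delta^{n-1})$, and taking closures yields the full simplex.

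The hard part will be the uniform verification that $0\in\mathrm{int}\,\conv(V(x))$ throughout $\mathrm{int}(\Delta^{n-1})$, in particular as $x$ approaches a boundary vertex $e_j$ where the damping-toward-$e_j$ velocity degenerates to zero and one must rely on the remaining $n-1$ vertex attractions to positively span $T_x\Delta^{n-1}$. The cleanest route I see is an induction on the number of populated components of $x$, using the upper-triangular cascade structure of the $B_\sigma$ to reduce the verification to lower-dimensional accessibility on the corresponding faces of $\Delta^{n-1}$, where the $n=2$ base case is handled by the explicit qubit computation in the Worked Example of Sec.~\ref{sec_thermo_markov}.
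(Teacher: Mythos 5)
First, a point of reference: the paper itself does not prove Theorem~\ref{thm_1} --- it is imported verbatim from~\cite{CDC19} as part of the ``state of the art'', so there is no in-paper proof to compare against. Judged on its own terms, your reduction to $x_0=e_1$ (via $e^{-tB}\to e_1\mathbbm e^{\top}$, which is correct since $0$ is a simple eigenvalue of the upper-bidiagonal $B$ with right/left null vectors $e_1$ and $\mathbbm e^{\top}$) and your rewriting of every trajectory as $\Pi\,e^{-t_kB_k}\cdots e^{-t_1B_1}e_1$ with permutation-conjugated generators are both sound, and the overall architecture --- pass to the convexified differential inclusion $\dot x\in\conv(\derv(x))$, establish local controllability throughout the interior, chain, and invoke a relaxation theorem --- is legitimate and close in spirit to the machinery the paper itself uses in Sec.~\ref{sec:Qutrit-Results} and App.~\appref{D}.

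The genuine gap is the load-bearing claim that $0\in\operatorname{relint}\conv(\derv(x))$ at \emph{every} interior point $x$, i.e.~that at $T=0$ every interior point is (strongly) stabilisable. This is precisely the kind of statement Sec.~\ref{sec:Qutrit-Results} shows to be delicate: for $T>0$ and $n=3$ the stabilisable set is a \emph{proper} subset of $\Delta^2$ bounded by conics, so your claim is a genuinely special feature of $d=e_1$ and needs proof. The justification you offer does not supply one, because it rests on a mis-identification of the vector fields: the conjugated generator is not a radial attraction $c\,(e_{\sigma^{-1}(1)}-x)$ toward a vertex but a permuted \emph{cascade},
\begin{equation*}
-B_\sigma x \;=\; \sum_{m=1}^{n-1} m(n-m)\, x_{i_{m+1}}\big(e_{i_m}-e_{i_{m+1}}\big)
\end{equation*}
for the ordering $(i_1,\dots,i_n)$ induced by $\sigma$. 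Hence the balancing identity $\sum_j x_j(e_j-x)=0$ says nothing about the actual velocity set, and the claimed identification of a ``leading nonzero component'' of $-B_{\sigma_j}x$ with $(e_j-x)_j$ is false (that component equals $m(n-m)\,x_{i_2}$, not $1-x_j$ up to scale). The regular hexagon at the barycentre for $n=3$ is correct but controls nothing near the boundary faces, where most of the $n!$ velocities collapse onto a low-dimensional cone and positive spanning must be checked by hand; your proposed induction ``on the number of populated components'' addresses accessibility on faces rather than the interior claim you actually need (interior points have \emph{all} components populated), and the $n=2$ base case cannot be delegated to the Worked Example of Sec.~\ref{sec_thermo_markov}, which concerns $T>0$ qubit thermal operations. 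Spot checks for $n=3$ suggest the claim is true, so the approach may well be salvageable, but as written the proof has a hole exactly at its central step.
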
\vspace{-3mm}

Moving from a single $n$-level system (qu\/{\em d}\/it) with $x_0\in\Delta^{n-1}$ to a tensor product 
of $m$ such $n$-level systems gives $x_0 \in \Delta^{n^m-1} \subset ({\mathbb R}^n)^{\otimes m}$. 
If the bath of temperature $T=0$ is coupled to just one (say the last) of the $m$ qu\/{\em d}\/its,
$\hGA$ is generated by $V :=\mathbbm1_{n^{m-1}}\otimes\sigma_+$ 
and one obtains the following general{is}ation.

\begin{thm}\label{thm_2}
The statement of Thm.~\ref{thm_1} holds analogously for all \mbox{$m$-qudit} states
$x_0\in\Delta^{n^m-1}$.
\end{thm}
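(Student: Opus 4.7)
The plan is to reduce the $m$-qudit reachability problem to the single-qudit Theorem~\ref{thm_1} by exploiting the full symmetric group $S_{n^m}$ of permutations acting on the $n^m$ basis states of $\Delta^{n^m-1}$, which I identify with the diagonal simplex of a single ``super-qudit'' of $n^m$ levels. Two structural observations are key. First, restricted to the diagonal, the generator induced by $V=\mathbbm{1}_{n^{m-1}}\otimes\sigma_+$ yields a stochastic generator $-B$ on $\mathbb{R}^{n^m}$ that is \emph{block-diagonal} with respect to the tensor labelling $(j_1,\ldots,j_m)$; within each ``last-qudit block'' $\{(j_1,\ldots,j_{m-1},k):k=1,\ldots,n\}$ it acts precisely as the single-qudit $T=0$ generator from Theorem~\ref{thm_1}, draining that block to its local ground $(j_1,\ldots,j_{m-1},1)$ as $t\to\infty$. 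Second, for every $\pi\in S_{n^m}$ the conjugate $\pi\,e^{-tB}\,\pi^{-1}$ lies in the contraction semigroup $\mathcal S_\Lambda$ (permute, dissipate, permute back), so one has access to block-diagonal flows over \emph{arbitrary} partitions of $\{1,\ldots,n^m\}$ into $n^{m-1}$ subsets of size $n$. In particular, for any two basis states $\alpha,\beta$ a suitable $\pi$ places $\alpha,\beta$ in a common virtual block---with the remaining $n-2$ block positions ``parked'' on unpopulated coordinates---so that the conjugated semigroup realises a clean, rate-controllable flow from $\alpha$ to $\beta$.

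With these ingredients I would run the proof in two stages. \emph{Stage 1}: show that $e_{(1,\ldots,1)}\in\overline{\reach_\Lambda(x_0)}$ for every $x_0\in\Delta^{n^m-1}$. Apply $e^{-tB}$ with $t\to\infty$ to funnel each block's mass onto its local ground; then, for every ground-state coordinate $(j_1,\ldots,j_{m-1},1)\neq(1,\ldots,1)$ still carrying mass, swap it with the auxiliary coordinate $(1,\ldots,1,2)$ via a permutation and again apply $e^{-tB}$ to absorb its mass into $(1,\ldots,1,1)$. After at most $n^{m-1}-1$ such cycles all mass sits at $e_{(1,\ldots,1)}$. \emph{Stage 2}: from $e_{(1,\ldots,1)}$, build an arbitrary target $y\in\Delta^{n^m-1}$ up to $\|\cdot\|_1$-error $\varepsilon$ by peeling off fractions $y_\alpha$ onto each vertex $\alpha$ in turn. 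At each step, designate the current mass-holding vertex $v$ and the next target $\alpha$; pick a permutation that pairs $v,\alpha$ within a virtual block whose remaining $n-2$ positions are parked outside the current support; then apply the conjugated dissipator for the unique time that transfers the desired fraction $y_\alpha/x_v$ of the mass of $v$ onto $\alpha$. Iterating over all $\alpha$ in the support of $y$ and then letting $\varepsilon\downarrow 0$ yields density of the reachable set in $\Delta^{n^m-1}$, which together with closedness of the target simplex gives $\overline{\reach_\Lambda(x_0)}=\Delta^{n^m-1}$.

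The main technical obstacle is the bookkeeping in Stage~2: at each transfer one must verify that enough coordinates lie outside the current support $\{v\}\cup\{\alpha_1,\ldots,\alpha_{k-1}\}$ to serve as the $n-2$ parking slots of the virtual block, and that the cascading $\sigma_+$-flow inside the virtual block does not disturb previously placed mass. For $n=2$ no parking is required and the bookkeeping is trivial; for $n\geq 3$ one orders the peeling so that $|\mathrm{supp}|$ stays bounded by $n^m-n+1$ until an $\varepsilon$-small tail absorbs the leftover, which is harmless because the statement is about the closure. An alternative route is to invoke Theorem~\ref{thm_1} directly in dimension $n^m$: it suffices to show that the Trotter closure of $\{\pi\,e^{-tB}\,\pi^{-1}:\pi\in S_{n^m},\,t\geq 0\}$ approximates, to arbitrary precision, the $n^m$-level ladder dissipator of Eqs.~\eqref{eq:sigma+}--\eqref{eq:sigma-}; this in turn reduces to showing that $\conv\{\pi^{-1}B\pi:\pi\in S_{n^m}\}$ contains all the single-hop generators needed to assemble $\hGA_{\sigma_+^{(n^m)}}$, which is straightforward given the freedom in $\pi$.
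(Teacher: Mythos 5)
A preliminary remark: the paper itself does not prove Thm.~\ref{thm_2} (nor Thm.~\ref{thm_1}); both are quoted as prior results from \cite{CDC19}, so there is no in-paper argument to compare against. Judged on its own merits, your overall architecture is sound: the diagonal restriction of the dissipator generated by $\mathbbm{1}_{n^{m-1}}\otimes\sigma_+$ is indeed block-diagonal, with each block a copy of the single-qudit $T=0$ cascade draining to its local ground, and conjugation by $\pi\in S_{n^m}$ lets you repartition the $n^m$ coordinates into arbitrary size-$n$ blocks. Your Stage~1 is correct: after a global drain only the $n^{m-1}$ local grounds carry mass, and these can be merged one by one into $(1,\dots,1)$ because each still-frozen local ground sits at the bottom of an otherwise empty block.

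The gap is in Stage~2, and it is a counting problem you have only half-identified. At $T=0$ the \emph{only} way a populated coordinate survives a flow $\pi e^{-tB}\pi^{-1}$ unchanged is to sit at level~$1$ of its block with \emph{all} of levels $2,\dots,n$ of that block unpopulated: any populated coordinate at level $\geq 2$ decays, and a level-$1$ coordinate gains mass as soon as anything above it in its block is populated. Hence each already-placed target $\alpha_1,\dots,\alpha_{k-1}$ needs a \emph{private} block ($n-1$ empty slots each), on top of the active block holding $v$ and $\alpha_k$. Since there are only $n^{m-1}$ blocks, the one-vertex-at-a-time peeling stalls after roughly $n^{m-1}$ placements, whereas $\operatorname{supp}(y)$ may contain all $n^m$ vertices. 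Your bound $n^m-n+1$ budgets only the $n-2$ parking slots of the active block and ignores the private blocks of the frozen vertices, and the leftover mass at the stalling point is of order one, not an $\varepsilon$-tail. The repair is to make the reduction to Thm.~\ref{thm_1} genuine rather than nominal: first distribute the mass among the $n^{m-1}$ local grounds according to the marginal of $y$ over the first $m-1$ qudits (here the count $k\leq n^{m-1}$ is met exactly), then treat each block in turn using only permutations that act within that block and fix all other coordinates; under such permutations the other blocks, populated only at their local grounds, are inert, and the within-block problem is verbatim the single-qudit toy model $\Lambda_{e_1}$, to which Thm.~\ref{thm_1} applies. Your ``alternative route'' also fails as stated: a conic combination of the conjugates $\pi^{-1}B\pi$ cannot reproduce the single $n^m$-level ladder, since avoiding off-ladder hops forces every block to be a contiguous length-$n$ segment of the ladder, the partition into such segments is unique, and the $n^{m-1}-1$ edges joining consecutive segments can therefore never be generated.
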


In a first step to general{is}e the findings from the extreme case $T=0$ to $T\in(0,\infty]$ we found that general statements about the reachable set of $\Lambda_d$ can only be made under the following assumption
\vspace{-2mm}
\begin{center}
\textbf{Assumption A}: $H_0$ has equidistant energy eigenvalues.
\end{center}
\vspace{-2mm}
In this case we obtained:
\begin{thm}\label{thm_3}
Assuming \textbf{A},
the reachable set of the thermal state $d$ under the dynamics of
the toy model $\Lambda_d$ satisfies 
$
\mathfrak{reach}_{\Lambda_d}(d)\subseteq \lbrace x\in\Delta^{n-1}\,|\, x\prec d\rbrace\,,
$
where \/`$\prec$\/' refers to 
classical major{is}ation.
In particular, this is the smallest convex upper bound for the reachable set one can find.
\end{thm}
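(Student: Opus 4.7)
The plan is to establish two claims:
(a) the inclusion $\mathfrak{reach}_{\Lambda_d}(d) \subseteq M(d) := \{x \in \Delta^{n-1} : x \prec d\}$, and
(b) that $M(d)$ is the smallest convex set containing the reachable set.
Claim (b) follows easily from (a): applying a single permutation $\pi \in S_n$ to the fixed point $d$ of $e^{-tB}$ yields $\pi d \in \mathfrak{reach}_{\Lambda_d}(d)$, so any convex superset of $\mathfrak{reach}_{\Lambda_d}(d)$ must contain $\operatorname{conv}\{\pi d : \pi \in S_n\}$, which equals $M(d)$ by the Hardy--Littlewood--P\'olya characterisation of classical majorisation.

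For (a), I would reduce to verifying that $M(d)$ is invariant under each of the two generator types of the reachability semigroup $\mathcal S_{\Lambda_d}$: arbitrary permutations (trivially, as classical majorisation is permutation-symmetric) and the one-parameter semigroup $(e^{-tB})_{t \geq 0}$. Since $d \in M(d)$, joint invariance would give $\mathcal S_{\Lambda_d}(d) \subseteq M(d)$. By linearity and the extreme-point description $M(d) = \operatorname{conv}\{\pi d : \pi \in S_n\}$, the $e^{-tB}$-invariance reduces further to showing $e^{-tB}(\pi d) \prec d$ for every $\pi \in S_n$ and $t \geq 0$.

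To attack this, I would exploit that Assumption A makes $d$ geometric ($d_{k+1}/d_k = q := e^{-\Delta E/T}$), so the angles $\theta_k$ from Eq.~\eqref{eq:thermal_angle} become $k$-independent with $\cos^2\theta_k = 1/(1+q)$ and $\sin^2\theta_k = q/(1+q)$. The diagonal action of $B$ thereby becomes a tridiagonal birth--death generator $Q := -B$ with detailed balance w.r.t.\ $d$. The polyhedron $M(d)$ is cut out by the linear inequalities $\sum_{i \in S} y_i \leq \sum_{i=1}^{|S|} d_i^\downarrow$ over non-empty $S \subsetneq \{1,\ldots,n\}$, so by Nagumo's subtangent theorem the invariance $e^{tQ}(M(d)) \subseteq M(d)$ is equivalent to the pointwise condition $\sum_{i \in S}(Qy)_i \leq 0$ at every boundary point $y \in \partial M(d)$ where the constraint indexed by $S$ is active. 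The birth--death form of $Q$ collapses this sum into boundary-flow contributions of the shape $\pm \frac{a_\ell}{1+q}(y_{\ell+1} - q y_\ell)$ over adjacent pairs $(\ell,\ell+1)$ straddling $S$, with signs determined by the direction in which the interface is crossed.

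The main obstacle will be verifying these sign conditions uniformly in $n$, in the size $k = |S|$, and in the combinatorial type of the active set $S$ (in general $S$ need not be a contiguous ``upset'' $\{s,s+1,\ldots,n\}$). For $y$ on such a face, $y \in M(d)$ imposes additional constraints on the entries outside $S$ (notably $\max_{i \notin S} y_i \leq d_{|S|+1}^\downarrow$), and the correct sign of the boundary flow rests on combining these with the geometric ratio $q$, precisely as the small cases $k=1$ and $k=n-1$ already exhibit in dimension three. As Rem.~\ref{rem_equidist_necessary} shows, dropping the equidistance assumption breaks even the basic membership $-\hGA_d \in \mathsf L(\overline{\mathsf{TO}(H_0,T)})$, so Assumption A is indispensable rather than merely convenient.
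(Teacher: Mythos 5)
Your overall architecture is sound and in fact mirrors the strategy the paper attributes to \cite{CDC19} and reuses for Thm.~\ref{thm:general_dmaj_bound}: reduce everything to forward invariance of the classical majorisation polytope $M_{\mathbbm e}(d)=\conv\{\pi d:\pi\in S_n\}$ under the two generator types, and verify invariance under the flow by a subtangency (Nagumo-type) condition on the boundary. (The paper phrases this as ``the vector field points inside the polytope at each of its $n!$ extreme points $\pi(d)$''; for a \emph{linear} field this vertex check and your facet check are equivalent, since each facet functional composed with $Q=-B$ is affine on the face where it is active.) Your part (b) is complete and correct: the permutations $\pi d$ are reachable at time zero, so by Hardy--Littlewood--P\'olya any convex superset of the reachable set contains $\conv\{\pi d\}=M_{\mathbbm e}(d)$, and with (a) minimality follows.

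The genuine gap is that part (a) is a plan, not a proof. The entire content of the theorem sits in the inequality $\sum_{i\in S}(Qy)_i\le 0$ at every $y\in M_{\mathbbm e}(d)$ where the constraint $\sum_{i\in S}y_i\le\sum_{i=1}^{|S|}d_i^{\downarrow}$ is active, and you explicitly defer this (``the main obstacle will be verifying these sign conditions uniformly in $n$, in $|S|$, and in the combinatorial type of $S$''). Two low-dimensional instances do not establish it: for non-contiguous $S$ the interface terms $\pm\frac{a_\ell}{1+q}(y_{\ell+1}-qy_\ell)$ carry both signs and the level-dependent rates $a_\ell\propto \ell(n-\ell)$ do not cancel, so one must show that the \emph{other} active and inactive constraints of $M_{\mathbbm e}(d)$ at such a boundary point force the total flux across $\partial S$ to be non-positive --- this is precisely the combinatorial estimate that constitutes the proof in \cite{CDC19}, and it is absent here. (A cleaner route to the same end, closer to the paper's wording, is to verify $Q\,\pi d\in T_{M_{\mathbbm e}(d)}(\pi d)$ at each vertex, i.e.\ to expand $Q\,\pi d$ in the edge directions $\pi'd-\pi d$ towards adjacent vertices with non-negative coefficients; this is a finite, explicit computation using $Qd=0$ and the detailed-balance structure, but it too must be carried out for general $n$ and general $\pi$.) Until one of these verifications is actually performed, the inclusion $\mathfrak{reach}_{\Lambda_d}(d)\subseteq\{x:x\prec d\}$ has not been proved. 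Your closing remark that Assumption~\textbf{A} is indispensable is consistent with the paper (cf.\ Rem.~\ref{rem_equidist_necessary} and the cited counterexample), but it does not substitute for the missing estimate.
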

\noindent There are counterexamples to Thm.~\ref{thm_3}
as soon as $H_0$ no longer has equidistant eigenvalues
\cite[Example 3]{CDC19}.
This is because assumption {\bf A} is necessary for the dynamics of $\Lambda_d$ to be thermal operations 
(see the generalising Rem.~\ref{rem_equidist_necessary}).

The recent toy-model results of \cite{CDC19} thus extend the diagonal part of the 
qu\/{\em b}\/it picture (previously analysed in \cite{BSH16}) to $n$-level systems, and 
even more generally to systems of $m$ qu\/{\em d}\/its. --- Next we explore further general{is}ations
to non-zero temperatures, e.g., by allowing for general initial states $x_0$ instead of 
the thermal state $d$ in Thm.~\ref{thm_3}.
%
\paragraph*{Generalisations.}
To general{is}e previous reachability characterisations 
we use deeper results on $d$-major{is}ation (Sec.~\ref{sec:d_maj}).
For the toy-model dynamics one gets:
\begin{itemize}\vspace{-2mm}
\item[(1)] $e^{-tB}x_0 \in M_d (x_0)$ for all $t\geq 0$;
\item[(2)] $M_d (x_0)$ is a convex subset within the simplex $\Delta^{n-1}$,
\vspace{-2mm}%
\end{itemize}
which means the \/{\em dissipative time evolution}\/ of any $x_0$ remains within the convex set of states 
$d$-majorised by $x_0$. 
Beyond pure dissipative evolution the toy model also allows for permutations
$\pi$, so one naturally obtains
$
\reach_{\Lambda_d}(x_0)=\reach_{\Lambda_d}(\pi(x_0))$
for all $\pi\in S_n$.
Clearly, the simplex region $M_d (x_0)$ intertwines overall permutations $\pi$
(in the symmetric group $S_n$)
in the sense
$\pi\,M_d (x_0)=M_{\pi(d)} (\pi(x_0))$.
For the maximally mixed state ($d\simeq\mathbbm e$) it boils down to permutation invariance
under classical majorisation
$
\pi\,M_{\mathbbm e} (x_0)=M_{\mathbbm e} (\pi(x_0))=M_{\mathbbm e} (x_0)
$.
This
immediately entails a first generalisation: \vspace{-2mm}
\begin{corollary}[general{is}ing Thm.~\ref{thm_3}]\label{coro:thm_3g}
Assuming \textbf{A} those 
initial states $ x_{0}$ classically majorised by $d$ (i.e.~$ x_{0}\in M_{\mathbbm e} (d)$)
remain within $M_{\mathbbm e} (d)$ under the dynamics of the toy model $\Lambda_d$.
In other words
$
\overline{\mathfrak{reach}_{\Lambda_d}( x_{0})}\subseteq M_{\mathbbm e} (d).
$
\end{corollary}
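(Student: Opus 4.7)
The plan is to reduce the statement to Theorem~\ref{thm_3} by exploiting linearity of the dynamics together with convexity and closedness of the classical majorisation polytope $M_{\mathbbm e}(d)$. By Birkhoff's theorem, $M_{\mathbbm e}(d)=\mathrm{conv}\{\pi d : \pi\in S_n\}$, so any $x_0\in M_{\mathbbm e}(d)$ admits a convex decomposition
\begin{equation*}
x_0 = \sum_{\pi\in S_n} \lambda_\pi\,\pi d, \qquad \lambda_\pi\geq 0,\ \sum_\pi \lambda_\pi = 1.
\end{equation*}
Since $\reach_{\Lambda_d}(x_0)=\mathcal S_{\Lambda_d} x_0$ with $\mathcal S_{\Lambda_d}$ the semigroup generated by $\{e^{-tB}\}_{t\geq 0}$ and the permutation matrices, every element of $\reach_{\Lambda_d}(x_0)$ has the form $\Phi x_0$ for some linear $\Phi\in\mathcal S_{\Lambda_d}$.

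Using linearity I would then write $\Phi x_0 = \sum_\pi \lambda_\pi\,\Phi(\pi d)$ and examine each summand separately. The intertwining relation $\reach_{\Lambda_d}(\pi d)=\reach_{\Lambda_d}(d)$ noted before the statement (arising because the permutation $\pi$ is itself an admissible, instantaneous control) yields $\Phi(\pi d)\in\reach_{\Lambda_d}(d)$ for every $\pi\in S_n$. By Theorem~\ref{thm_3}, $\reach_{\Lambda_d}(d)\subseteq M_{\mathbbm e}(d)$, hence each $\Phi(\pi d)\in M_{\mathbbm e}(d)$. Convexity of $M_{\mathbbm e}(d)$ then gives $\Phi x_0\in M_{\mathbbm e}(d)$, so $\reach_{\Lambda_d}(x_0)\subseteq M_{\mathbbm e}(d)$. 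Taking closures and using that $M_{\mathbbm e}(d)$ is compact (being a polytope in the standard simplex) yields $\overline{\reach_{\Lambda_d}(x_0)}\subseteq M_{\mathbbm e}(d)$, as claimed.

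The proof is essentially a ``linear extension by convex combinations'' argument, so the routine calculations are minimal; the only subtlety is ensuring that the intertwining property holds for arbitrary elements of the semigroup (not just for single permutations), which is immediate once one writes $\Phi$ as a finite product of permutations and dissipative one-parameter semigroup elements and uses that each permutation factor can be freely absorbed into the initial state without leaving the orbit. Assumption \textbf{A} enters only through its use in Theorem~\ref{thm_3}; no further thermodynamic input is needed here. The main (minor) obstacle I anticipate is being pedantic about the distinction between $\reach_{\Lambda_d}$ and its closure: one must verify that the inclusion $\reach_{\Lambda_d}(x_0)\subseteq M_{\mathbbm e}(d)$ is preserved under closure, which is automatic since $M_{\mathbbm e}(d)$ is closed.
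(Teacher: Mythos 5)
Your argument is correct and is essentially the paper's own route: the paper derives this corollary ``immediately'' from Thm.~\ref{thm_3} together with the permutation invariance $\pi M_{\mathbbm e}(d)=M_{\mathbbm e}(d)$, and your Birkhoff decomposition $x_0=\sum_\pi\lambda_\pi\,\pi d$ combined with linearity of $\Phi\in\mathcal S_{\Lambda_d}$, the semigroup property $\Phi\pi\in\mathcal S_{\Lambda_d}$, and convexity and closedness of $M_{\mathbbm e}(d)$ is precisely the natural way to make that ``immediately'' rigorous. No gaps.
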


%
%
\noindent%
In turn, this is but a special case of the following generalisation to arbitrary initial states building
on some deeper results on $d$-major{is}ation (Sec.~\ref{sec:d_maj}):

\begin{thm}\label{thm:general_dmaj_bound}
Invoke assumption \textbf{A}. For the toy model $\Lambda_d$ with
Gibbs state $d$
corresponding to coupling to a bath of temperature $T\in(0,\infty]$, 
the reachable set of any $x_0\in\Delta^{n-1}$ is included
in the following convex hull:\vspace{-2mm}
\begin{equation}\label{eq:upper_bound_arbitrary}
\overline{\reach_{\Lambda_d}(x_0)} \subseteq \conv\big\{\pi(z)\,|\,\pi\in S_n\big\} = M_{\mathbbm e} (z)\,.
\end{equation}\vspace{-2mm}%
Here, $z$ is any element from the ``ordered past cone''\footnote{
By definition \cite[Def.~3]{Oliveira22} the (unordered) past cone of a vector $x_0$ is the set of all states \textit{starting from which} one can
generate $x_0$ via doubly-stochastic matrices.
}
\begin{equation}\label{eq:outwards_maj}
\big\{z\in\Delta^{n-1}\,:\,x_0\prec z\ \wedge\ d\text{ and }\tfrac{z}{d}\text{ are ordered likewise}\big\}\vspace{-2mm}
\end{equation}
which, most importantly, contains an element $z>0$ whenever $x_0>0$.
\end{thm}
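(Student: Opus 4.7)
My plan is to invoke Theorem \ref{thm_max_corner} to secure non-emptiness (and strict positivity) of the ordered past cone, and then to verify that the classical majorisation polytope $M_{\mathbbm e}(z) = \conv\{\pi z : \pi \in S_n\}$ is forward-invariant under the dynamics of $\Lambda_d$ for any $z$ in the cone. Since $x_0 \prec z$ places $x_0 \in M_{\mathbbm e}(z)$, invariance immediately yields the claimed inclusion.

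For the first step, pick $\sigma \in S_n$ ordering $d$ decreasingly and set $z := E_{d,x_0}(\sigma)$. Because $x_0 \in M_d(x_0)$, Theorem \ref{thm_max_corner} delivers $x_0 \prec z$ classically along with $d$ and $z/d$ being ordered likewise, and $z > 0$ whenever $x_0 > 0$. This simultaneously proves non-emptiness of the ordered past cone and the positivity clause.

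For the second step, I would establish forward-invariance of $M_{\mathbbm e}(z)$ separately under each generator of $\Lambda_d$. Permutations $\pi \in S_n$ preserve classical majorisation, so they map $M_{\mathbbm e}(z)$ into itself trivially. For the thermal one-parameter semigroup $(e^{-tB})_{t \geq 0}$ I would use the Nagumo tangent-cone criterion at every boundary point $y \in \partial M_{\mathbbm e}(z)$ where some constraint $\sum_{i \in S} y_i = \sum_{j \leq |S|} z^\downarrow_j$ is saturated (necessarily with $S$ a top-$|S|$ index set of $y$); one needs $\sum_{i \in S}(-By)_i \leq 0$. When $y$ is sorted, $y_1 \geq \ldots \geq y_n$, and $S = \{1,\ldots,k\}$, the detailed balance built into Cor.~\ref{coro_ladder_ops_are_TO} reduces the flux across the single interface edge $(k, k+1)$ to a positive multiple of $y_{k+1}/d_{k+1} - y_k/d_k$. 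Saturation at level $k$ combined with non-violation at levels $k-1$ and $k+1$ forces $y_k \geq z^\downarrow_k$ and $y_{k+1} \leq z^\downarrow_{k+1}$; invoking the ordering hypothesis $z^\downarrow_k/d_k \geq z^\downarrow_{k+1}/d_{k+1}$ then chains to
\[
\frac{y_k}{d_k} \;\geq\; \frac{z^\downarrow_k}{d_k} \;\geq\; \frac{z^\downarrow_{k+1}}{d_{k+1}} \;\geq\; \frac{y_{k+1}}{d_{k+1}},
\]
so the flux has the required sign.

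The main obstacle I anticipate is the non-sorted boundary case, where $S$ need not be an initial segment of the chain $\{1, \ldots, n\}$ and the flux decomposes over every edge joining $S$ with its complement. Each interface edge contributes a signed term proportional to $y_{k'}/d_{k'} - y_{k'+1}/d_{k'+1}$, and these must be assembled edge by edge using the $M_{\mathbbm e}(z)$-constraints on $y$ together with the ordering of $z/d$ relative to $d$. The hypothesis "$d$ and $z/d$ are ordered likewise" remains the decisive ingredient throughout: without it, explicit examples exhibit $d$-stochastic maps in $\overline{\mathsf{TO}(H_0,T)}$ that send elements of $M_{\mathbbm e}(z)$ outside this polytope, so the tangent-cone computation would fail.
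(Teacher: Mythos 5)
Your overall strategy is the same as the paper's: establish existence and positivity of $z$ via Thm.~\ref{thm_max_corner} applied to $E_{d,x_0}(\sigma)$ (this part is exactly the paper's argument, which defers the details to \cite[Thm.~5.1.15]{vE_PhD_2020}), and then prove forward invariance of $M_{\mathbbm e}(z)$ under permutations (trivial, since the polytope is permutation-symmetric) and under the thermal flow via a subtangentiality condition on the boundary. One structural remark: you check the Nagumo condition at \emph{every} boundary point, whereas the paper checks that the vector field points inside only at the $n!$ extreme points $\pi(z)$. Since $-B$ is linear and each facet functional $y\mapsto\sum_{i\in S}y_i$ is linear, the condition $\sum_{i\in S}(-By)_i\leq 0$ on a face is a convex combination of the same condition at the extreme points of that face; so nothing is lost by restricting to extreme points, and you gain the decisive advantage that there the coordinates are known explicitly (they are the entries of $z$ in some order). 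Your proposal would be cleaner and closer to provable if you made this reduction at the outset.

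The genuine gap is precisely the step you flag yourself: the case where the active index set $S$ is not an initial segment of the chain $1\text{--}2\text{--}\cdots\text{--}n$. There the flux $\sum_{i\in S}(-By)_i$ decomposes over several interface edges, and an edge-by-edge sign argument cannot close the proof: for an edge $(j,j+1)$ with $j\in S$, $j+1\notin S$ one only knows $y_j\geq y_{j+1}$ and $d_j\geq d_{j+1}$, which does \emph{not} determine the sign of $y_{j+1}/d_{j+1}-y_j/d_j$, so individual interface edges can contribute with the wrong sign and only the \emph{sum} over all interface edges is controlled. Closing this requires summing the contributions and exploiting the explicit values of the extreme points $\pi(z)$ (equivalently, of $E_{d,y}$) together with the ordering hypothesis on $z/d$ --- this is exactly the computation carried out in \cite[Thm.~5.1.15]{vE_PhD_2020} and not reproduced in the paper. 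As it stands, your proposal proves the sorted/initial-segment case correctly and identifies, but does not supply, the argument at the technical heart of the theorem.
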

%
\noindent The idea of the proof of Eq.~\eqref{eq:upper_bound_arbitrary}
is to show that the
vector field driving the
dynamics of $\Lambda_d$ points \/{\em inside}\/ the classical majorisation polytope $M_{\mathbbm e} (z)$ at
each of its $n\/!$ extreme points $\pi(z)$ with $\pi\in S_n$, see also Fig.~\ref{fig:cdc-mtns}.
Finally, if $x_0>0$ then the existence of a vector $z>0$ in Eq.~\eqref{eq:outwards_maj} is due to Thm.~\ref{thm_max_corner}
(as detailed in the first author's PhD thesis \cite[Thm.~5.1.15]{vE_PhD_2020}).

We emphas{is}e that---while Thm.~\ref{thm_3} becomes trivial in the limit 
$T\to\infty$---Thm.~\ref{thm:general_dmaj_bound} reproduces the known result that in the high-temperature limit the reachable set for unital dynamics is upper bounded by all states classically major{is}ed by the initial state.
%

\begin{figure}[!ht]
\vspace{-21mm}
\mbox{\hspace{-6mm}\raisebox{2mm}{\includegraphics[width=.50\linewidth]{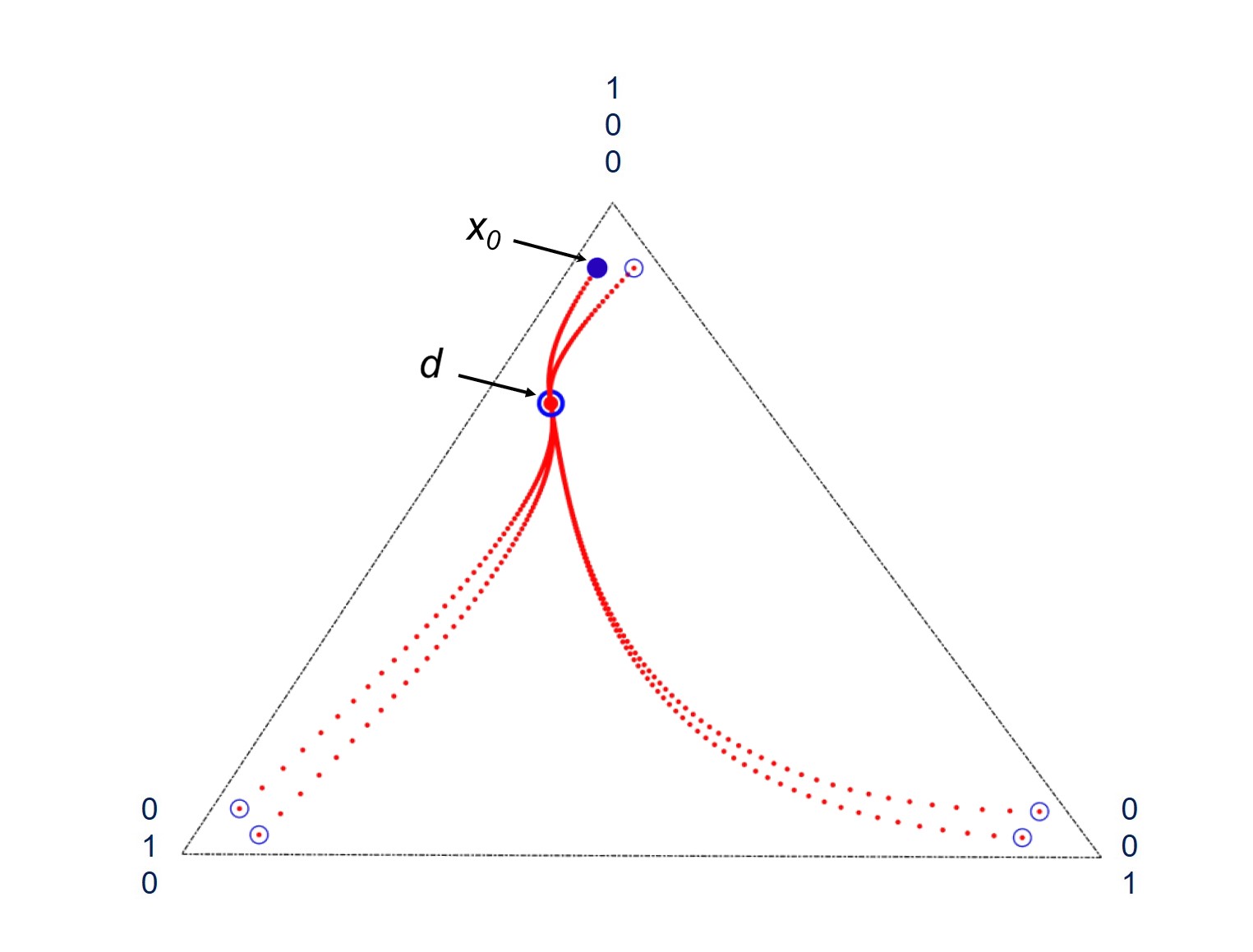}}}
\mbox{\hspace{-.5mm}\raisebox{3.4mm}{\includegraphics[width=.54\linewidth]{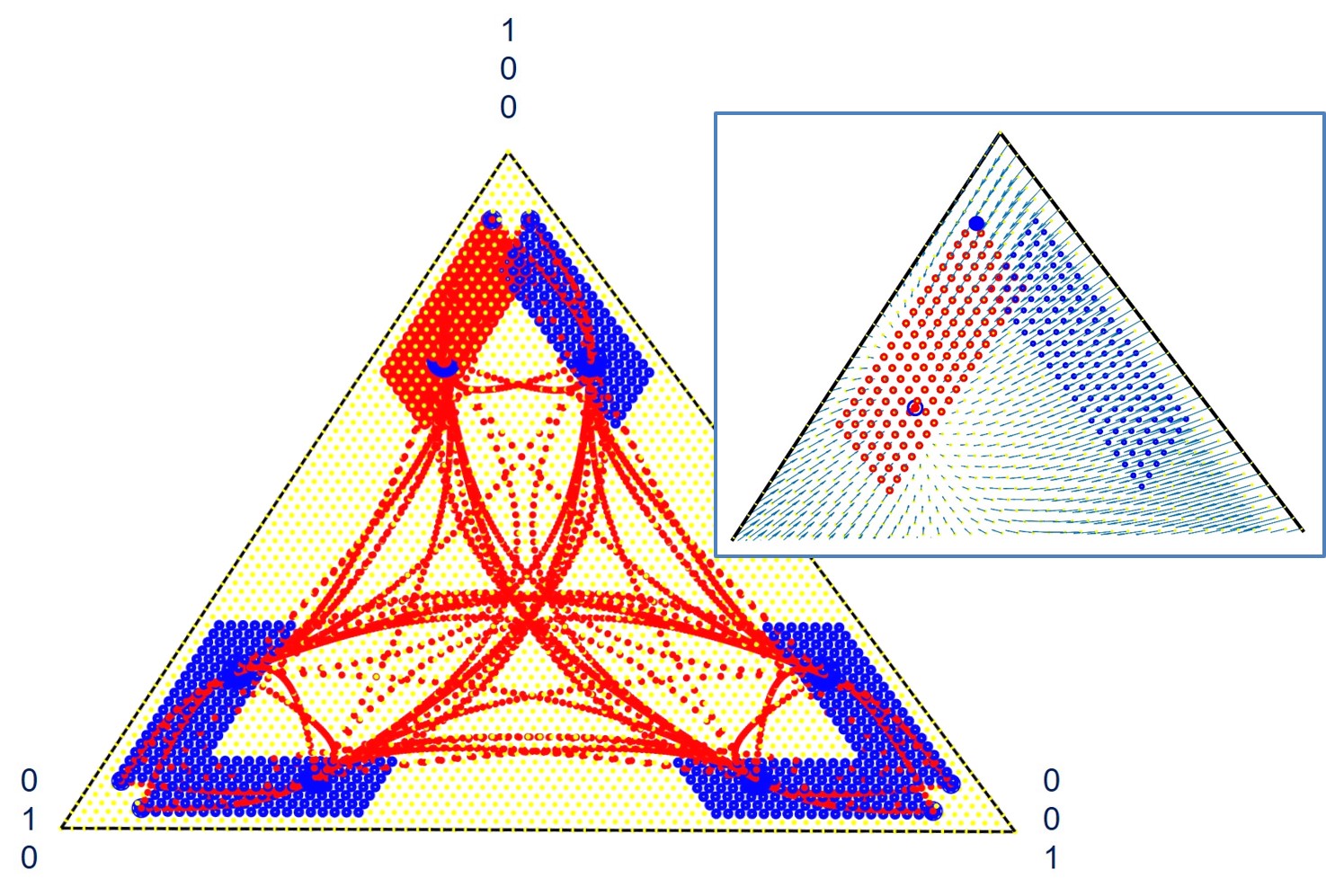}}}\\[-3mm]
\mbox{\hspace{-6mm}\raisebox{4mm}{\includegraphics[width=.55\linewidth]{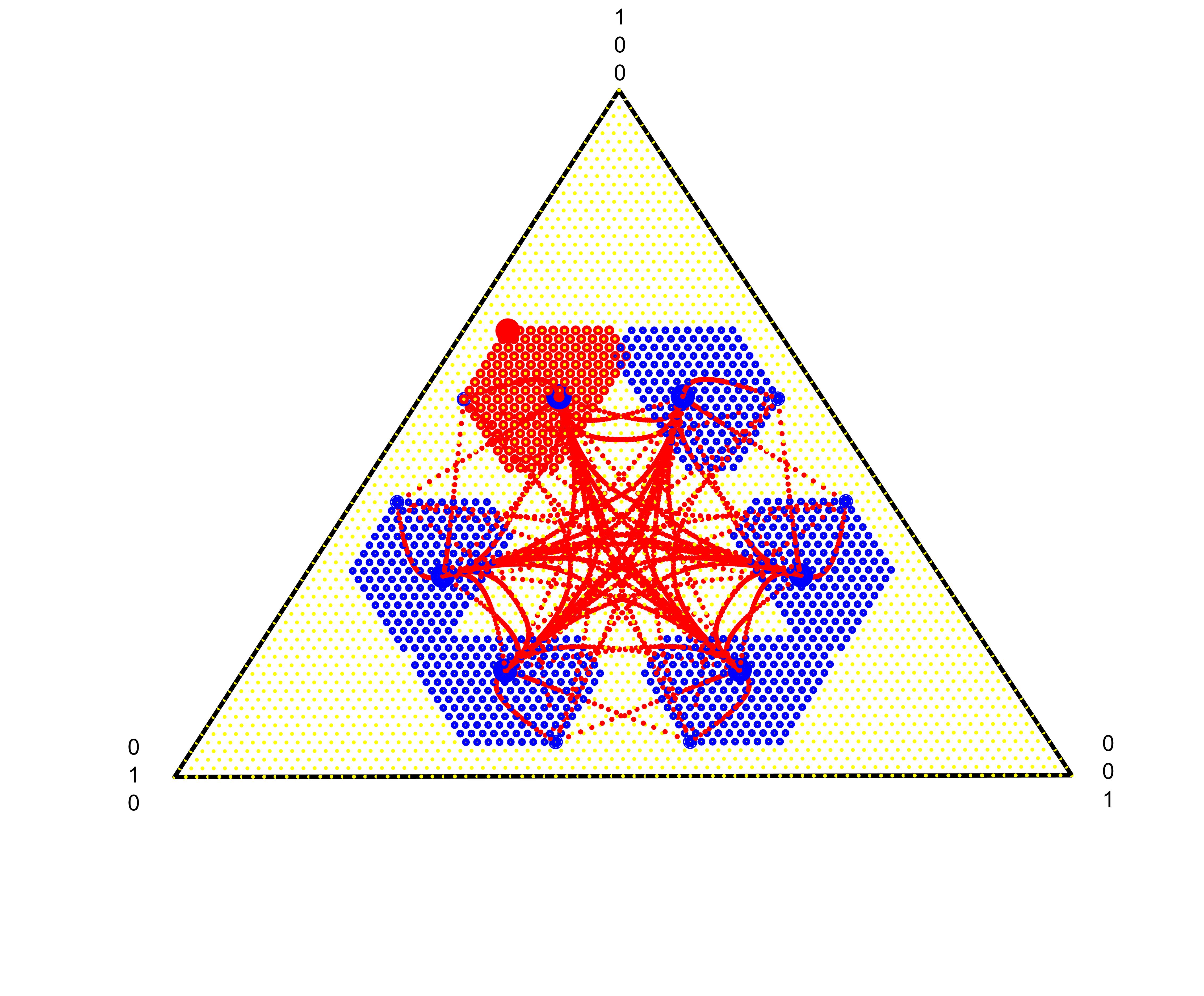}\hspace{-5mm}\raisebox{5.2mm}{\includegraphics[width=.55\linewidth]{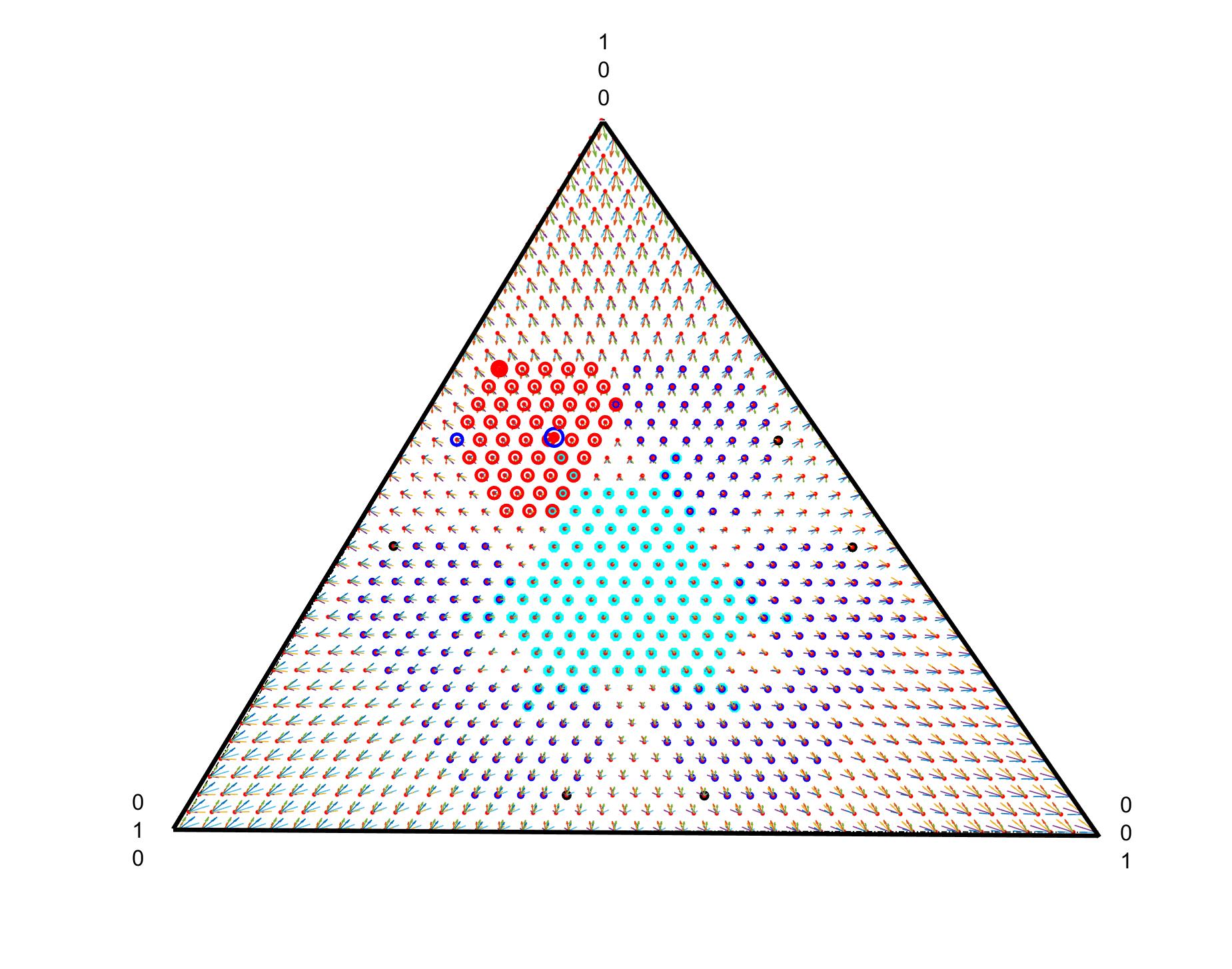}}}}\\[-15mm]
%
\caption{\label{fig:cdc-mtns}%
{\small{(Colour online). %
Upper left: Evolutions of initial $x_0=(0.9, 0.07, 0.03)^\top$ and permutations $\pi(x_0)$
under $\hGA_d$ with $V_1,V_2$, $\theta=\tfrac{\pi}{6}$ 
of Eqs.~\eqref{eq:sigma+}-\eqref{eq:thermal_angle} drive to fixed point $d$\/; 
upper right panel includes all permutations of trajectories starting with
permutations  of $d$, i.e.~$x_0=\pi(d)$; the red region shows states $d$-majorised by $x_0$, 
blue regions are their permutations; the convex hull over red and blue regions contains 
entire reachable set $\reach_{\Lambda_d}(x_0)$; 
inset gives the vector field to the dissipative part of the dynamics. 
\newline
Lower left: For $\theta=\tfrac{\pi}{5}$ in \eqref{eq:thermal_angle}, as generically, 
the extreme point 
$z=\left(0.65, 0.30, 0.05\right)^\top$ in red
differs from $x_0=\left(0.55, 0.40, 0.05\right)^\top$ 
as well as from $d=\left(0.55, 0.29, 0.16\right)^\top$. The lower right shows the vector fields
under the full dynamics $\Lambda_d$ of dissipation and permutation control with stabilisable points 
(cp.~Sec.~\ref{sec:Qutrit-Results}, Fig.~\ref{fig:reach-stab}) in turkish blue.
}}
}
\end{figure}

\begin{remark}\label{rem_tightest_bound}
One can show that 
the set \eqref{eq:outwards_maj} 
of possible extreme points from Thm.~\ref{thm:general_dmaj_bound} used for an upper bound
forms a convex polytope.
Thus by means of convex optim{is}ation one can find an ``optimal'' major{is}ation bound
in the sense that $z$ is closest to the fixed point of the dynamics, i.e.\footnote{
Of course the $1$-norm can be replaced by any other function $f:\Delta^{n-1}\to\mathbb R_+$ of interest.
}~$\|z-d\|_1=\min_{y\in\eqref{eq:outwards_maj}}\|y-d\|_1$.
While this is unambiguous if $x_0\prec d$ or if $x_0$ is in its own ordered past cone\footnote{\label{footnote_optimal_bound}
If $x_0\prec d$, then the ``optimal'' $z$ is $d$, in the sense that $\operatorname{conv}(\overline{\mathfrak{reach}_{\Lambda_d}( x_{0})})=M_{\mathbbm e}(d)$:
on the one hand $\operatorname{conv}(\overline{\mathfrak{reach}_{\Lambda_d}( x_{0})})\subseteq \operatorname{conv}(M_{\mathbbm e} (d))\subseteq M_{\mathbbm e} (d)$ by 
Thm.~\ref{thm:general_dmaj_bound} because $d$ is in the ordered past cone of $x_0$; also
$
M_{\mathbbm e} (d)=\conv(\{\pi d:\pi\in S_n\})\subseteq \operatorname{conv}(\overline{\mathfrak{reach}_{\Lambda_d}( x_{0})})
$.
Similarly one sees that if $x_0$ is in its own ordered past cone, i.e.~$d$ and $\frac{x_0}{d}$ are ordered likewise, then the ``optimal'' $z$ (as def.~above) is $x_0$ itself.
}, note that for general $x_0\in\Delta^{n-1}$ the convex hull of $\overline{\reach_{\Lambda_d}(x_0)}$ need not be a major{is}ation polytope anymore.
\end{remark}


Fig.~\ref{fig:cdc-mtns} illustrates these general findings for the special case 
of three-level systems (again assuming the drift term $H_0$ has equidistant eigenvalues).


\subsection{Explicit Results and Examples for Qutrits}\label{sec:Qutrit-Results}

So far, we have given upper bounds for reachable sets of the toy model.
In this section we will explicitly determine the shape of the reachable set and of the set of stabil{is}able states for the three-dimensional case $d\in\R^3$, $d>0$. 
For this we first introduce some general notions.

It pays off to approach the toy model~\eqref{eq:control-simplex_evolution} from a different, 
but equivalent\footnote{
The systems are equivalent in the sense that every solution of one system has a corresponding solution in the other system differing only by some (time-dependent) permutation. Note however that we allow more general controls in the differential inclusion, so that this equivalence is only approximate in general.},
perspective: instead of letting the permutations act on the states, leading to discontinuous paths, 
we let the permutations act on the drift vector field, leading to the following differential inclusion\footnote{
By abuse of notation, $\pi\in S_n$ also denotes the induced permutation matrix.},
where---in analogy to $\reach_B(x)$---we write $\derv(x)$ for the \emph{set of achievable derivatives at $x$}:
\begin{equation}\label{eq:diff-incl-toy-model}
\dot x(t)\in\conv(\derv(x(t))), \quad \derv(x):=\{-\pi B\pi^{-1}x:\pi\in S_n\},
\end{equation}
cf.~\cite{Smirnov02}\footnote{
In particular Thm.~2.3 therein shows the equivalence of control systems and the corresponding differential inclusions.
Note that taking the convex hull leads to a relaxation of the differential inclusion, which is still approximately equivalent to the original control system, see~\cite[Ch.~2.4, Thm.~2]{Aubin84}.}
for an introduction to this topic.
Many ideas work for any matrix $-B$ which generates a one-parameter semigroup of stochastic matrices and has unique fixed point $d$, but for some results we will restrict $B$ to the case where the generator is of the form given in Eqs.~\eqref{eq:sigma+} \& \eqref{eq:sigma-} and the corresponding Hamiltonian has equidistant energies. This ensures that we obtain sensible formulas, and it is physically motivated, see Rem.~\ref{rem_equidist_necessary}. As above we call this Assumption~\textbf{A}.

\paragraph*{Stabil{is}able States.} 
The set of \emph{stabil{is}able states} $\mathfrak{stab}_B$ is defined to be all $x\in\Delta^{n-1}$ such that $0\in\mathrm{conv}(\derv(x))$.
Intuitively, these are the points in $\Delta^{n-1}$ that, when taken as starting point, one can remain arbitrarily close to. 
More precisely we have the following result:

\begin{lemma}
A state $x_0\in\Delta^{n-1}$ is stabil{is}able if and only if for every $\varepsilon>0$ and $\tau>0$ there is a solution $x:[0,\tau]\to\Delta^{n-1}$
to~\eqref{eq:diff-incl-toy-model} with $x(0)=x_0$ which remains inside of the $\varepsilon$-ball $B_\varepsilon(x_0)\cap\Delta^{n-1}$.
\end{lemma}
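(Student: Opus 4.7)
The plan is to read off both implications directly from the definition $\mathfrak{stab}_B=\{x\in\Delta^{n-1}: 0\in \conv(\derv(x))\}$, exploiting that $\derv(x)=\{-\pi B\pi^{-1}x: \pi\in S_n\}$ is a finite set depending linearly (and therefore Lipschitz continuously in $x$) on its argument, so that $\conv(\derv(x_0))$ is a compact convex polytope in $\mathbb R^n$.

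For the ``only if'' direction I would simply exhibit the constant curve $x(t)\equiv x_0$ as the required trajectory: it is absolutely continuous with $\dot x(t)=0\in\conv(\derv(x_0))=\conv(\derv(x(t)))$, hence it solves~\eqref{eq:diff-incl-toy-model} with $x(0)=x_0$, and obviously stays in $B_\varepsilon(x_0)\cap\Delta^{n-1}$ for every $\varepsilon,\tau>0$.

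For the converse the plan is an averaging/compactness argument. Fix $\tau>0$, pick a sequence $\varepsilon_k\to 0$, and by hypothesis choose solutions $x_k:[0,\tau]\to\Delta^{n-1}$ with $\|x_k(t)-x_0\|\leq \varepsilon_k$ throughout $[0,\tau]$. From the fundamental theorem of calculus
$$
\Big\|\tfrac{1}{\tau}\int_0^\tau \dot x_k(s)\,ds\Big\|=\Big\|\tfrac{x_k(\tau)-x_0}{\tau}\Big\|\leq \tfrac{\varepsilon_k}{\tau}\xrightarrow{k\to\infty}0,
$$
so the time averages of $\dot x_k$ tend to $0$. Linearity of $\pi B\pi^{-1}$ and finiteness of $S_n$ yield the Hausdorff-Lipschitz estimate $\conv(\derv(x_k(s)))\subseteq \conv(\derv(x_0))+B_{\|B\|\varepsilon_k}(0)$, so $\dot x_k(s)$ lies in this fattened convex set almost everywhere; by convexity the same holds for its time average. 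Letting $k\to\infty$ and invoking closedness of $\conv(\derv(x_0))$ (the convex hull of finitely many points) then forces $0\in\conv(\derv(x_0))$, i.e.~$x_0\in\mathfrak{stab}_B$.

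The only mildly delicate step is the ``mean value in a convex set'' argument combined with the Hausdorff continuity of $\derv(\cdot)$; the latter is immediate from the linearity noted above, while the former reduces to the elementary fact that Bochner integration commutes with closed convex sets. No deeper machinery from differential inclusions (e.g.~Filippov selection) is required, since we construct rather than select the relevant trajectory.
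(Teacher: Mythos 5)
Your proof is correct, and the easy direction (the constant path) is exactly the paper's. For the converse, however, you take a genuinely different, dual route. The paper argues by contraposition: if $0\notin\conv(\derv(x_0))$, a separating linear functional $\beta$ is, by continuity, uniformly bounded above by $-\delta<0$ on $\derv(y)$ for all $y$ near $x_0$, so $\beta(x(t))$ decreases at rate at least $\delta$ along any solution and the solution must leave a small ball in finite time. You instead argue directly: the time-averaged derivative $\tfrac1\tau\int_0^\tau\dot x_k$ of an $\varepsilon_k$-confined solution tends to $0$, while lying (by the mean-value-in-a-convex-set fact and the Lipschitz dependence of $\derv$ on its argument) in the fattened compact convex set $\conv(\derv(x_0))+\overline{B}_{\|B\|\varepsilon_k}(0)$; closedness then gives $0\in\conv(\derv(x_0))$. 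The two arguments are Hahn--Banach duals of one another. Yours avoids the separation theorem at the cost of the convexity-of-the-Bochner-integral lemma; the paper's is shorter and yields quantitative escape information (an explicit drift rate $\delta$, hence an explicit escape time), which is in the spirit of how the functionals $\alpha_\pi$ are subsequently used to delineate the non-stabilisable region. Both are complete; no gap.
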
 

\begin{proof}
If $x_0$ is stabil{is}able, then the constant path $x\equiv x_0$ is a solution to~\eqref{eq:diff-incl-toy-model}.
Conversely, assume that $x_0$ is not stabil{is}able. Then, by continuity, there is some $\delta>0$ and some linear functional $\beta$ on $\mathbb R^n$ such that $\beta$ is less than $-\delta$ on $\derv(y)$ for all $y$ in some neighborhood of $x_0$. Hence there is some time $\tau>0$ where any solution must leave $B_\varepsilon(x_0)$ for some $\varepsilon$ small enough.
\end{proof}

\begin{remark}
It is possible to define a control system on the simplex $\Delta^{n-1}$ similar to the toy model (i.e.~by projecting~\eqref{eq:control-diss_evolution} onto ``the'' diagonal) but allowing for the full unitary control of the system given by Eq.~\eqref{eq:control-diss_evolution}.
In this case there is a characterisation of stabil{is}ability in basic Lie-algebraic terms:
Every point in the simplex $\Delta^{n-1}$ is stabil{is}able if and only if all \gks-terms $V_k$ can be simultaneously (upper) triangular{is}ed~\cite{MEDS23}. 
By Lie's Theorem, this is equivalent to the $V_k$ generating a solvable Lie algebra.
Also be aware of the special cases if all $V_k$ commute, or one just has a single $V_k$,
such as $\sigma_+^d$ of Eq.~\eqref{eq:sigma+} in the case $T=0$ (i.e.~$\theta_k=0$
in \eqref{eq:thermal_angle}).
As soon as $T>0$, however, the situation gets more involved, as the qutrit example below shows.
\end{remark}

\noindent
If zero is not contained in the convex hull of achievable derivatives at $x$, then there must exist some linear functional $\alpha$ on $\R^n$ which is negative on $\derv(x)$. 
Note that while $\alpha$ lives on $\R^n$, only the part parallel to the simplex $\Delta^{n-1}$ matters.
Based on this observation, the idea is to consider the ``permuted'' functionals $\alpha_\pi(x):=-\alpha(\pi B\pi^{-1}x)$ because, given any $x\in\Delta^{n-1}$, if there exists $\alpha$ such that $\alpha_\pi(x)<0$ for all $\pi\in S_n$, then $x$ cannot be stabilisable. Conversely, if $x$ is not stabilisable, then there exists some $\alpha$ for which $\alpha_\pi(x)<0$ for all $\pi\in S_n$.
Obviously, $d$ as well as all permutations of $d$ are stabil{is}able.

\begin{figure}[!htb]
\includegraphics[width=0.45\textwidth]{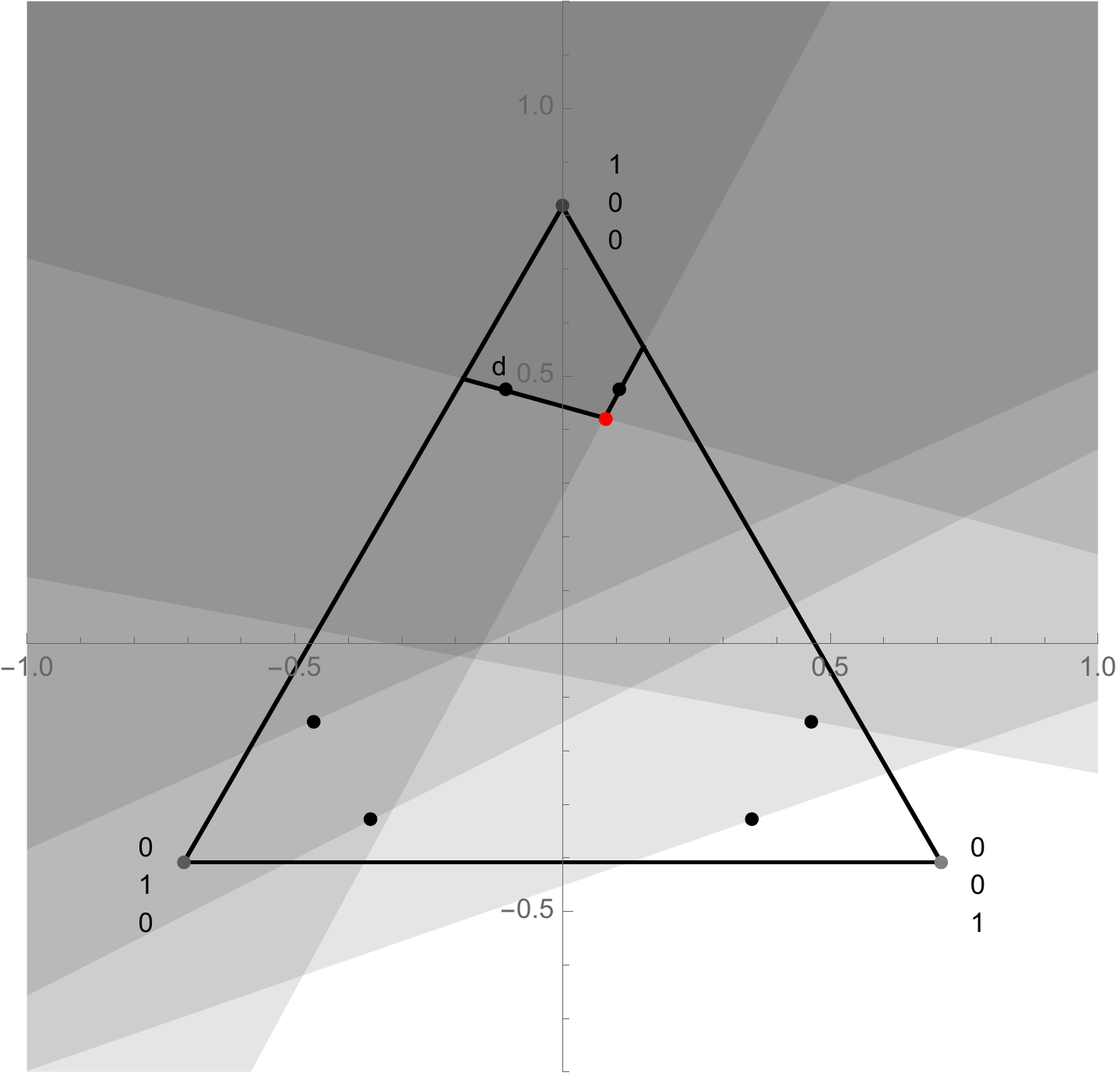}
\includegraphics[width=0.45\textwidth]{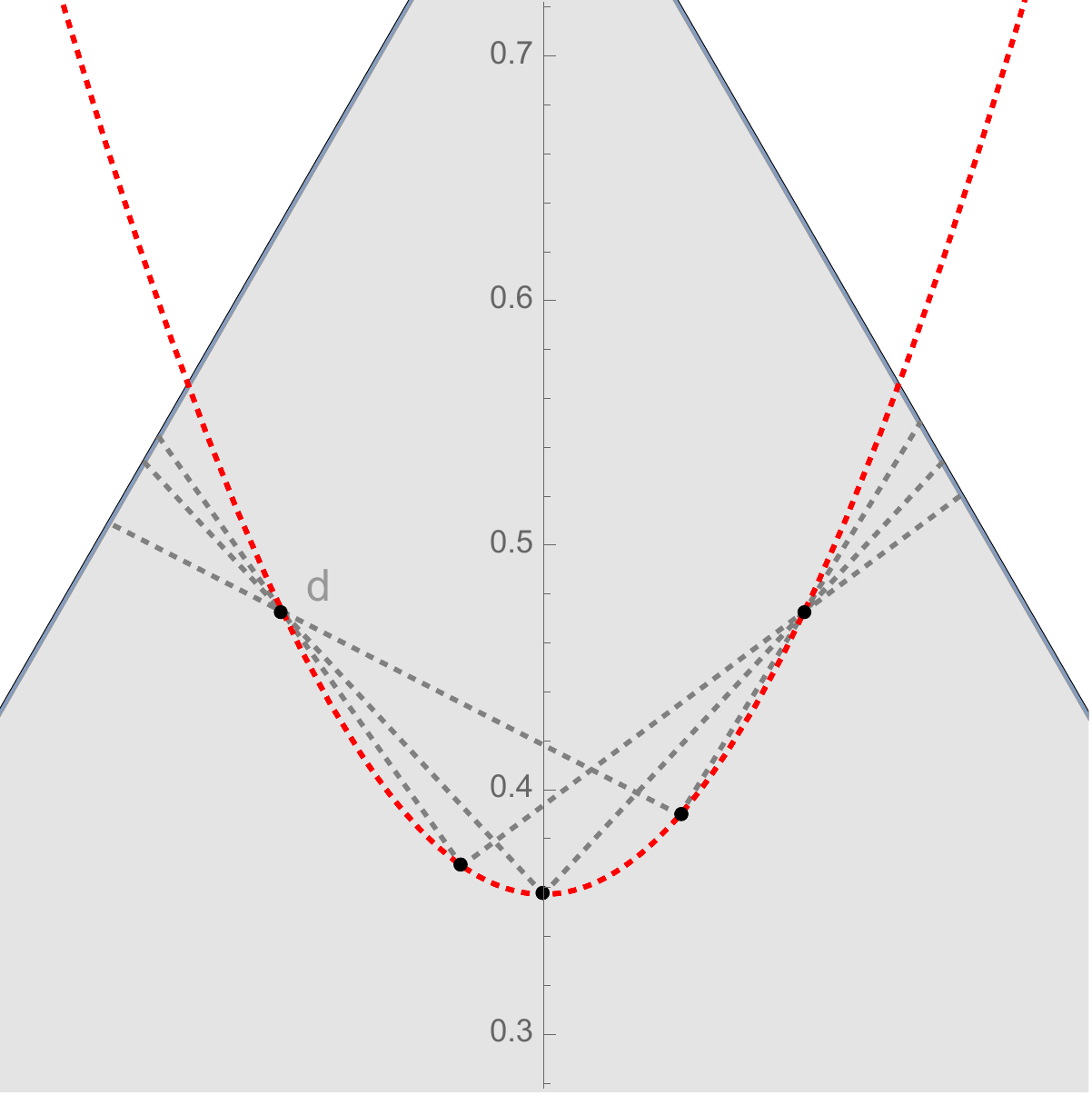}
\centering
\caption{(Colour online). Illustration of how to construct the boundary curves of the set of stabilisable points in the case of $a=0.3$. 
Left: For $\alpha=\begin{pmatrix}0&-0.4&-0.6\end{pmatrix}$ the shaded regions comprise the points where the functionals $\alpha_\pi$ are negative; 
highlighted is the intersection of all the negative regions with the simplex. The points in this region are certainly not stabil{is}able. In particular the intersection point of $\ker(\alpha_{\mathrm{id}})$ and $\ker(\alpha_{\tau_{23}})$ is marked in red.
Right: For three different values of $\alpha$,  parts of $\ker(\alpha_{\mathrm{id}})$ and $\ker(\alpha_{\tau_{23}})$ and their intersections are shown. Taken together, these intersections form the curve given in red, which constitutes a part of the boundary of the set of stabil{is}able points.}
\label{fig:curve-construction}
\end{figure}

Let us now focus on the three-dimensional case.
We will compute a closed curve connecting all these points, which will turn out to be the boundary of the set of stabil{is}able states:
everything (on or) inside the curve will be stabil{is}able and everything outside will be non-stabil{is}able---refer to Fig.~\ref{fig:stabilizable-set} below for two examples.
Let us, e.g., focus on the part of the boundary curve between $d$ and $\tau_{23}\,d$, where $\tau_{23}$ is the transposition acting on the second and third element.
Note that $d$ and $\tau_{23}\, d$ are located in neighbouring Weyl chambers since the elements in $d$ are always increasing or decreasing.
The idea for determining its shape is: for every functional $\alpha$
(in a certain range) 
one can compute a point
$\ker(\alpha_{\mathrm{id}})\cap\ker(\alpha_{\tau_{23}})\cap\Delta^2$ with the property that all points in the simplex ``above'' it cannot be stabil{is}able as shown in Fig.~\ref{fig:curve-construction}.
Moreover, due to Assumption~\textbf A, the curve will always be part of a conic section.
To motivate this approach, note that any point $x$ with $\alpha_\pi(x)<0$ for some $\alpha$ and for all $\pi\in S_n$ in contained in an open neighborhood of non-stabil{is}able points and hence cannot lie on the boundary. Thus we are looking for points lying in the kernel of at least one of the $\alpha_\pi(x)$. Moreover, we really need to find points lying in the intersection of two such kernels, since otherwise a small perturbation applied to $\alpha$ shows that the point has a non-stabil{is}able neighborhood.

Let us now invoke Assumption \textbf{A} so, w.l.o.g., $H_0:=\operatorname{diag}(-1,0,1)\cdot\Delta E$ for some $\Delta E\in\R$, 
and thus $d=(1,a,a^2)/(1+a+a^2)$ with
$a=e^{-\Delta E/T}$.
The generators of our dissipative dynamics \eqref{eq:sigma+} \& \eqref{eq:sigma-} are fully character{is}ed by the (constant) angle $\theta=\arccos(\frac{1}{\sqrt{1+a}})$ in \eqref{eq:thermal_angle}.
With this, the generator of the toy model takes the form (cf.~also \cite{CDC19})
$$
-B=\frac{2}{1+a}
\begin{pmatrix}
-a &     1   & 0 \\
 a & -1-a &1  \\
  0 & a      & -1
\end{pmatrix}\,.
$$

Let us go through the construction of the curve for the special (parabolic) case\footnote{This is the case where the energy gap $|\Delta E|=\ln(4) k_BT$, where we explicitly write the Boltzmann constant $k_B$.
}
where $a=\tfrac14$. It will turn out that the boundary curve between $d$ and $\tau_{23}d$ is fully determined by the family of functionals\footnote{\label{footn:parameter-reduction}%
Since we only care about the component of the functional parallel to the simplex and since the normalisation does not matter, it suffices to consider a one-parameter family of functionals. The exact parametrisation and parameter range are chosen for ease of computation.
}
$\alpha^\lambda$, $\lambda\in[-\frac17,\frac17]$ where
\begin{equation}\label{eq:def_alpha_lambda}
\alpha^\lambda:= -(\tfrac12+\lambda)\begin{pmatrix}0&0&1\end{pmatrix} - (\tfrac12-\lambda)\begin{pmatrix}0&1&0\end{pmatrix},
\end{equation}
so $\alpha^\lambda(x)=\lambda (x_2-x_3)-\frac12(x_2+ x_3)$ for all $x\in\mathbb R^3$.
In order to compute $\ker(\alpha^\lambda_{\mathrm{id}})\cap\ker(\alpha^\lambda_{\tau_{23}})\cap\Delta^2$ we find that $\alpha_{\operatorname{id}}^\lambda=\alpha^\lambda\circ (-B)$
(up to a global factor, which we may omit because we have to normalise later on anyway) equals
\begin{align*}
-(\tfrac12+\lambda)\begin{pmatrix}
0&\frac12&-2
\end{pmatrix}-(\tfrac12-\lambda)\begin{pmatrix}
\frac12&-\frac52&2
\end{pmatrix}
= \begin{pmatrix}
\frac{\lambda}{2}-\frac14&
1-3\lambda &
  4\lambda
\end{pmatrix}.
\end{align*}
Also $\alpha_{\tau_{23}}^\lambda=\alpha_{\operatorname{id}}^{-\lambda}\circ\tau_{23}$ is generated by $
\begin{pmatrix}
-\frac{\lambda}{2}-\frac14&-4\lambda&3\lambda+1
\end{pmatrix}
$.
With this we compute $\ker(\alpha^\lambda_{\mathrm{id}})\cap\ker(\alpha^\lambda_{\tau_{23}})$
to be spanned by ``the'' vector which is orthogonal to the normal vector of both
$\alpha^\lambda_{\operatorname{id}}$ and $\alpha^\lambda_{\tau_{23}}$, that is,
\begin{align*}
\begin{pmatrix}
\frac{\lambda}{2}-\frac14\\1-3\lambda\\4\lambda
\end{pmatrix}\times
\begin{pmatrix}
-\frac{\lambda}{2}-\frac14\\-4\lambda\\
3\lambda+1
\end{pmatrix}=\begin{pmatrix}
1+7\lambda^2\\
-\frac72\lambda^2-\frac34\lambda+\frac14 \\
-\frac72\lambda^2+\frac34\lambda+\frac14
\end{pmatrix}
\end{align*}
Intersecting the line generated by this vector with the standard simplex only introduces a normalising factor since we have: $\ker(\alpha^\lambda_{\mathrm{id}})\cap\ker(\alpha^\lambda_{\tau_{23}})\cap\Delta^2=
\frac16(
4+28\lambda^2,
-14\lambda^2-3\lambda+1 ,
-14\lambda^2+3\lambda+1)^\top
$.
Finally, we reduce the dimensionality of the problem by isometrically embedding\footnote{As usual this is done using the partial isometry
$
\footnotesize P=\begin{pmatrix}
0                  & \frac{-1}{\sqrt2} & \frac{1}{\sqrt2} \\
\sqrt{\frac{2}{3}} & \frac{-1}{\sqrt6} & \frac{-1}{\sqrt6}
\end{pmatrix}
$.
} the simplex $\Delta^2$ in $\R^2$;
this leads to the (parabolic) boundary curve 
$
(\frac{\lambda}{\sqrt2}, \frac{1+14\lambda^2}{\sqrt6})
$
where $\lambda\in[-\frac17,\frac17]$.

If $a\neq\frac14$ we modify the family of functionals $\alpha^\lambda$ introduced previously by multiplying $\lambda$ in~\eqref{eq:def_alpha_lambda} by $\tfrac12\sqrt{1+2a}|(3+2a)(1-4a)|^{-1/2}$;
however, the idea and the calculations are analogous.
In the hyperbolic\footnote{
The unital scenario $a=1$ is a special case because then $d=\frac{\mathbbm e}3$, so the set of stabil{is}able points collapses to $\{\frac{\mathbbm e}3\}$.} case $a>\frac14$
the boundary curve can be parametr{is}ed via
$$
\left(w \frac{-2\lambda}{\lambda^2-1}, u\frac{\lambda^2+1}{\lambda^2-1} + v\right)
$$
where
$$
v-u=\sqrt{\frac23}\frac{1-a}{1+2a}\,,\, u+v=\sqrt{\frac23}\frac{1-a}{1-4a}\,,\,w=\frac{\sqrt2 (1-a)a}{\sqrt{|(1 + 2a)(3 + 2 a)(1-4a)|}}\,.
$$
For the elliptic case $a\in(0,\frac14)$ one finds
$$
\Big(w\frac{2\lambda}{\lambda^2+1}, u\frac{\lambda^2-1}{\lambda^2+1} + v\Big).
$$
This covers the segment of the curve which connects $d$ and $\tau_{23}d$.
For the rest of the boundary curve note that---due to the permutation symmetry---there
are only two different curve segments, cf. Fig.~\ref{fig:stabilizable-set}.
We have just computed one of them. 
The other one is obtained by re-arranging the elements of $d$ in reverse order and repeating the calculation. One obtains the same formulas with $a$ replaced by $a^{-1}$.

\begin{figure}[!ht]
\includegraphics[width=0.495\textwidth]{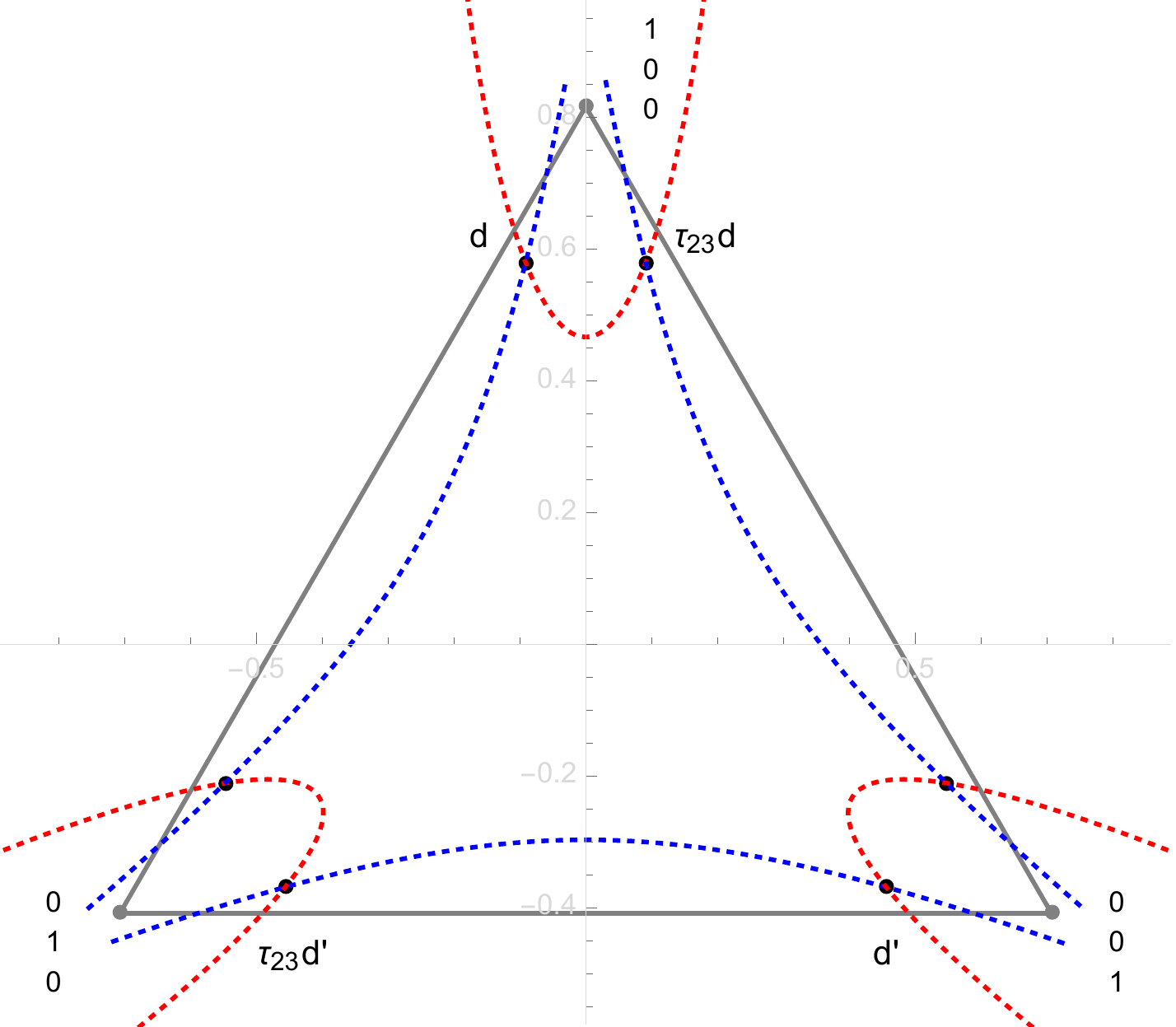}
\includegraphics[width=0.495\textwidth]{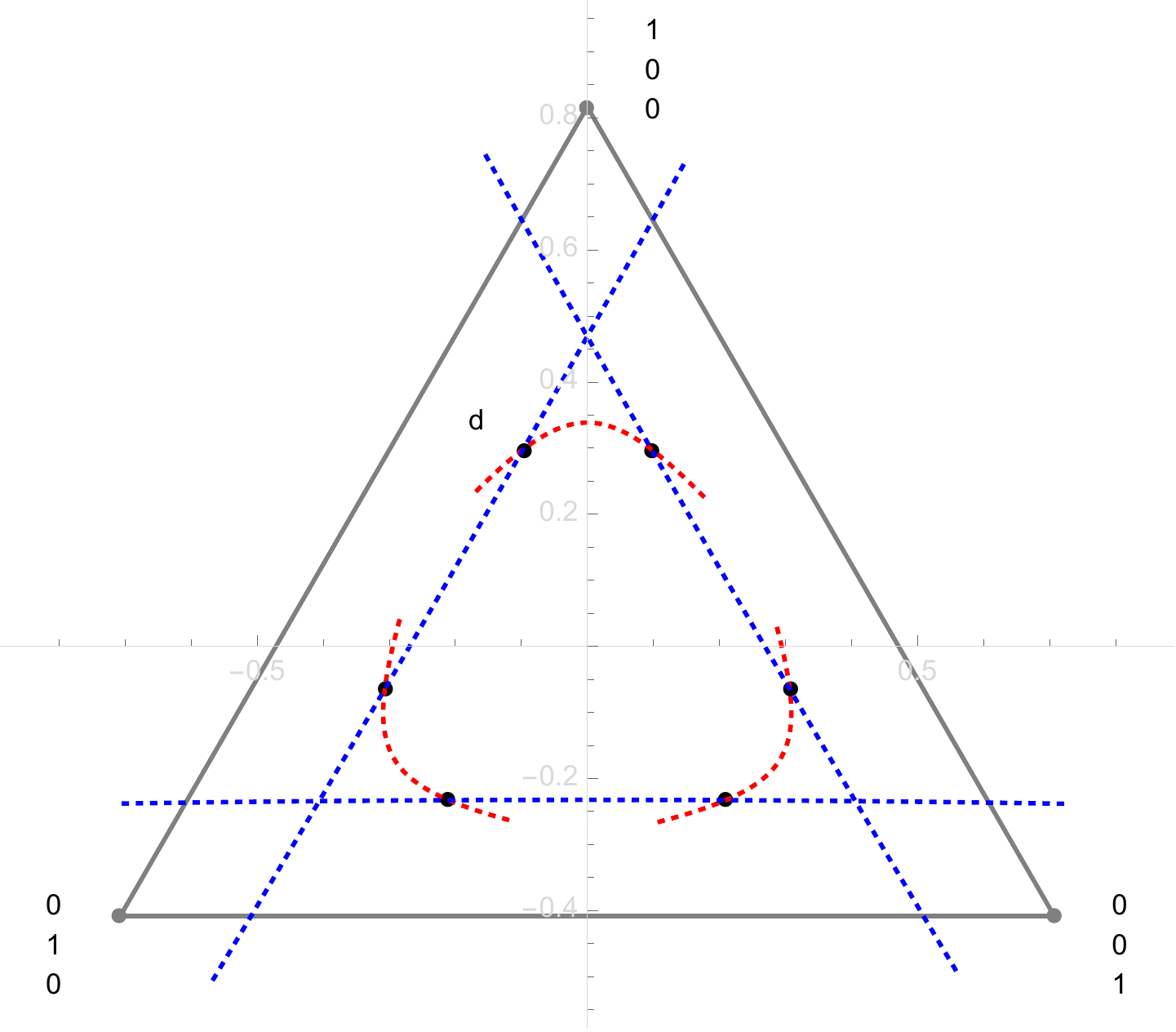}
\centering
\caption{(Colour online).
Left: The set of stabilisable states 
for the equidistant energy case 
with $a=\frac15$. The set is bounded by six conics and contains the permutations of $d$. This is the elliptic case, and part of the ellipse is drawn in red, together with its permuted copies. The blue curve is obtained analogously by taking the hyperbolic case $a=5$ whose fixed point we denote $d'$. 
Right: The same approach gives the boundary of the set of stabilisable states for a random generator $B$ 
numerically.
NB: In general the bounding curves need not be conic sections, and one may obtain a convex shape.}
\label{fig:stabilizable-set}
\end{figure}

We have seen that for each $\alpha$ we obtain an open (convex) region which is certainly not stabil{is}able. Parametr{is}ing $\alpha$ in a circular fashion---i.e.~$\alpha\in S^2\cap\{\mathbbm e\}^\perp$ in accordance with footnote~\ref{footn:parameter-reduction}---shows that this region moves 
continuously around the simplex, and its closure always touches our closed curve 
in such a way that each point outside of the curve is part of this region at some 
point, implying that all these points outside are non-stabil{is}able.

%

It remains to be shown that
every point on the boundary or enclosed within the boundary curve we just computed can in fact be stabil{is}ed.
We will only give a hand-wavy explanation;
again, each $\alpha$ yields a convex region which is not stabil{is}able, and which touches our curve in some point.
Two cases may occur: 
Either one of the halfplanes on which some $\alpha_\pi$ is negative lies outside of the major{is}ation polytope of $d$, in which case no point inside our curve is in this halfplane. 
Otherwise, we are in the case illustrated in Fig.~\ref{fig:curve-construction}.
Here the convex region of non-stabilisable points given by $\alpha$, when intersected with the majorisation polytope of $d$, is a triangle with vertices given by two permutations of $d$ and some point on our curve. Since the curve is always concave, again no point enclosed by the curve lies in the triangle of non-stabilisable points.
The case for arbitrary admissible generators $-B$ in 
the qutrit case is analogous, but the large number of parameters makes the formulas unwieldy. In higher dimensions, the idea of using the functionals $\alpha_\pi$ to determine non-stabilisable points still applies, but it is unclear how to analytically compute the resulting shapes of the stabilisable set.

\paragraph*{Reachable States.} Let us now turn towards the set of reachable states (or, more precisely, its closure)
for some given initial state and any stochastic generator matrix $-B$ with unique fixed point $d\in\Delta^{n-1}$, $d>0$.
Let us use the notation $y\twoheadleftarrow x$ to denote that $y\in\overline{\reach_B(x)}$. 
Then $\twoheadleftarrow$ is a preorder and so it induces an equivalence relation $\sim$ on which it becomes a partial order. 
In other words, $x\sim y$ if and only if $x\in \overline{\reach_B(y)}$ and $y\in\overline{\reach_B(x)}$ meaning there exists an approximately periodic solution through $x$ and $y$.
Note that up to a viability condition, the equivalence classes $[x]$ of this equivalence relation correspond to \emph{control sets} as defined in~\cite[Def.~3.1.2]{CK00}, and the induced partial order corresponds to the \textit{reachability order}~\cite[Def.~3.1.7]{CK00}.

First we observe that the maximally mixed state can always be reached:

\begin{lemma} \label{lemma:e-d-reach}
For all $x\in\Delta^{n-1}$, the vectors $d$ and $\frac1n\mathbbm e$ are in $\overline{\mathfrak{reach}_B(x)}$.
\end{lemma}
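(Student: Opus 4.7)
The plan is straightforward: for the first claim, the purely dissipative flow $e^{-tB}$ already suffices; for the second, we realise a permutation-averaged dynamics within the toy-model semigroup and exploit its symmetry.

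For $d\in\overline{\mathfrak{reach}_B(x_0)}$: the one-parameter semigroup $(e^{-tB})_{t\geq 0}$ consists of column-stochastic maps which leave the compact simplex $\Delta^{n-1}$ invariant and have $d$ as their unique fixed point there; standard Perron--Frobenius / Markov-chain theory then yields $\lim_{t\to\infty}e^{-tB}x_0=d$ for every $x_0\in\Delta^{n-1}$, whence $d$ lies in the closure of the dissipative trajectory emanating from $x_0$.

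For $\tfrac{1}{n}\mathbbm e\in\overline{\mathfrak{reach}_B(x_0)}$ the idea is to introduce the permutation-averaged generator
\[
\bar B\;:=\;\frac{1}{n!}\sum_{\pi\in S_n}\pi\, B\,\pi^{-1}.
\]
Since $\bar B$ commutes with every permutation matrix, and the $S_n$-representation on $\R^n$ decomposes as $\operatorname{span}(\mathbbm e)\oplus\mathbbm e^\perp$ into two inequivalent irreducibles, Schur's lemma forces $\bar B=aI+b\,\mathbbm e\mathbbm e^\top$ for some $a,b\in\R$. Vanishing column sums of $-B$ (a property preserved by averaging) give $a+bn=0$, and $\operatorname{tr}(\bar B)=\operatorname{tr}(B)>0$ (since $-B$ is a non-trivial Q-matrix with non-negative off-diagonal and zero column sums) forces $b<0$. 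Hence $\bar B$ has kernel $\operatorname{span}(\mathbbm e)$ and acts as the strictly positive scalar $-bn$ on $\mathbbm e^\perp$, so $e^{-t\bar B}x_0\to\tfrac{1}{n}\mathbbm e$ as $t\to\infty$.

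The final step is to show that the orbit $t\mapsto e^{-t\bar B}x_0$ actually lies in $\overline{\mathfrak{reach}_B(x_0)}$. The differential-inclusion formulation~\eqref{eq:diff-incl-toy-model} furnishes the cleanest route: by construction $-\bar B x\in\conv(\derv(x))$ at every $x$, so this orbit solves the relaxed inclusion, and by the approximate equivalence between the relaxed inclusion and the toy model (\cite[Thm.~2.3]{Smirnov02} together with \cite[Ch.~2.4, Thm.~2]{Aubin84}) it is a limit of admissible trajectories of $\Lambda_B$; alternatively, the Lie--Trotter formula
\[
e^{-t\bar B}\;=\;\lim_{N\to\infty}\Bigl(\prod_{\pi\in S_n}\pi\, e^{-tB/(Nn!)}\,\pi^{-1}\Bigr)^{\!N}
\]
realises $e^{-t\bar B}$ as a limit of words in $\mathcal S_{\Lambda_B}$, hence in $\overline{\mathcal S_{\Lambda_B}}$. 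Either way, letting $t\to\infty$ yields $\tfrac{1}{n}\mathbbm e\in\overline{\mathfrak{reach}_B(x_0)}$. The main obstacle is the spectral analysis of $\bar B$, in particular pinning down the sign $b<0$; everything else reduces to routine semigroup bookkeeping.
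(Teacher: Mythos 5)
Your proof is correct, and its overall strategy coincides with the paper's: the paper likewise obtains $d$ from ergodicity of the purely dissipative flow (citing the standard result for irreducible continuous-time Markov chains), and likewise introduces the permutation-averaged generator $\hat B=\tfrac{1}{n!}\sum_{\pi\in S_n}\pi B\pi^{-1}$ for the second claim. Where you diverge is in how that averaged generator is analysed. The paper argues indirectly: $\hat B\mathbbm e=0$ by symmetry, and $\tfrac1n\mathbbm e$ must be the \emph{unique} fixed point in $\Delta^{n-1}$ because otherwise permutation symmetry would produce an open set of fixed points, forcing $\hat B\equiv 0$ and then $B\equiv 0$; attractivity is then once more delegated to the Markov-chain ergodic theorem. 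Your Schur's-lemma computation replaces this by an explicit diagonalisation $\bar B=aI+b\,\mathbbm e\mathbbm e^\top$ with $a+bn=0$ and $b<0$ (the sign coming from $\operatorname{tr}(\bar B)=\operatorname{tr}(B)>0$, which indeed holds since a nonzero generator of column-stochastic semigroups has nonpositive, not all zero, diagonal in $-B$), which buys you the explicit spectral gap $-bn>0$ on $\mathbbm e^\perp$ and hence exponential convergence with no second appeal to Markov-chain theory. You are also more explicit than the paper on a point it leaves implicit: why the orbit of $e^{-t\bar B}$ lies in $\overline{\reach_B(x_0)}$ at all. The paper works inside the relaxed differential inclusion \eqref{eq:diff-incl-toy-model}, where $-\hat Bx\in\conv(\derv(x))$ is automatic and the approximate equivalence with the toy model is invoked once and for all; your Lie--Trotter factorisation gives the same conclusion directly at the level of the semigroup ${\mathcal S}_{\Lambda_B}$ and is the cleaner justification if one insists on genuine toy-model trajectories. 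Both routes are sound; yours is more self-contained, the paper's is shorter because it reuses the same ergodicity citation twice.
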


\begin{proof}
Since $d$ is the unique fixed point of $e^{-tB}$ for $t>0$, and since it is attractive\footnote{This follows from a basic result on continuous-time Markov chains. Here $-B$ is the transition rate matrix. It is irreducible (in the sense of~\cite[p.~111]{Norris97}) since $d>0$ is the unique fixed point. 
Then~\cite[Thm.~3.6.2]{Norris97} shows that the corresponding Markov chain is ergodic, i.e.~the unique fixed point is attractive.}, $d\twoheadleftarrow x$ for all $x\in\Delta^{n-1}$. Similarly, consider $\hat  B=\tfrac1{n!}\sum_{\pi\in S_n} \pi B\pi^{-1}$. 
Then $\hat B$ is invariant under permutations, which implies that $\hat B\mathbbm e=0$. 
Moreover $\frac1n\mathbbm e$ is the unique fixed point in $\Delta^{n-1}$
since otherwise, by permutation symmetry there would be an open set of fixed points in $\Delta^{n-1}$, and hence $\hat B\equiv0$. This would imply that $B\equiv 0$ as one can check by considering the value of $\hat B$ at the vertices of $\Delta^{n-1}$.
As before, the fixed point $\frac1n\mathbbm e$ of $\hat B$ is attractive.
\end{proof}
\noindent This lemma shows $d\sim\frac1n\mathbbm e$, and that the equivalence class $[d]=[\frac1n\mathbbm e]$ is an invariant control set as defined in~\cite[Def.~3.1.3]{CK00}.

Let us now, again, restrict to the three-dimensional case.
It turns out that this equivalence class is the only one that contains more than a single point:
the idea is that equivalence classes with at least two points lead to (approximately) periodic solutions which must enclose a stabilisable point. Since the set of non-stabilsable points is simply connected, when restricting to a Weyl chamber the periodic solution intersects the set of stabilisable states, which are all equivalent to $\frac13\mathbbm e$.
A proof can be found in App.~\appref{D}.

For any non-stabil{is}able state $x$, it holds that the convex cone generated by $\derv(x)$ is pointed (i.e.~its edge is a point). Hence there are two extremal derivatives at the boundary of the cone, which we will call the left and right extremal derivatives, as seen from $x$. The resulting extremal vector fields are depicted in Fig.~\ref{fig:extremal-vfs-and-reach}. More precisely we have the following result.

\begin{figure}[t]
\includegraphics[width=0.485\textwidth]{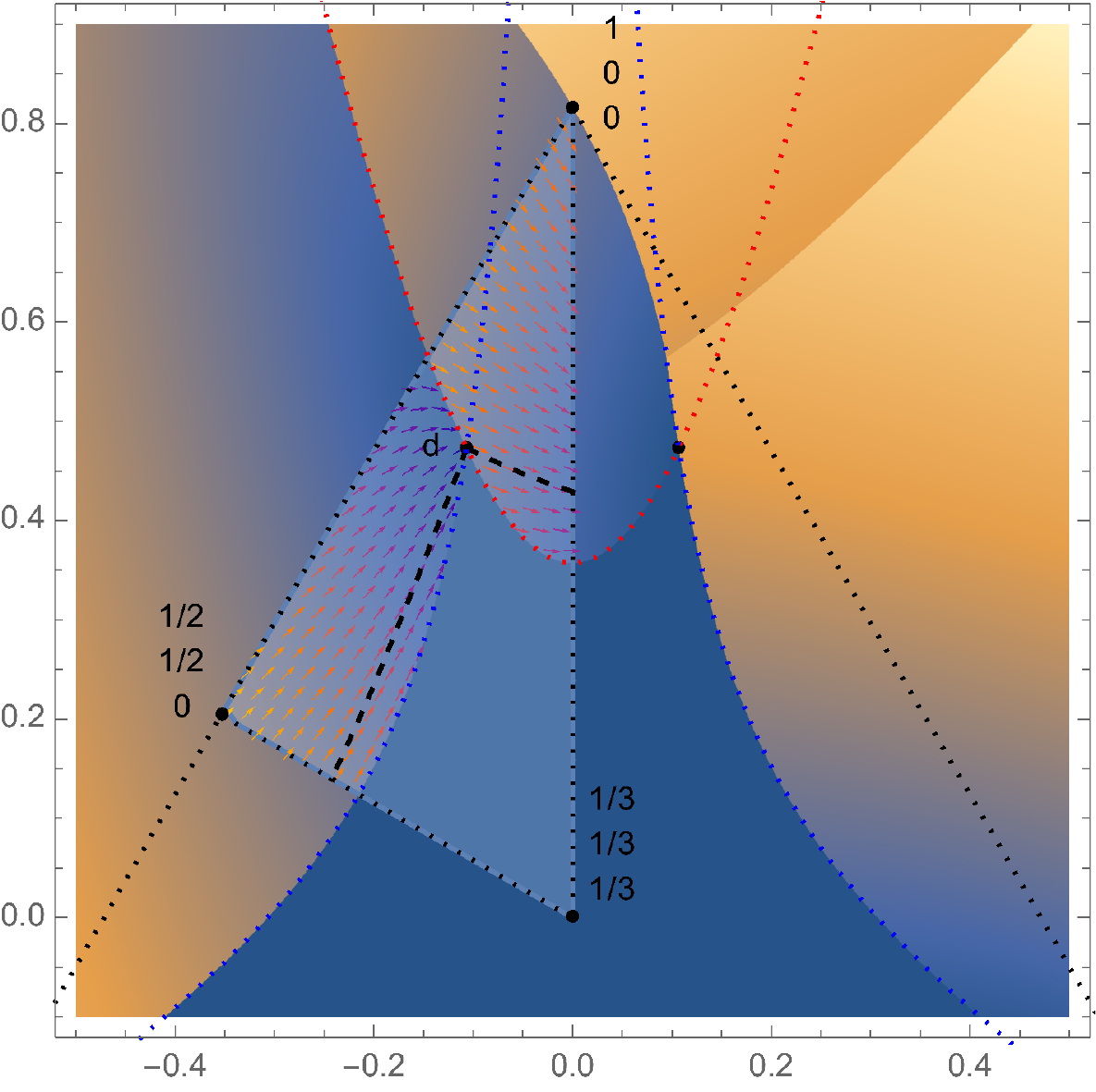}
\mbox{\hspace{.5mm}\raisebox{1.4mm}{\includegraphics[width=0.475\textwidth]{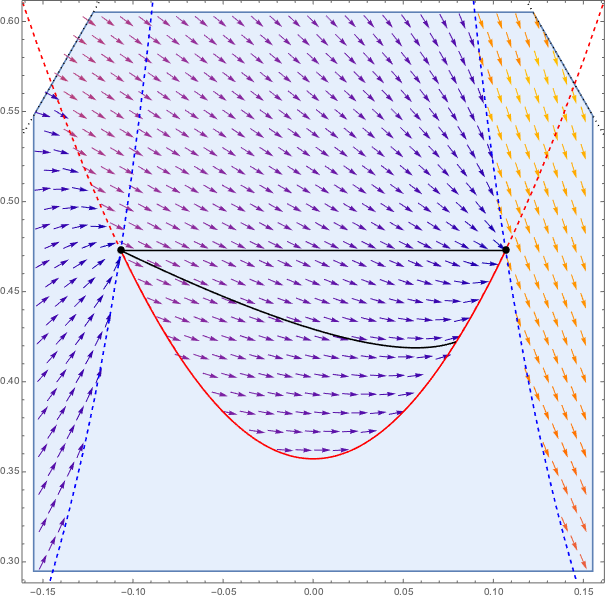}}}
\centering
\caption{(Colour online). 
Left: The left extremal vector field in the case $a=0.3$ depicted in the indicated Weyl chamber. 
The vector field is undefined on the stabil{is}able set, and hence this yields another way to compute the set of stabil{is}able points. 
We plotted again the bounding conics of the stabil{is}able set and the trajectories bounding the reachable set $\overline{\reach_B(d)}$. 
In this case the boundary trajectories are obtained by starting at $d$ and permuting $B$ such that one of the neighbours of $d$ is the unique fixed point. 
The background colors show the norm of the left extremal vector field, and its discontinuities are clearly visible. 
Right: A zoomed-in picture of the ``D''-shaped region considered in the proof of Lem.~\ref{lemma:extremals} again with parameter $a=0.3$.}
\label{fig:extremal-vfs-and-reach}
\end{figure}

\begin{lemma}
On the set $\Delta^{2}\setminus\mathrm{int}(\mathfrak{stab}_B)$ there exist left and right extremal vector fields. The norm of these vector fields might not be continuous, but the direction field is locally Lipschitz continuous, except possibly at $d$ (and its permutations).
\end{lemma}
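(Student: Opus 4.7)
The plan is to exploit the 2-dimensional nature of the simplex tangent space to extract two extremal rays from the cone of achievable derivatives, then to analyse the piecewise-linear structure they inherit. For $x \in \Delta^2 \setminus \mathrm{int}(\stab_B)$, non-stabilisability gives $0 \notin \conv(\derv(x))$, so the conic hull $K(x) := \mathbb R_{\geq 0}\cdot\conv(\derv(x))$ is a pointed closed convex cone inside the two-dimensional tangent space $T\Delta^2$. Such a cone has at most two extremal rays and, generically, exactly two; fixing an orientation of $T\Delta^2$ then labels them ``left'' and ``right''. Since $K(x)$ is generated by the finite set $\{-\pi B \pi^{-1} x : \pi \in S_n\}$, each extremal ray contains at least one such vector, and we take any of them (say, the one of largest norm on the ray) as the value $F_L(x)$ (resp.~$F_R(x)$). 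For each $\pi$ the region $R_\pi^L := \{x : -\pi B \pi^{-1} x \text{ lies on the left extremal ray of } K(x)\}$ is semi-algebraic, and on $R_\pi^L$ the map $F_L$ restricts to the linear map $x \mapsto -\pi B \pi^{-1} x$.

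The \emph{norm} of $F_L$ can jump at the boundary between two strata $R_\pi^L$ and $R_{\pi'}^L$: there, both $-\pi B \pi^{-1} x$ and $-\pi' B \pi'^{-1} x$ lie on the same extremal ray, hence are positively proportional but generically of different magnitudes. The \emph{direction}, by contrast, matches across such boundaries. This matching follows from continuity of $K(x)$ in the Hausdorff metric (its generators depend linearly on $x$), which forces continuous dependence of the two extremal rays---they cannot collapse or merge inside the non-stabilisable domain, since that would place $K(x)$ inside a halfplane, i.e.~render $x$ stabilisable.

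With continuity of the unit vector $\hat F_L(x) := F_L(x)/\|F_L(x)\|$ in hand, local Lipschitz continuity reduces to a piecewise check. On each stratum $R_\pi^L$ the normalised map $x \mapsto -\pi B \pi^{-1} x/\|-\pi B \pi^{-1} x\|$ is smooth with bounded Jacobian on any compact set avoiding the zero-locus of the denominator, which meets $\Delta^2$ only at the permuted fixed point $\pi d$. Patching across finitely many semi-algebraic strata, using the matching of directions established above, then yields local Lipschitz continuity of $\hat F_L$ on $\Delta^2 \setminus (\mathrm{int}(\stab_B) \cup \{\pi d : \pi \in S_n\})$; the argument for $\hat F_R$ is symmetric. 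The main technical hurdle will be precisely the matching of extremal directions across strata boundaries, which can be packaged cleanly via Berge's maximum theorem applied to the linear programme $\max \langle \alpha, v\rangle$ over $v \in \conv(\derv(x))$ as $\alpha$ sweeps the outward normals to the extremal rays.
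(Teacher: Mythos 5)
Your construction follows the same route as the paper: pointedness of the cone of achievable derivatives off the stabilisable set gives two extremal rays, and Lipschitz continuity of the direction field is obtained by viewing it piecewise (the paper phrases this as ``a maximum of finitely many smooth functions'', you package it via strata and Berge's theorem --- same idea). However, there is a genuine gap in your opening step. The lemma is stated on $\Delta^2\setminus\mathrm{int}(\stab_B)$, which \emph{contains} the boundary $\partial\,\stab_B$. At a point $x\in\partial\,\stab_B$ one has $0\in\conv(\derv(x))$, so (away from the permuted fixed points $\pi d$, where $0\in\derv(x)$ itself) the cone $K(x)$ contains a full line and is \emph{not} pointed; generically it is a closed half-plane. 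Your argument ``non-stabilisability gives $0\notin\conv(\derv(x))$, so $K(x)$ is pointed'' therefore fails on exactly this part of the domain, and the two extremal rays must instead be identified as the two antipodal boundary rays of the half-plane. The paper handles this case explicitly (``on the boundary of the stabilisable set, the convex cone is either pointed or a half space; either way there is a well defined left and right extremal derivative''), and this case is not decorative: the boundary behaviour is used later (Lem.~\ref{lemma:extremals}) where the left and right extremal fields are observed to point in opposite directions on $\partial\,\stab_B$.

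A second, smaller defect: your justification that the two extremal rays ``cannot collapse or merge \ldots since that would place $K(x)$ inside a halfplane, i.e.~render $x$ stabilisable'' is a non sequitur. A cone contained in a half-plane (e.g.\ a single ray) does not force $0\in\conv(\derv(x))$; stabilisability requires the cone to contain a line or $0$ to be an achievable derivative. Fortunately the claim is not needed: if the cone degenerates to a ray at some point, the left and right directions simply coincide there, and continuity of each is unaffected. With the half-plane case added and the merging argument dropped, your proof is a correct (and more detailed) version of the paper's.
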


\begin{proof}
As already mentioned, for any non-stabilisable point $x$, the convex cone generated by $\derv(x)$ is pointed. On the other hand, if for some $x$ the convex cone is the plane, then $x$ is in the interior of the stabilisable set. Hence on the boundary of the stabilisable set, the convex cone is either pointed or a half space. Either way there is a well defined left and right extremal derivative, and so the corresponding vector fields are well-defined.
Locally, for $x\neq \pi d$ the direction field can be seen as a maximum of finitely many smooth functions, and hence it is locally Lipschitz continuous. 
\end{proof}

The discontinuities in the norm are important, as they tell us when the control permutation has to be applied. The shapes of these discontinuities are non-trivial, and we show an example in Fig.~\ref{fig:extremal-vfs-and-reach}.

\begin{figure}[t]
\centering
\includegraphics[width=0.85\textwidth]{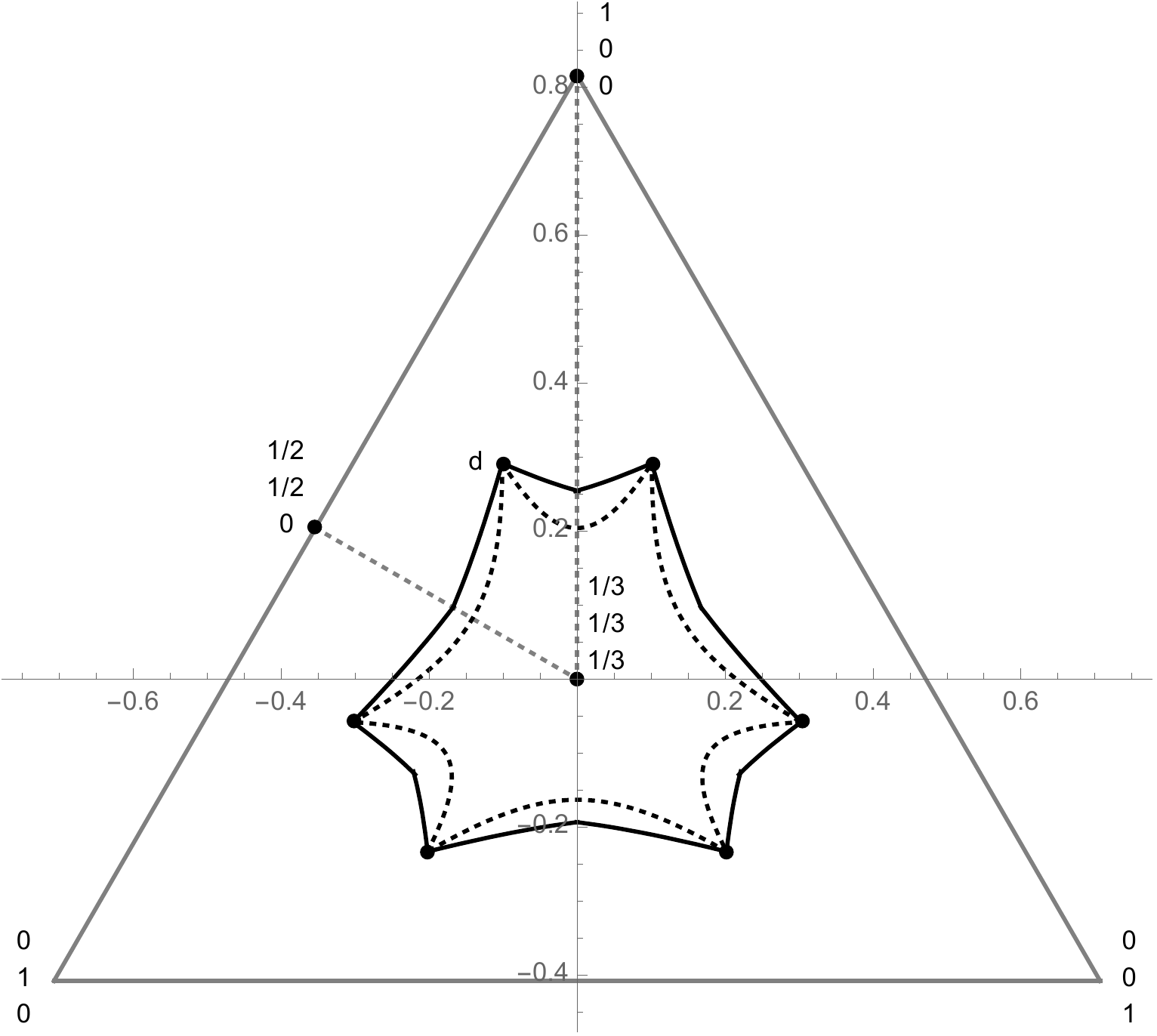}
\caption{The boundary of the reachable set $\overline{\reach_B(d)}$ in the case $a=0.5$ is shown with solid lines, and the boundary of the stabil{is}able set $\stab_B$ is shown using dotted lines (cp.~Fig.~\ref{fig:cdc-mtns}). NB: The curve segments of the boundary of the reachable set are not straight, though the curvatures are hardly visible.}
\label{fig:reach-stab}
\end{figure}

Now the boundary of the reachable set can be computed using solutions following the left and right extremal derivatives.
By the previous lemma these solutions exist and are unique. See again Fig.~\ref{fig:extremal-vfs-and-reach} as well as Fig.~\ref{fig:reach-stab}. Note that the extremal vector fields never vanish where they are defined, and since they are defined on a contractible domain (if restricted to a Weyl chamber) there are no periodic solutions.
This relies on the fact that the state space is two dimensional, see for instance~\cite[Thm.~6.8.2]{Strogatz15}.
Moreover, given a left (right) extremal solution, another solution can only cross it from left to right (right to left);
this can be shown as in the proof of~\cite[Thm.~5.6]{Smirnov02}.

Before we can prove this section's main result 
we need the following topological result about reachable sets:
\begin{lemma} \label{lemma:contract}
Let any $x\in\Delta^2$ be given. Then $\overline{\reach_B(x)}$ is contractible.
\end{lemma}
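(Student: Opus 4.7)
The plan is to show that $\overline{\reach_B(x)}$ strongly deformation retracts onto the single point $\{d\}$ via the dissipative flow $(e^{-tB})_{t\geq 0}$. The construction will in fact work in any dimension, as only compactness of $\Delta^{n-1}$ and attractivity of $d$ are used.

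Two ingredients are needed. First, $\overline{\reach_B(x)}$ is a closed subset of the compact simplex $\Delta^{2}$ and hence compact; moreover, it is forward-invariant under $(e^{-tB})_{t\geq 0}$, since any solution of the toy model reaching $y\in\reach_B(x)$ can be extended by pure dissipative evolution (without any further permutation), giving $e^{-tB}y\in\reach_B(x)$ for all $t\geq 0$, and invariance passes to the closure by continuity of $e^{-tB}$. Second, as already exploited in the proof of Lem.~\ref{lemma:e-d-reach}, $d$ is a globally attractive fixed point of $e^{-tB}$; on the invariant complementary subspace $\{z\in\R^3:\mathbbm e^\top z=0\}$ the generator $-B$ has only eigenvalues with strictly positive real part, so $\|e^{-tB}y-d\|\to 0$ exponentially and uniformly for $y$ ranging over the (bounded) simplex.

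With these ingredients in hand, define
\[
H:\overline{\reach_B(x)}\times[0,1]\to\overline{\reach_B(x)},\qquad H(y,s):=\begin{cases} e^{-\tfrac{s}{1-s}B}\,y & s\in[0,1),\\[1mm] d & s=1. \end{cases}
\]
By forward invariance $H$ takes values in $\overline{\reach_B(x)}$; it is smooth on $[0,1)$ by continuity of the flow, and continuous at $s=1$ thanks to the uniform convergence above combined with $s/(1-s)\to\infty$ as $s\to 1^-$. Since $H(y,0)=y$ and $H(y,1)\equiv d$, the map $H$ realises $\overline{\reach_B(x)}$ as contractible onto $\{d\}$.

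The only non-trivial step is verifying the uniform convergence at $s=1$, but this is a routine consequence of the ergodicity of the Markov semigroup generated by $-B$ already used in Lem.~\ref{lemma:e-d-reach}; everything else is bookkeeping, and none of the elaborate structural results of the qutrit section (stabilisable set, extremal vector fields) are invoked.
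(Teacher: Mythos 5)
Your proof is correct and follows essentially the same route as the paper's: a deformation retraction along a dissipative flow, reparametrised from $[0,\infty)$ to $[0,1)$, onto an attractive fixed point that Lem.~\ref{lemma:e-d-reach} guarantees lies in $\overline{\reach_B(x)}$. The only differences are that the paper retracts onto $\tfrac13\mathbbm e$ using the symmetrised generator $\hat B$ whereas you retract onto $d$ using $B$ itself (which makes forward invariance of $\overline{\reach_B(x)}$ even more immediate), plus a small sign slip on your part: on the trace-zero subspace it is $B$, not $-B$, whose eigenvalues have strictly positive real part, which is what makes $e^{-tB}$ contract there.
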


\begin{proof}
Consider the map $F:\overline{\reach_B(x)}\times [0,1]\to \overline{\reach_B(x)}$ defined by
$$
F(y,t)=\begin{cases}
e^{-\hat B f(t)} y &\text{ if } t<1\\
\tfrac13\mathbbm e &\text{ else, }
\end{cases}
$$
where $\hat B$ is defined as in the proof of Lem.~\ref{lemma:e-d-reach}
and $f:[0,1)\to [0,\infty)$ is any homeomorphism.
It follows from the same lemma that $F$ is continuous, and hence a (strong) deformation retraction.
\end{proof}

\begin{lemma} \label{lemma:extremals}
Invoke Assumption \textbf{A} and further assume that $d\in\Delta^{2}_\downarrow$ where $\Delta^{2}_\downarrow$ denotes the ordered Weyl chamber of the simplex.
The left extremal solution starting from $d$ lies in the complement of the interior of the stabil{is}able region and terminates in the boundary of the Weyl chamber in finite time, without leaving the (classical) major{is}ation polytope of $d$.
The analogous result holds for the right extremal solution.
\end{lemma}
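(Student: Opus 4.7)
The plan is to verify each of the three claims separately, invoking the machinery developed in the preceding subsections.

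For the invariance under the majorisation polytope: since $d\in\Delta^{2}_\downarrow$ is already sorted decreasingly, $d$ lies in its own ordered past cone in the sense of~\eqref{eq:outwards_maj}. By Thm.~\ref{thm:general_dmaj_bound} this gives $\overline{\reach_B(d)}\subseteq M_{\mathbbm e}(d)$. Any left extremal trajectory through $d$ is in particular a solution of the differential inclusion~\eqref{eq:diff-incl-toy-model} with initial condition $d$, hence remains inside $M_{\mathbbm e}(d)$ for all time.

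For containment in $\Delta^{2}\setminus\mathrm{int}(\stab_B)$, I would argue by contradiction. Suppose the trajectory crosses $\partial\stab_B$ at some first time and let $x^\ast$ be that crossing point. At $x^\ast\in\partial\stab_B$ the convex cone generated by $\derv(x^\ast)$ is a closed half-plane (this is exactly the ``transition'' between the pointed and the full-plane cases that distinguish non-stabilisable from interior-stabilisable points). Its supporting functional $\alpha$ vanishes precisely on the two extremal rays of the cone. A strict inward crossing into $\mathrm{int}(\stab_B)$ would require the left extremal derivative at $x^\ast$ to satisfy $\alpha(\cdot)>0$, contradicting the fact that this derivative lies on the boundary of $\mathrm{conv}(\derv(x^\ast))$ and hence in $\ker\alpha$. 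Consequently no first crossing point can exist, and the trajectory stays in the complement of $\mathrm{int}(\stab_B)$.

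For finite-time termination at $\partial\Delta^{2}_\downarrow$, one uses compactness and the absence of recurrence. On the compact set $M_{\mathbbm e}(d)\cap(\Delta^{2}\setminus\mathrm{int}(\stab_B))$, the left extremal vector field is well-defined and does not vanish away from the permutations of $d$; by continuity its speed is bounded away from zero on a neighbourhood of the trajectory leaving $d$. By Lem.~\ref{lemma:contract} the reachable set $\overline{\reach_B(d)}$ is contractible, which together with the two-dimensionality of the state space, the uniqueness of extremal solutions, and the fact that distinct extremal trajectories cannot cross (cf.~the remark after the preceding lemma based on \cite[Thm.~5.6]{Smirnov02}) rules out periodic orbits and hence recurrence within the Weyl chamber. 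The trajectory must therefore leave $\Delta^{2}_\downarrow$, and by the uniform lower bound on speed it does so in finite time.

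The main obstacle is the careful analysis at the point $d$ itself. Since $-Bd=0\in\derv(d)$, the cone $\mathrm{conv}(\derv(d))$ contains $0$ on its boundary and the notion of ``left extremal direction at $d$'' is subtle: one must verify that leaving $d$ along this direction produces a genuine motion away from $d$ (rather than the stationary solution) and that the resulting outgoing direction agrees with the tangent to the boundary arc of $\stab_B$ parametrised in the previous subsection by $\alpha^\lambda$. This identification is what ensures, in paragraph two, that the trajectory initially travels along $\partial\stab_B$ rather than jumping across it, and it is the one step that cannot be handled by soft topological arguments alone.
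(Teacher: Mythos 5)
Your first step (invariance of the classical majorisation polytope via Thm.~\ref{thm:general_dmaj_bound}, equivalently Cor.~\ref{coro:thm_3g}) and your third step (finite-time exit from the Weyl chamber via absence of fixed points and of periodic orbits on a compact region) are in line with what the paper does. The gap is in your second step, which is the heart of the lemma. You argue that at a first contact point $x^\ast\in\partial\,\stab_B$ the cone generated by $\derv(x^\ast)$ is a half-plane $\{\alpha\geq 0\}$, that the left extremal derivative lies in $\ker\alpha$, and that entering $\mathrm{int}(\stab_B)$ "would require $\alpha(\cdot)>0$". This silently identifies the half-plane $\{\alpha>0\}$ in \emph{velocity} space with the side of $x^\ast$ on which $\mathrm{int}(\stab_B)$ lies. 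These are different objects: $\alpha$ is the supporting functional of the set of achievable velocities, whereas the local position of $\stab_B$ is governed by the normal to the curve $\lambda\mapsto\ker(\alpha^\lambda_{\mathrm{id}})\cap\ker(\alpha^\lambda_{\tau_{23}})\cap\Delta^2$, i.e.\ by derivatives of the functionals $\alpha^\lambda_\pi=\alpha^\lambda\circ(-\pi B\pi^{-1})$ with respect to $x$ and $\lambda$, not by $\alpha^\lambda$ itself. There is no reason for $\ker\alpha^\lambda$ to be tangent to $\partial\,\stab_B$ at $x^\ast$, and the paper's own figures (the distinct solid/dotted curves in Fig.~\ref{fig:reach-stab}, and the fact that the ``D''-region in Fig.~\ref{fig:extremal-vfs-and-reach} has nonempty interior) show that the extremal trajectory from $d$ does \emph{not} run along the bounding conic, contrary to the picture in your closing paragraph. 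Even granting tangency, a velocity tangent to a curve at a first contact point does not preclude crossing it; one would need a viability (Nagumo-type) argument valid along the whole boundary.

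The paper avoids this entirely by a global argument on the ``D''-shaped region $D$ bounded by the edge of $M_{\mathbbm e}(d)$ between $d$ and $\tau_{23}d$ and by the arc of $\partial\,\stab_B$ joining them: since $D$ contains no fixed points, every solution leaves $D$ in finite time, and by the polytope invariance it must do so through the curved part, i.e.\ through $\partial\,\stab_B$. On that arc the cone of achievable derivatives is a half-plane, so the left and right extremal directions are opposite and only one type of extremal solution can terminate at each point; continuity (solutions started near $\tau_{23}d$ end on the right) and the $d\leftrightarrow\tau_{23}d$ symmetry then force all left extremal solutions, in particular the one from $d$, to terminate on the right portion of the arc, beyond the Weyl chamber wall. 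If you want to keep your local approach, you would have to compute the actual normal of $\partial\,\stab_B$ and compare it with the extremal derivative, which is considerably more work than the paper's topological route.
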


\begin{proof}
From Cor.~\ref{coro:thm_3g} we know that the left extremal solution remains in the majorisation polytope of $d$. 
Let us sketch why this solution cannot enter the set of stabilisable points in $\Delta^{2}_\downarrow$.
Consider the connected region containing $d$ and $\tau_{23}d$ which is bounded by the majorisation polytope and the set of stabilisable points, and is shaped like a ``D'' lying on its belly, so let's call it $D$, see right panel of Fig.~\ref{fig:extremal-vfs-and-reach}.

First note that since there are no fixed points in $D$, every solution reaches the boundary of $D$ in finite time, and by the above it reaches the curved part of the boundary of $D$. Now consider the straight part of the boundary, between $d$ and $\tau_{23}d$. The solutions starting from points close to $\tau_{23}d$ will reach the curved boundary of $D$ on the right side. Hence by continuity all points on the straight part of the boundary have solutions ending up on a connected part of the curved boundary. However, as we have seen before, on the boundary of the set of stabilisable points, the cone generated by the achievable derivatives is a half plane, and hence the left and right extremal vector fields point in opposite directions. 
Therefore in general only one kind of solution can terminate in each point. By symmetry and the above connectedness, all left extremal solutions must terminate on the right side of $D$,
and analogously for the right extremal solution. As noted above, no solution can leave the region delimited by the extremal solutions.
\end{proof}

\begin{thm} \label{thm:reach-d}
The left and right extremal solutions starting at $d$ separate the Weyl chamber into two parts, and the inner part containing $\frac13\mathbbm e$ equals the intersection of $\overline{\reach_B(d)}=[d]$ with $\Delta^{2}_\downarrow$.
\end{thm}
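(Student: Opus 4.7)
The plan is to verify both inclusions separately after reducing the problem via permutation symmetry. First, $\overline{\reach_B(d)}$ is invariant under the full symmetric group $S_3$, since each $\pi\in S_3$ is an admissible instantaneous control; hence every reachable point is, up to a final permutation, the (limit) endpoint of a trajectory of the differential inclusion $\dot x\in\conv(\derv(x))$ starting at $d$, and it suffices to characterise $\overline{\reach_B(d)}\cap\Delta^{2}_\downarrow$. Write $R$ for the inner component of $\Delta^{2}_\downarrow$ enclosed by the two extremal arcs from $d$; that $R$ is well-defined and contains $\tfrac{1}{3}\mathbbm e$ follows from Lem.~\ref{lemma:extremals} (the arcs stay inside $M_{\mathbbm e}(d)$ and terminate on $\partial\Delta^{2}_\downarrow$ without re-meeting) together with $\tfrac{1}{3}\mathbbm e\in\stab_B$ lying below both extremals.

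For $\overline{\reach_B(d)}\cap\Delta^{2}_\downarrow\subseteq R$, the plan is to exploit the non-crossing property recalled just before the theorem statement. After reducing to trajectories of the differential inclusion that are reflected back into $\Delta^{2}_\downarrow$ by the appropriate transpositions whenever they would otherwise cross a wall (each such reflection replaces $B$ by a conjugate $\pi B\pi^{-1}$, still admissible), Cor.~\ref{coro:thm_3g} confines every trajectory to $M_{\mathbbm e}(d)$, and the crossing lemma prevents any such solution from crossing the left (respectively right) extremal arc from $d$ from the wrong side. Any trajectory from $d$ that remains in $\Delta^{2}_\downarrow$ is therefore trapped inside $R$, and so is its endpoint.

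For $R\subseteq\overline{\reach_B(d)}$, I would argue constructively that the extremal arcs themselves lie in $\reach_B(d)$ (they are generated by the admissible controls $-\pi B\pi^{-1}$), and then reach an arbitrary $y\in\mathrm{int}(R)$ by a bang-bang construction: follow the left extremal out of $d$ for time $s\in[0,T_\ell]$ to some point $\xi(s)$, then switch to another admissible direction aimed into $R$ and integrate. Continuity of the flow in $s$ and in the switching time shows the resulting two-parameter family of endpoints sweeps a relatively open subset of $R$ attached to the extremal; iterating this construction and invoking the contractibility of $\overline{\reach_B(d)}$ from Lem.~\ref{lemma:contract} to rule out internal holes should force the orbit to fill all of $R$. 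A cleaner but more abstract alternative is to invoke the standard fact that, for planar control systems, the boundary of a compact reachable set consists of extremal arcs; then the boundary of $\overline{\reach_B(d)}$ in $\Delta^{2}_\downarrow$ must coincide with the two extremals emanating from $d$, forcing equality with $R$.

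The hard part will be making the $\supseteq$ direction fully rigorous, specifically ruling out interior ``holes'' in $\overline{\reach_B(d)}$ and pinning down its boundary as precisely the union of the two extremal arcs. Two delicate spots are the behaviour at $d$ itself, where the direction field is in general only continuous and several extremal directions emerge from a single point, and the termination points of the arcs on $\partial\Delta^{2}_\downarrow$, where a transposition has to be applied to re-enter the Weyl chamber; in both places a careful limiting and perturbation argument is required to ensure that the bang-bang construction neither accumulates away from $\overline{\reach_B(d)}$ nor misses a neighbourhood near the extremals.
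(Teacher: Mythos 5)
Your proposal is correct and follows essentially the same route as the paper: the paper's (very terse) proof likewise observes that the extremal solutions in the ordered Weyl chamber together with their permuted copies form a closed curve that is contained in $\overline{\reach_B(d)}$ and, by the non-crossing property, surrounds it, and then invokes the contractibility from Lem.~\ref{lemma:contract} to conclude that the enclosed region has no holes and hence equals $\overline{\reach_B(d)}$. Your more detailed treatment of the two inclusions (permutation reduction, confinement via Cor.~\ref{coro:thm_3g} and the crossing lemma, bang-bang sweeping plus contractibility) is a fleshed-out version of exactly this argument.
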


\begin{proof}
Taking the extremal solutions in the ordered Weyl chamber and all of the permuted copies yields a closed curve surrounding and contained in $\overline{\reach_B(d)}$, i.e.~they form the ``outer boundary'' of the reachable set.
By Lem.~\ref{lemma:contract} $\overline{\reach_B(d)}$ is contractible, hence it is equal to the region enclosed by this curve.
\end{proof}


\begin{corollary} \label{coro:stab-reach}
For every $x\in\stab_B$ it holds that $x\sim d$.
\end{corollary}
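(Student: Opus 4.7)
The plan is to prove both directions of $x \sim d$ separately. The direction $d \twoheadleftarrow x$ is immediate from Lem.~\ref{lemma:e-d-reach}, which guarantees $d \in \overline{\reach_B(y)}$ for every $y \in \Delta^{n-1}$. Thus the entire substance of the argument lies in the reverse direction $x \twoheadleftarrow d$, i.e.\ establishing $x \in \overline{\reach_B(d)}$ for every stabil{is}able $x$.

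To this end I would first exploit the $S_3$-symmetry of the toy model---permutations act instantaneously in~\eqref{eq:control-simplex_evolution} and $\stab_B$ is $S_3$-invariant---to reduce to the case $x \in \stab_B \cap \Delta^{2}_\downarrow$. By Thm.~\ref{thm:reach-d}, the set $\overline{\reach_B(d)} \cap \Delta^{2}_\downarrow$ coincides with the inner region enclosed by the left and right extremal solutions emanating from $d$. The claim thereby reduces to a purely geometric statement: every stabil{is}able state in the ordered Weyl chamber lies on the same side of both extremal solutions as $\tfrac{1}{3}\mathbbm e$.

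The key step is a separation argument built on two observations. First, $\tfrac{1}{3}\mathbbm e$ is itself stabil{is}able: the symmetric combination $\tfrac{1}{|S_3|}\sum_{\pi \in S_3}(-\pi B\pi^{-1})\tfrac{1}{3}\mathbbm e$ is $S_3$-invariant and has vanishing entry sum (since $\mathbbm e^\top B = 0$ by column-stochasticity of $-B$), so it equals zero and hence $0 \in \conv(\derv(\tfrac{1}{3}\mathbbm e))$. Second, by Lem.~\ref{lemma:extremals} the two extremal solutions from $d$ lie in the closed complement of $\mathrm{int}(\stab_B)$. Consequently any continuous path inside $\stab_B$ joining $\tfrac{1}{3}\mathbbm e$ to $x$ cannot cross either extremal into the outer region, forcing $x$ into the inner region; combined with $d \twoheadleftarrow x$ this yields $x \sim d$.

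The main obstacle I anticipate is supplying such a connecting path, i.e.\ justifying the path-connectedness of $\stab_B$. Under Assumption~\textbf{A} the explicit parametrisation of $\partial \stab_B$ by six conic arcs joining the permutations of $d$ (computed earlier in this section) already exhibits $\stab_B$ as a topological disc, so path-connectedness comes for free; for a general admissible $B$ in the qutrit setting one instead falls back on the fact that $\stab_B$ is a closed, $S_3$-symmetric subset of the two-dimensional simplex defined by the continuous convex-hull membership condition $0 \in \conv(\derv(\,\cdot\,))$, and verifies connectedness directly in this low-dimensional ambient space. Once path-connectedness is in hand, the separation argument above closes the proof.
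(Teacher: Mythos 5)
Your proof is correct and follows essentially the same route as the paper: $d\twoheadleftarrow x$ comes from Lem.~\ref{lemma:e-d-reach}, and $x\twoheadleftarrow d$ from Thm.~\ref{thm:reach-d} combined with Lem.~\ref{lemma:extremals}. The only difference is that you spell out---via the stabilisability of $\tfrac13\mathbbm e$ and the path-connectedness of $\stab_B$---why the region enclosed by the extremal solutions contains $\stab_B$, a containment the paper's proof asserts directly from Lem.~\ref{lemma:extremals} without further argument.
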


\begin{proof}
By Lem.~\ref{lemma:e-d-reach} it holds that $d\twoheadleftarrow x$.
Thm.~\ref{thm:reach-d} shows that $\overline{\reach_B(d)}$ is the set enclosed by the left and right extremal solutions  (and their permuted copies) starting at $d$ and ending in the boundary of the Weyl chamber. Moreover Lem.~\ref{lemma:extremals} shows that this set contains $\stab_B$ and hence $x\twoheadleftarrow d$.
\end{proof}


For starting points other than $d$, a similar result holds.

\begin{corollary}
For any point $x$ outside of $[d]$, we can compute the boundary of $\overline{\reach_B(x)}$ in $\Delta^{2}_\downarrow$ by following the left and right extremal solutions until we hit either the boundary of the Weyl chamber or $[d]$. Moreover, the left extremal solution can only terminate in the right boundary of the Weyl chamber or in the left boundary of $[d]$ and vice-versa.
\end{corollary}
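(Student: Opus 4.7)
The plan is to mimic Theorem~\ref{thm:reach-d}, but with the starting point $x$ in place of $d$ and with the extremal solutions allowed to terminate on $\partial[d]$ in addition to $\partial\Delta^{2}_\downarrow$. Since $x\notin[d]$ and Corollary~\ref{coro:stab-reach} places the interior of $\stab_B$ inside $[d]$, the left and right extremal direction fields are well-defined and locally Lipschitz continuous in a neighbourhood of $x$, so the two extremal solutions through $x$ exist and are unique until they reach either $\partial\Delta^{2}_\downarrow$ or $\partial[d]$.

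Next I would establish termination in finite time. On $\Delta^{2}_\downarrow\setminus[d]$ the extremal vector fields are non-vanishing, because any zero would be a stabilisable point outside $[d]$, contradicting Corollary~\ref{coro:stab-reach}. The domain $\Delta^{2}_\downarrow\setminus[d]$ is planar and simply connected (since $[d]$ meets $\partial\Delta^{2}_\downarrow$ at the permutations of $d$ that lie on the edges of the Weyl chamber), so the Poincar\'e--Bendixson-type argument used in the proof of Theorem~\ref{thm:reach-d} excludes periodic orbits and forces each extremal trajectory to exit through the boundary in finite time.

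For the \emph{vice versa} claim I would analyse the convex cone $\mathrm{conv}(\derv(y))$ along each boundary piece. Along the two edges of $\Delta^{2}_\downarrow$ the cone is pointed and tilted in the direction dictated by the permutation which fixes that edge, so, locally, only one of the two extremal derivatives points into the chamber; this forces the left extremal solution, which by definition progresses along the left side of the cone, to exit through the \emph{right} edge rather than the left one. The boundary of $[d]$ is itself traced out by extremal trajectories through $d$ and its permutations, cf.~Theorem~\ref{thm:reach-d}, so by the non-crossing property of extremal solutions (as in the proof of \cite[Thm.~5.6]{Smirnov02} quoted above) the left extremal through $x$ can only attach to the left-terminating portion of $\partial[d]$. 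The right extremal is handled symmetrically.

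Finally I would close the region: the no-crossing property ensures that no solution starting at $x$ can leave the region enclosed by the two extremals together with the pieces of $\partial\Delta^{2}_\downarrow$ and $\partial[d]$ on which they terminate; conversely every interior point of this region is reachable by suitable convex combinations of the two extremal controls, and every point of $[d]$ that is touched is swallowed whole by Lemma~\ref{lemma:e-d-reach} and the definition of the equivalence class. Contractibility of $\overline{\reach_B(x)}$ from Lemma~\ref{lemma:contract} then forces equality with the enclosed region. The hardest step will be making the orientation-based exclusion of ``wrong'' termination points rigorous, because $\partial[d]$ is itself composed of extremal trajectories and one must rule out tangential contact between the new extremals through $x$ and these boundary trajectories; once the direction fields are shown to be transverse everywhere except possibly at the permuted copies of $d$, the rest of the argument is a straightforward two-dimensional topological bookkeeping.
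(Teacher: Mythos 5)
Your proposal is correct and follows essentially the same route as the paper: termination of the extremal solutions at the Weyl-chamber boundary or at $[d]$ (the paper gets this from $d>0$ plus the previously established absence of periodic orbits, which is the same Poincar\'e--Bendixson-type fact you invoke), and the orientation claim from the non-crossing of integral curves combined with the symmetry of the cone $\conv(\derv(y))$ on the symmetry lines of the simplex, where it opens towards $\tfrac13\mathbbm e$. The extra steps you spell out (Lipschitz direction fields off $\stab_B$, contractibility of $\overline{\reach_B(x)}$ to identify the reachable set with the enclosed region) are exactly the ingredients the paper established in Lem.~\ref{lemma:contract}, Thm.~\ref{thm:reach-d} and Cor.~\ref{coro:stab-reach} and implicitly reuses here.
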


\begin{proof}
Since $d>0$ no solution tends to the boundary of the simplex. Hence the left and right extremal solutions must terminate in the boundary of the Weyl chamber or of $[d]$. The fact that the left extremal solution can only terminate in the right boundary of the Weyl chamber or in the left boundary of $[d]$ follows from the fact that integral curves do not intersect and the fact that on the symmetry lines of the simplex, the cone of achievable derivatives opens towards $\frac13\mathbbm e$.
\end{proof}

The more general case\footnote{cp.\ also Rem.~\ref{rem_equidist_necessary}} 
of $B$ {\em not} satisfying Assumption~\textbf{A} can be treated with similar methods. Note, however, that many of our arguments rely on the fact that the state space is two dimensional, and hence it is not clear how to analytically determine stabilisable and reachable sets in higher dimensions.
\section{Conclusions}

We have analysed quantum control systems with thermal resources, where one combines
coherent unitary controls with a switchable coupling to a thermal bath
as additional control.
To this end, we have characterised (the closure of) thermal operations ($\overline{\mathsf{TO}}$), 
enhanced thermal operations ($\mathsf{EnTO}$), 
as well as Gibbs-preserving maps ($\mathsf{Gibbs}$)
within the framework of Lie-semigroup theory. Thereby we could determine (up to Conj.~\ref{conj1}) the structure of the respective
semigroups, their Lie wedges as well as their edges. It is important to note that
the Lie wedges in turn {\em define} 
  the corresponding {\em Markovian} counterparts as the corresponding {\em Lie semigroups}, to wit $\mathsf{MTO}$, $\mathsf{MEnTO}$, $\mathsf{MGibbs}$.
A worked qubit example illustrates how Markovian thermal operations sit inside all thermal operations 
by means of an explicit parametrisation.
In case of single qubits the Markovian thermal operations $\mathsf{MTO}$
even exhaust the entire set of thermal operations $\overline{\mathsf{TO}}$ in the zero-temperature limit.
On a general scale, Thm.~\ref{thm:markov_generator} provides an explicit construction for 
(possibly all) generators of Markovian thermal
operations via temperature-weighted projections out of a total Hamiltonian 
(preserving energy of system and bath) in the Stinespring dilation.

In view of studying reachable sets, the semigroup techniques naturally match with general
concepts of majorisation ($d$-majorisation). For the evolution
of diagonal states under such controlled Markovian dynamics, we have upper bounded 
the reachable sets by inclusions within the standard simplex $\Delta^{n-1}$: they can readily be given in terms 
of the convex hull of extreme points of the $d$-majorisation polytope. 
---
Finally, for the qutrit case, we have explicitly determined and illustrated the geometry of both,
the reachable set and the stabil{is}able set
by techniques of differential inclusion.

%
\section{Outlook}
There are several ways to generalise the toy model with all permutations as controls plus a specific generator 
of a one-parameter semigroup of thermal operations:

First one may define again a reduced control system still on the simplex, but now encapsulating the entire unitary control.
While our results for $T=0$ (Thms.~\ref{thm_1} \& \ref{thm_2}) immediately carry over into this generalisation\footnote{see also \cite[Cor.~5.1.12]{vE_PhD_2020}},
it is not obvious how to adapt our upper bounds (Thm.~\ref{thm:general_dmaj_bound}) and if analytic solutions in the three-dimensional case (Sec.~\ref{sec:Qutrit-Results}) are still obtainable.
Studying the interplay between Markovian operations and general unitary dynamics will be the subject of 
future work~\cite{MEDS23}.

Next one could allow for arbitrary Markovian thermal operations together with all unitary maps---the generated semigroup of which we denote by 
$\mathsf{MTO}_{\mathsf{U}}(H_0,T)$ ($\mathsf{MTO}_{\mathsf{U}}$ in abuse of notation)---and 
ask for best approximations to the corresponding reachable sets given by the semigroup orbit ${\mathsf{MTO}_{\mathsf{U}}(\rho_0)}$,
or analogously ${\mathsf{MEnTO}_{\mathsf{U}}(\rho_0)}$ or ${\mathsf{MGibbs}_{\mathsf{U}}(\rho_0)}$ on a general scale.
For non-zero temperatures, this question boils down to feasible state transfers under $\mathsf{MTO}(H_0,T)$ beyond
the simple bath dynamics of Cor.~\ref{coro_ladder_ops_are_TO}.
Such a setting would generalise 
\cite{LosKor22a} which itself characterised the reachable set $\mathsf{MEnTO}(\rho_0)$\,\footnote{
Note that this would be ${\mathsf{MEnTO}_{\mathsf{U}}(\rho_0)}$ if one allowed for \textit{all} unitary maps instead of just those with $H_0$ as fixed point.
}
for $\rho_0$ quasi-classical and led to a ``Markovian'' generalisation of $d$-majorisation.

Finally lifting considerations to the operator level, the single-qubit observation in this work that
in the zero-temperature limit {\em Markovian} thermal operations converge to general 
thermal operations $\mathsf{MTO}(H_0,T) \to \overline{\mathsf{TO}(H_0,T)}$ begs the question 
what happens in the general case with $\mathsf{MTO}(H_0,T)$ 
as $T \to 0^+$.
---
Beyond Markovianity, in analogy to above take $\overline{\mathsf{TO}}_{\mathsf U}(H_0,T)$ (again $\overline{\mathsf{TO}}_{\mathsf U}$ for short) as the smallest semigroup 
now embracing all unitary operations as well as all thermal operations $\overline{\mathsf{TO}(H_0,T)}$.
%
While the reachable sets $\overline{\mathsf{TO}}_{\mathsf U}(\rho_0)$ are known\footnote{$\overline{\mathsf{TO}}_{\mathsf U}$ acts (approximately) transitively~(i) on the set of all density operators for all $T\in[0,\infty)$ \cite[Prop.~4.12]{vomEnde20Dmaj}
and~(ii) on the set of all states majorised by the initial state if $T=\infty$ \cite[Prop.~5.2.1]{vE_PhD_2020}.}
(also see the first teaser of the introduction), $\overline{\mathsf{TO}}_{\mathsf U}$ is not yet explored on the level of quantum maps either.

\medskip
All these generalisations would help to understand
how Markovianity interrelates with quantum thermodynamics at large.
\newpage

\section*{Appendix \app{A}: A Simple Proof of Proposition~\ref{prop:Lie-wedge}~(iv)}
While $\subseteq$ in Eq.~\eqref{eq:tangent-cone}
is obvious, for $\supseteq$
choose $\gamma \in C^1$ 
as in the r.h.s.~of \eqref{eq:tangent-cone}
with $\dot{\gamma}(0) =: A$. Then given any $t \geq 0$ sufficiently small we compute
\begin{equation*}
n\log \big(\gamma(\tfrac{t}{n})\big)
= n \log \big(\operatorname{id} + \tfrac{t}{n}A + {\scriptstyle \mathcal{O}}(\tfrac{t}{n})\big)
 = n \big(\tfrac{t}{n}A + {\scriptstyle \mathcal{O}}(\tfrac{t}{n})\big) \to tA 
\end{equation*}
as $n\to\infty$.
Be aware that we are allowed to apply the logarithm to $\gamma(\frac{t}{n})$ and, more importantly, $(\gamma(\frac{t}{n}))^n$ because
\begin{equation*}
\begin{split}
  \|\operatorname{id} & -( \gamma(\tfrac{t}n))^n\| = \|\operatorname{id} - (\operatorname{id} + \tfrac{t}n A + {\scriptstyle \mathcal{O}}(\tfrac{t}{n}))^n\|\\
  & = \|\operatorname{id} - \operatorname{id} - n \big(\tfrac{t}n A + {\scriptstyle \mathcal{O}}(\tfrac{t}{n})\big) - { n \choose 2} \big(\tfrac{t}n A + {\scriptstyle \mathcal{O}}(\tfrac{t}{n})\big)^2 - \dots \|\\
  & \leq  \| At + n {\scriptstyle \mathcal{O}}(\tfrac{t}{n})\| \,\cdot\, \| \operatorname{id} + n^{-1}{ n \choose 2} \big(\tfrac{t}n A + {\scriptstyle \mathcal{O}}(\tfrac{t}{n})\big) - \dots\| \leq Ct (\| A\| + \varepsilon)
\end{split}
\end{equation*}
for all $n$ sufficiently large as $t$ was chosen to be suitably small. Here $C$ can be obtained by a brute force estimate of the remaining terms of the binomial formula.
Now because $S$ is a closed semigroup we are able to conclude that
\begin{equation*}
  \lim_{n \to \infty}\gamma\big(\tfrac{t}{n}\big)^n = \lim_{n \to \infty}e^{\log \big((\gamma(\tfrac{t}{n}))^n\big)} = \lim_{n \to \infty}e^{n\log (\gamma(\tfrac{t}{n}))} =  e^{tA}
\end{equation*}
is in $S$. Thus $(e^{tA})_{t\geq 0}\subseteq S$, again due to the semigroup property of $S$.

\section*{Appendix \app{B}: Proof of Theorem~\ref{thm_lie_global}}

  (a): The result can be found in \cite[Thm.~V.1.13]{HHL89} (again under a much more general setting). Here the
  straightforward proof under our assumptions: let $S \subseteq\mathcal B(\mathcal Z)$ be a closed subgroup and $\mathsf{L}(S)$
  its Lie wedge. Then $S_0$ is obviously a closed subsemigroup contained in $S$ and thus $\mathsf{L}(S_0) = \mathsf{L}(S)$. This implies
$
    S_0 = \overline{\langle \exp\big(\mathsf{L}(S)\big) \rangle}_{\rm SG} = \overline{\langle \exp\big(\mathsf{L}(S_0)\big) \rangle}_{\rm SG}\,,
$
  i.e.~$S_0$ is a Lie subsemigroup. Moreover, let $S'$ be any other Lie subsemigroup contained in $S$. Then one has
  $\mathsf{L}(S') \subseteq \mathsf{L}(S)$ and thus
$
    S' = \overline{\langle \exp(\mathsf{L}(S')) \rangle}_{\rm SG} \subseteq \overline{\langle \exp(\mathsf{L}(S)) \rangle}_{\rm SG} =S_0\,.
$
  Hence $S_0$ is the largest Lie subsemigroup of $S$.
   (i): Moreover, let $S'$ be any other Lie subsemigroup contained in $S$. Then one has
  $\mathsf{L}(S') \subseteq \mathsf{L}(S)$ and thus
$
    S' = \overline{\langle \exp(\mathsf{L}(S')) \rangle}_{\rm SG} \subseteq \overline{\langle \exp(\mathsf{L}(S)) \rangle}_{\rm SG} =S_0\,.
$
Hence $S_0$ is the largest Lie subsemigroup of $S$. (ii): For piecewise constant controls, the reachable set of the identity
  obviously coincides with the semigroup generated by $\exp\big(\mathsf{L}(S))$. Moreover, as any locally integrable functions
  can be ($L^1$-norm) approximated on bounded intervals by piecewise constant functions, we conclude that the closure
  of the reachable set $\overline{\mathfrak{reach}({\rm id})}$ equals $\overline{\langle \exp\big(\mathsf{L}(S)\big) \rangle_{\rm SG}}= S_0$.
  
  (b): The case $S \subseteq\GL(\mathcal Z)$ is treated in \cite[Cor.~VI.5.2]{HHL89}; the case $S \subseteq\mathcal B(\mathcal Z)$
  follows readily from the fact that for every Lie subsemigroup the set of its invertible elements (i.e.~$S \cap \GL(\mathcal Z)$)
  is dense in
  $S$.  

\section*{Appendix \app{C}: Proof Idea of Equation \eqref{eq:stinespring_taylor}}

A simple way of verifying \eqref{eq:stinespring_taylor} is to expand the exponentials:
\begin{align*}
\operatorname{tr}_{\mathbb C^m}\big(e^{-itH}((\cdot)\otimes\omega)e^{itH}\big)&=\sum_{j,k=0}^\infty \operatorname{tr}_{\mathbb C^m}\Big(\frac{(-itH)^j}{j!}((\cdot)\otimes\omega)\frac{(itH)^k}{k!}\Big)\\
&=\sum_{j,k=0}^\infty \frac{(-1)^j(it)^{j+k}}{j!k!}   \operatorname{tr}_{\mathbb C^m}\big(H^j((\cdot)\otimes\omega)H^k\big)\\
&= \sum_{\ell=0}^\infty(it)^\ell\sum_{j=0}^\ell \frac{(-1)^{j}}{j!(\ell-j)!}\operatorname{tr}_{\mathbb C^m}\big(H^j((\cdot)\otimes\omega)H^{\ell-j}\big)
\end{align*}
Therefore the first-order term is
$
-it(\operatorname{tr}_{\mathbb C^m}(H((\cdot)\otimes\omega))-
\operatorname{tr}_{\mathbb C^m}((\cdot)\otimes\omega)H))
$
which by \cite[Eqs.~(14) \& (15)]{vE22_Stinespring} equals
$
-it(\operatorname{tr}_{\omega}(H)(\cdot)-(\cdot)\operatorname{tr}_{\omega}(H))=-it\operatorname{ad}_{\operatorname{tr}_{\omega}(H)}
$.
Similarly, the second-order term comes out to be
\begin{equation}\label{eq:davies_generator}
t^2\Big(\operatorname{tr}_{\mathbb C^m}\big(H((\cdot)\otimes\omega)H\big) -\frac12\operatorname{tr}_{\omega}(H)(\cdot)-\frac12(\cdot)\operatorname{tr}_{\omega}(H) \Big)\,.
\end{equation}
Defining $\Phi_H:=\operatorname{tr}_{\mathbb C^m}(H((\cdot)\otimes\omega)H) $---which is completely positive---the second factor from Eq.~\eqref{eq:davies_generator} can be re-written as $\Phi_H-\frac{\Phi_H^*(\mathbbm1)}{2}(\cdot)-(\cdot)\frac{\Phi_H^*(\mathbbm1)}{2}$.
This is known to be the generator of a quantum-dynamical semigroup,
and one recovers Eq.~\eqref{eq:lindblad_V} by choosing the $V_j$ as Kraus 
operators of $\Phi_H$, see \cite[Ch.~9, Thm.~4.2 \& Eq.~(4.16)]{Davies76}.
A straight-forward computation shows that a set of Kraus operators of $\Phi_H$ is given by $(\sqrt{r_k}\operatorname{tr}_{|g_k\rangle\langle g_j|}(H))_{j,k=1}^m$.
Altogether this yields \eqref{eq:stinespring_taylor}.

\section*{Appendix \app{D}: Periodic Solutions in the Qutrit System}

We show that in the qutrit case, periodic solutions enclose a stabilisable point
which---as we will see below---implies that 
$[d]=[\frac13\mathbbm e]$ is the only non-trivial equivalence class.
We work in the setting of Sec.~\ref{sec:Qutrit-Results}~using Assumption \textbf{A}, in particular we think of the control system being given in the form of the differential inclusion~\eqref{eq:diff-incl-toy-model}.

\begin{lemma} \label{lemma:periodic-stab}
Let $x:S^1\to\Delta^2$ be a smooth, periodic, injective solution of the differential inclusion with non-vanishing derivative. Then the region enclosed by $x$ contains a stabilisable point.
\end{lemma}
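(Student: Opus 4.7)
My plan is to argue by contradiction using a winding-number (Poincaré--Hopf/Umlaufsatz) argument, exploiting the \emph{convexity} of the relaxed inclusion~\eqref{eq:diff-incl-toy-model}. Denote by $R$ the open region enclosed by $x(S^1)$ in the $2$-simplex, so that $\overline R$ is homeomorphic to a closed disk with boundary $x(S^1)$, and suppose for contradiction that no point of $\overline R$ is stabilisable, i.e.\ $0 \notin F(y) := \conv(\derv(y))$ for every $y \in \overline R$.

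To obtain a concrete non-vanishing selection of $F$, I use the symmetrised drift
\[
 g(y) := -\hat B\, y, \qquad \hat B := \tfrac{1}{n!}\sum_{\pi \in S_n} \pi B \pi^{-1},
\]
which is a fixed convex combination of the elements of $\derv(y)$, hence a continuous selection $g(y) \in F(y)$ on the whole simplex. As established in the proof of Lem.~\ref{lemma:e-d-reach}, the unique zero of $g$ in $\Delta^{n-1}$ is $\tfrac{1}{n}\mathbbm e$; by hypothesis this point does not lie in $\overline R$, so $g$ is a nowhere-vanishing continuous vector field on the closed disk $\overline R$. Consequently the normalised field $g/\|g\|$ extends continuously from $\partial R = x(S^1)$ into all of $\overline R$, which forces its restriction to $\partial R$ to have winding number $0$ around the origin.

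On the other hand, $x : S^1 \to \Delta^2$ is a simple $C^1$ Jordan curve with non-vanishing derivative, so Hopf's \emph{Umlaufsatz} gives winding number $\pm 1$ for the tangent map $\dot x/\|\dot x\| : S^1 \to S^1$. The decisive step is that along $x(S^1)$ both $g(x(t))$ and $\dot x(t)$ lie in the convex set $F(x(t))$, so the straight-line homotopy $s \mapsto (1-s)\, g(x(t)) + s\, \dot x(t)$ remains inside $F(x(t))$ for all $s \in [0,1]$ and therefore avoids $0$. This homotopy lives in $\R^2 \setminus \{0\}$ (the tangent space of the simplex, since every element of $\derv(y)$ annihilates $\mathbbm e^\top$), so the winding numbers of $g|_{x(S^1)}$ and $\dot x$ must coincide, contradicting $0 = \pm 1$.

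I expect the main obstacle to be the handling of boundary degeneracies: a stabilisable point could conceivably sit on $x(S^1)$ itself, in which case $g$ might vanish at a point of $\partial R$ and the winding-number argument has to be applied to a slightly shrunken curve inside $R$, followed by a compactness extraction to produce the desired stabilisable point in $\overline R$. A secondary subtlety is that the argument uses the convex relaxation of $\derv$ essentially---without it the homotopy step fails---which is precisely why the lemma is phrased in terms of the differential inclusion~\eqref{eq:diff-incl-toy-model}.
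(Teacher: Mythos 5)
Your proof is correct, but it takes a genuinely different route from the paper's. The paper transports the enclosed region to a closed disk (Jordan--Schoenfliess, then the Riemann mapping theorem together with a smooth boundary-extension result) and then invokes the equilibrium theorem of Aubin--Cellina for upper semicontinuous, convex-valued differential inclusions on compact convex sets, which rests on fixed-point theory. You instead give a self-contained index argument: the symmetrised drift $y\mapsto-\hat B y$ is an explicit continuous selection of $\conv(\derv(\cdot))$ whose only zero on the simplex is the (stabilisable) point $\tfrac1n\mathbbm e$, so under the contradiction hypothesis it is nowhere zero on the closed enclosed region and its boundary winding number is $0$; convexity of the values lets you homotope it to the tangent field $\dot x$ without passing through the origin, while the Umlaufsatz forces that tangent field to have turning number $\pm1$. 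This buys elementarity (no Riemann mapping, no Kakutani) and makes transparent exactly where convexity of the relaxed inclusion enters; what it costs is that the conclusion lands in the \emph{closed} enclosed region, so the stabilisable point may a priori sit on $x(S^1)$ itself --- but that is precisely what Aubin's theorem delivers too, and it suffices for the downstream use in Prop.~\ref{prop:equivalence-classes}, so your acknowledged ``shrinking'' step is not actually needed. One point worth making explicit in a final write-up is that $\dot x(t)$ and $-\hat Bx(t)$ both lie in the two-dimensional subspace $\{v:\mathbbm e^\top v=0\}$, so the winding numbers are computed for maps into the same punctured plane; and that the degree-zero claim for the selection uses that $\overline R$ is simply connected (Jordan--Schoenflies), which is the same topological input the paper uses, just without its smooth refinement.
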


\begin{proof}
This is a direct generalisation of~\cite[Ch.~5.2, Thm.~1]{Aubin84}
which states that if an upper semicontinuous differential inclusion with non-empty, closed, convex values is defined on a compact convex set and satisfies a viability condition, then it has a stabilisable point.
By the Schoenfliess Theorem, 
see~\cite[Ch.~9, Thm.~6]{Moise77}, the interior region of $x(S^1)$ is homeomorphic to an open disk, and hence by the Riemann mapping theorem, there is even a biholomorphism.
Now note that since $x$ is an injective immersion and $S^1$ is compact, it is an embedding, and hence the image is a smooth curve. 
Thus, by~\cite[Thm.~3.1]{Bell90}, the Riemann mapping extends to a diffeomorphism of the closure of the interior region to the closed disk.
Finally we can pull back the differential inclusion to the disk and apply the aforementioned theorem to find a stabilisable point.
\end{proof}

The idea of the result we want to prove is that if two points are equivalent but distinct, then they must be equivalent to some stabil{is}able point. To prove this in general we need the following approximation result.

\begin{lemma} \label{lemma:surgery}
Let $x\neq y$ and $y\twoheadleftarrow x$, and assume that $y$ is not stabil{is}able. Let a solution $\tilde y$ starting at $y$ be given. Then for every $\varepsilon>0$ small enough we can modify the differential inclusion in the region $B_\varepsilon(y)$ without creating new stabil{is}able points and such that there is a smooth solution $\tilde x$ starting at $x$ and ending at $y$ such that the concatenation of $\tilde x$ and $\tilde y$ is smooth.
\end{lemma}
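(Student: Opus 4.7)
The plan is to first extract quantitative non-stabilisability at $y$, then to use $y\twoheadleftarrow x$ to follow an unmodified solution from $x$ to a point arbitrarily close to $y$, and finally to splice onto it a short smooth connector lying entirely in $B_\varepsilon(y)$ which arrives at $y$ with the prescribed terminal tangent $v:=\dot{\tilde y}(0)\in\conv(\derv(y))$.

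First I would exploit that $0\notin\conv(\derv(y))$. Since the set-valued map $z\mapsto\conv(\derv(z))$ is upper semi-continuous with compact values, a Hahn--Banach argument (in the spirit of the discussion preceding Fig.~\ref{fig:curve-construction}) yields $\delta>0$, $\varepsilon_0>0$, and a linear functional $\beta$ on $\mathbb R^n$ with $\|\beta\|=1$ such that $\beta(w)\le-\delta$ for every $w\in\conv(\derv(z))$ and every $z\in B_{\varepsilon_0}(y)$. Fix $\varepsilon\in(0,\varepsilon_0)$, to be shrunk further below; in particular $\beta(v)\le-\delta$. Next, since $y\twoheadleftarrow x$ there is a solution $\gamma:[0,T]\to\Delta^{n-1}$ of~\eqref{eq:diff-incl-toy-model} with $\gamma(0)=x$ and $p:=\gamma(T)$ satisfying $\|p-y\|<\varepsilon/2$; approximating by a piecewise-constant control one may further assume the one-sided derivative $w_0:=\dot\gamma(T)\in\conv(\derv(p))$ exists and also satisfies $\beta(w_0)\le-\delta$.

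The surgery is purely local. I would construct a smooth curve $c:[0,s_0]\to B_\varepsilon(y)$ with $c(0)=p$, $c(s_0)=y$, $\dot c(0)=w_0$ and $\dot c(s_0)=v$, for example via a cubic Hermite spline in the affine coordinates on $\Delta^{n-1}$. Choosing $s_0$ of order $\|p-y\|$ and, if necessary, shrinking $\varepsilon$, the tangent $\dot c(s)$ can be kept within $\conv\{w_0,v\}+B_{\delta/4}(0)$ along the entire parameter interval, so that $\beta(\dot c(s))\le-\delta/2<0$ throughout. I would then define the modified inclusion $\widetilde{\derv}$ to coincide with $\derv$ outside $B_\varepsilon(y)$ and, inside $B_\varepsilon(y)$, to be the pointwise convex hull of $\derv$ together with $\dot c(s)$ transported by a smooth bump function to a thin tubular neighbourhood of $c$.

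With this modification, the concatenation $\tilde x:=\gamma$ followed by $c$ is a smooth solution of the modified inclusion connecting $x$ to $y$ with terminal derivative $v=\dot{\tilde y}(0)$, so that $\tilde x$ followed by $\tilde y$ is $C^1$ at the joining point $y$; higher regularity at $y$ can be enforced by matching further derivatives through a higher-order Hermite spline. No stabilisable points are introduced, since for every $z\in B_\varepsilon(y)$ the new derivative set still lies in the open half-space $\{\beta<0\}$, so $0\notin\conv(\widetilde{\derv}(z))$ by the same separation argument. The main obstacle, and the step most requiring care, is the uniform control of $\dot c$ across the whole interval $[0,s_0]$ rather than only at its endpoints: this is precisely why one first shrinks $\varepsilon$ so that $w_0$ already enjoys strict separation by $\beta$, after which a near-convex interpolation between $w_0$ and $v$ with a small smooth perturbation automatically preserves that separation.
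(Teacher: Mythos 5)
Your proposal is correct and follows essentially the same strategy as the paper's own sketch: a purely local surgery in a small ball around $y$, with all modified derivatives kept strictly inside a fixed half-space (the paper phrases this via inner/outer approximating pointed cones normalised to the upper half-plane, you via a single separating functional $\beta$, which is equivalent), so that no stabilisable points are created while the incoming solution is smoothly steered into $y$ with the prescribed terminal tangent. The only cosmetic difference is that you modify the inclusion along a tubular neighbourhood of an explicit Hermite connector rather than in an annulus $B_R\setminus B_r$, which if anything makes the separation estimate $\beta(\dot c)\le-\delta/2$ more explicit.
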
 


\begin{proof}[Sketch of proof]
Using translations and rotations we may assume that $y=(0,0)$ and the cone generated by $\derv(y)$ is contained in the upper halfplane and symmetric about the vertical axis. By continuity and assuming that $\delta$ is small enough, there are inner and outer approximations of this cone in $B_\delta(y)$ which are both pointed. 
We may assume (e.g., by extending $x$ backwards) that $x'(0)$ lies in the inner approximating cone.
We will only modify the differential inclusion within the lower half of this disk. Now assume that for some small enough $0<\varepsilon\ll1$ we have a smooth solution $\tilde x$ starting at $x$ that ends $\varepsilon$-close to $y$. Then by slightly enlarging the outer cone we may assume that $\tilde x$ enters the unit disk within the negative of the outer cone. One can see that it is possible to modify the differential inclusion inside $B_R\setminus B_r$ for some $0<r<R<1$ such that there is a smooth solution entering $B_r$ inside of the inner approximating cone, while making sure that the cone always lies in the upper halfplane, so that no stabilisable points are created. 
\end{proof}

\begin{proposition} \label{prop:equivalence-classes}
If $x\neq y$ and $x\sim y$, then $x\sim\frac13\mathbbm e$. 
\end{proposition}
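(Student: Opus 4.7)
My plan, foreshadowed by the paragraph preceding the statement, is to convert the hypothesis $x\sim y$ with $x\neq y$ into a smooth injective periodic solution of the control system visiting both $x$ and $y$, argue that this loop must pass through a stabil{is}able point, and then use Cor.~\ref{coro:stab-reach} together with Lem.~\ref{lemma:e-d-reach} to tie that point to $\tfrac{1}{3}\mathbbm e$.

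First I would exploit the two containments $y\in\overline{\reach_B(x)}$ and $x\in\overline{\reach_B(y)}$ supplied by $x\sim y$ to concatenate a trajectory from $x$ to near $y$ with one from $y$ back to near $x$, and then apply Lem.~\ref{lemma:surgery} in small neighbourhoods of $x$ and of $y$ to smooth the two junctions into a single smooth injective periodic solution $\gamma\colon S^{1}\to\Delta^{2}$ of a surgically modified system which, crucially, has no new stabil{is}able points. Lem.~\ref{lemma:periodic-stab} applied to $\gamma$ then provides a stabil{is}able point $s^{*}$ in the bounded component of $\Delta^{2}\setminus\gamma(S^{1})$, stabil{is}able also in the original system by the preceding step.

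The crux is next to show that $\gamma$ itself must intersect $\stab_B$. For this I would pick the Weyl chamber $W$ containing $s^{*}$ and exploit the explicit description from Sec.~\ref{sec:Qutrit-Results}: $\stab_B\cap W$ is a convex region meeting $\partial W$, so $W\setminus\stab_B$ is simply connected. Were $\gamma\cap W$ entirely contained in $W\setminus\stab_B$, the arcs of $\gamma\cap W$ together with a portion of $\partial W$ would bound a subregion of $W\setminus\stab_B$ containing $s^{*}$, which by simple connectedness would force $s^{*}\in W\setminus\stab_B$, a contradiction. Hence there is $s\in\gamma(S^{1})\cap\stab_B$. Traversing the injective orbit $\gamma$ forward then produces trajectories $x\to s$ and $s\to x$ in the modified system; letting the surgery parameter $\varepsilon\to 0^{+}$ in Lem.~\ref{lemma:surgery} and using the closure implicit in $\twoheadleftarrow$ transfers this to $s\sim x$ in the original system. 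Cor.~\ref{coro:stab-reach} yields $s\sim d$ and Lem.~\ref{lemma:e-d-reach} yields $d\sim\tfrac{1}{3}\mathbbm e$, so by transitivity $x\sim\tfrac{1}{3}\mathbbm e$.

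The hard part will be the topological step. Although the structure of $\stab_B$ described in Sec.~\ref{sec:Qutrit-Results} makes $W\setminus\stab_B$ plausibly simply connected, one must check this carefully from the explicit parametrisation of the boundary conics and, more delicately, handle the case in which $\gamma$ enters and leaves $W$ several times. In that case the argument must be applied to a connected component of the region enclosed by $\gamma\cap W$ together with a piece of $\partial W$ which actually contains $s^{*}$, and one needs to chase the arc combinatorics to ensure that such a component exists. A secondary technical point is to verify that the final ``modified $\to$ original'' passage really produces bona-fide approximate trajectories of the original system; this ought to follow from the locality and arbitrary smallness of the surgery combined with a diagonal approximation argument.
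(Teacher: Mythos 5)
Your proposal follows essentially the same route as the paper's proof: surgery (Lem.~\ref{lemma:surgery}) at both junctions to produce a smooth injective periodic solution through $x$ and $y$, Lem.~\ref{lemma:periodic-stab} to enclose a stabilisable point, simple connectedness of the non-stabilisable set within a Weyl chamber to force the loop to meet $\stab_B$, and then Cor.~\ref{coro:stab-reach} and Lem.~\ref{lemma:e-d-reach} plus the $\varepsilon\to 0$ limit to conclude. The only omission is the trivial opening reduction: since Lem.~\ref{lemma:surgery} assumes the endpoint is not stabilisable, you should first dispose of the case where $x$ or $y$ lies in $\stab_B$, which is immediate from Cor.~\ref{coro:stab-reach}.
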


\begin{proof}
If $x$ or $y$ is stabilisable, then by Cor.~\ref{coro:stab-reach} it is equivalent to $d$ and we are done.
Hence we assume that neither $x$ nor $y$ is stabilisable.
Let $\varepsilon>0$ small enough be given.
Since $x\sim y$, we may apply Lem.~\ref{lemma:surgery} twice to obtain a smooth, periodic solution passing through $x$ and $y$ for a slightly modified differential inclusion, which does not introduce new stabilisable points.
Without loss of generality we may assume that this solution is injective and has non-vanishing derivative.
By Lem.~\ref{lemma:periodic-stab} it encloses a stabilisable point.
However, if we work in a Weyl chamber, the non-stabilisable set is simply-connected, and so the periodic solution intersects the stabilisable region in some point $s$.
Hence there is a point $\varepsilon$-close to $x$ which is reachable from $s$ (and by Cor.~\ref{coro:stab-reach} also from $\frac13\mathbbm e$). Letting $\varepsilon$ go to $0$ this shows that $x\twoheadleftarrow\frac13\mathbbm e$.
\end{proof}

\section*{Appendix \app{E}: Markovian Thermal Single-Qubit Operations with Different Temperatures}
We compute the product of two Markovian thermal operations in the single-qubit case and show that the result is again Markovian and thermal.
Recall that every thermal qubit operation in $\mathsf{MTO}(H_0,T)$ is represented by three parameters\footnote{
Strictly speaking, one has to replace $\varepsilon$ by $\mu\varepsilon$ since the former leads to an ill-defined composition rule if $\mu=0$.
However, this special case can easily be dealt with, so for the sake of simplicity and clarity we will treat $\varepsilon$ as independent.
}:
$\mu,\varepsilon\in\mathbb R,\, c\in\C$.
A matrix representation is given by
\begin{equation}\label{eq:app_E_G}
G(\mu,\varepsilon,c)=\begin{pmatrix}
1-\varepsilon\mu & \mu    & 0 \\
\varepsilon\mu    & 1-\mu & 0 \\
0                   & 0      & c
\end{pmatrix}.
\end{equation}
Considering the product $G(\mu_3,\varepsilon_3,c_3)=G(\mu_1,\varepsilon_1,c_1)\, G(\mu_2,\varepsilon_2,c_2)$ we find that
\begin{align*}
\mu_3                &= \mu_1+\mu_2-\mu_1\mu_2(1+\varepsilon_1)\\
\varepsilon_3\mu_3 &= \varepsilon_1\mu_1 + \varepsilon_2\mu_2 - \mu_1\mu_2(1+\varepsilon_1)\varepsilon_2
\end{align*}
as well as $c_3=c_1c_2$.
Note that this product in general is not commutative.

Actually, in order for~\eqref{eq:app_E_G} to describe a thermal operation recall that $\mu,\varepsilon\in[0,1]$ and $|c|^2\leq(1-\varepsilon\mu)(1-\mu)$, and for the operation to be Markovian the latter condition is replaced by $|c|^2\leq 1-\mu(1+\varepsilon)$.

\begin{lemma}
Let parameters $\mu_1,\varepsilon_1,\mu_2,\varepsilon_2\in[0,1]$, $c_1,c_2\in\mathbb C$ be given and set $(\mu_3,\varepsilon_3,c_3):=(\mu_1,\varepsilon_1,c_1)\cdot(\mu_2,\varepsilon_2,c_2)$.
Then the latter is thermal, and if the initial parameters are Markovian,
then so is $(\mu_3,\varepsilon_3,c_3)$.
\end{lemma}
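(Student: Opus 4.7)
The plan exploits the block structure of $G(\mu,\varepsilon,c)$: since $c_3 = c_1 c_2$ is already given, the only substantive task is bounding $|c_3|^2 = |c_1|^2|c_2|^2$, and that bound can be read off from the $2\times 2$ top-left block
\begin{equation*}
\tilde G(\mu,\varepsilon) := \begin{pmatrix} 1-\varepsilon\mu & \mu \\ \varepsilon\mu & 1-\mu \end{pmatrix},
\end{equation*}
which is column-stochastic (for $\mu,\varepsilon\mu\in[0,1]$) and stays column-stochastic under multiplication. The composition formulas derived just before the lemma already show $\tilde G_1\tilde G_2 = \tilde G(\mu_3,\varepsilon_3)$ with $\mu_3,\varepsilon_3\mu_3\in[0,1]$, so the parameter-range conditions come for free.

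The decisive observation for the Markovian part is that the Markovian slack \emph{is} the determinant:
\begin{equation*}
\det\tilde G(\mu,\varepsilon) = (1-\varepsilon\mu)(1-\mu) - \mu\cdot\varepsilon\mu = 1-\mu(1+\varepsilon).
\end{equation*}
Multiplicativity of $\det$ therefore yields
\begin{equation*}
1 - \mu_3(1+\varepsilon_3) \;=\; \det\tilde G_1\cdot\det\tilde G_2 \;=\; \bigl(1-\mu_1(1+\varepsilon_1)\bigr)\bigl(1-\mu_2(1+\varepsilon_2)\bigr),
\end{equation*}
so under the Markovian assumption both factors on the right are in $[0,1]$ and respectively dominate $|c_i|^2$; hence their product dominates $|c_1|^2|c_2|^2 = |c_3|^2$, establishing $|c_3|^2 \le 1-\mu_3(1+\varepsilon_3)$ in one line. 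Non-negativity of the right-hand side then forces $\mu_3(1+\varepsilon_3)\le 1$, in particular $\varepsilon_3\in[0,1]$, securing the full Markovian structure of $(\mu_3,\varepsilon_3,c_3)$.

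For the thermal statement the quantity $(1-\varepsilon\mu)(1-\mu)$ is no longer a determinant, so I would instead estimate the two diagonal entries of $\tilde G_3 = \tilde G_1\tilde G_2$ term-by-term:
\begin{align*}
(\tilde G_3)_{11} &= (1-\varepsilon_1\mu_1)(1-\varepsilon_2\mu_2) + \mu_1\,\varepsilon_2\mu_2 \;\ge\; (1-\varepsilon_1\mu_1)(1-\varepsilon_2\mu_2),\\
(\tilde G_3)_{22} &= \varepsilon_1\mu_1\,\mu_2 + (1-\mu_1)(1-\mu_2) \;\ge\; (1-\mu_1)(1-\mu_2),
\end{align*}
both lower bounds being nonneg products. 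Multiplying and inserting the thermal bounds $|c_i|^2\le (1-\varepsilon_i\mu_i)(1-\mu_i)$ gives
\begin{equation*}
(1-\varepsilon_3\mu_3)(1-\mu_3) \;\ge\; (1-\varepsilon_1\mu_1)(1-\mu_1)\cdot(1-\varepsilon_2\mu_2)(1-\mu_2) \;\ge\; |c_3|^2,
\end{equation*}
which is the thermal bound on $c_3$.

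The point I expect to deserve the most care is the degenerate case $\mu_3 = 0$ (in which $\varepsilon_3$ is not uniquely determined as a ratio), together with the fact that in the \emph{thermal-only} regime the ratio $\varepsilon_3 = (\varepsilon_3\mu_3)/\mu_3$ need not lie in $[0,1]$ since the two input temperatures may differ. Both issues are avoided exactly as anticipated in the footnote by reparametrizing $\xi := \varepsilon\mu$: in the $(\mu,\xi)$-variables every quantity appearing above ($\mu_3$, $\xi_3 = \varepsilon_3\mu_3$, $(1-\xi_3)(1-\mu_3)$, and $1-\mu_3-\xi_3 = 1-\mu_3(1+\varepsilon_3)$) is unambiguously determined as a matrix entry or a determinant of $\tilde G_3$. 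Beyond this reparametrization the lemma reduces to one determinant identity and two entry-wise estimates, so no real obstacle remains.
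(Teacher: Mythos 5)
Your proof is correct and follows essentially the same route as the paper's: the same factorisations of the entries $1-\mu_3$ and $1-\varepsilon_3\mu_3$ of the stochastic block give the thermal bound, and the same product identity $1-\mu_3(1+\varepsilon_3)=\bigl(1-\mu_1(1+\varepsilon_1)\bigr)\bigl(1-\mu_2(1+\varepsilon_2)\bigr)$ gives Markovianity. Your observation that this last identity is just multiplicativity of $\det\tilde G$ is a cleaner way to obtain what the paper verifies by direct computation, and your handling of the degenerate case $\mu_3=0$ via the variables $(\mu,\xi)$ with $\xi=\varepsilon\mu$ matches the caveat the paper relegates to a footnote.
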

\begin{proof}
Because the above composition rule on $\mathbb R^2\times\mathbb C$ is defined via \eqref{eq:app_E_G}, i.e.~$
\mu_3=\mu_1 (1 -\mu_2) + (1 - \varepsilon_1 \mu_1)\mu_2
$
and $\varepsilon_3\mu_3=\varepsilon_2\mu_2 (1 -\mu_1)  + \varepsilon_1\mu_1(1 - \varepsilon_2 \mu_2)$,
the assumption $\mu_1,\varepsilon_1,\mu_2,\varepsilon_2\in[0,1]$ directly implies $\mu_3,\varepsilon_3\geq 0$.
Next, using $1-\mu_3=(1-\mu_1)(1-\mu_2)+\mu_1\mu_2\varepsilon_1$ and $1-\varepsilon_3\mu_3=(1-\varepsilon_1\mu_1)(1-\varepsilon_2\mu_2)+\varepsilon_1\mu_1\mu_2$
we find $1-\mu_3\geq 0$, $\mu_3-\mu_3\varepsilon_3\geq 0$, and $|c_3|^2\leq (1-\varepsilon_3\mu_3)(1-\mu_3)$.
Together this shows that $(\mu_3,\varepsilon_3,c_3)$ is thermal.
%
Finally, the statement about Markovianity follows from $1-\mu_3-\varepsilon_3\mu_3=(1-\mu_1-\varepsilon_1\mu_1)(1-\mu_2-\varepsilon_2\mu_2)$.
\end{proof}

Hence the union of $\mathsf{MTO}(H_0,T)$ over all $T$ yields a semigroup which we denote by $\mathsf{MTO}(H_0)$.
%
%
One readily verifies that $\mathsf{MTO}(H_0)$ is weakly exponential, and
it turns out that it is even locally exponential:
to see this we need to find a neighbourhood basis of the identity which is exponential. Since $\mathsf{MTO}(H_0)$ decomposes into a stochastic part and a complex part (as shown above), it suffices to argue for each separately. Indeed the neighbourhood basis
$
\mathcal{U}_k = \{ (\mu,\varepsilon,c)\in\mathsf{MTO}(H_0)  : \mu,\varepsilon\mu,|\ln(c)|<1/k \}
$
is exponential, since the one parameter semigroups in the stochastic part are straight lines and the image of $c$ under $\ln$ forms a halfdisk.
\bibliographystyle{mystyle}
\bibliography{control21vJan20}             

\begin{thebibliography}{10}

\bibitem{Abiuso19}
Abiuso, P. and Giovannetti, V., \emph{Phys. Rev. A} \textbf{99} (2019), 052106.

\bibitem{Alhambra19}
Alhambra, {\'{A}}., Lostaglio, M., and Perry, C., \emph{{Quantum}} \textbf{3}
  (2019), 188.

\bibitem{book_HybridSytems96}
Alur, R., Henzinger, T., and Sontag, E., \emph{{Hybrid Systems III:
  Verification and Control}}, Lecture Notes in Computer Science (LNCS), Vol.
  1066, Springer, New York, 1996.

\bibitem{Ando89}
Ando, T., \emph{Lin. Alg. Appl.} \textbf{118} (1989), 163.

\bibitem{Aubin84}
Aubin, J.-P. and Cellina, A., \emph{{Differential Inclusions: Set-Valued Maps
  and Viability Theory}}, Springer, Berlin Heidelberg, 1984.

\bibitem{Bell90}
Bell, S., \emph{Bull. Amer. Math. Soc. (N.S.)} \textbf{22} (1990), 233.

\bibitem{BSH16}
Bergholm, V., Wilhelm, F., and Schulte-Herbr{\"u}ggen, T., \emph{{Arbitrary
  $n$-Qubit State Transfer Implemented by Coherent Control and Simplest
  Switchable Local Noise}}, 2016, \url{https://arxiv.org/abs/1605.06473}.

\bibitem{Bhatta20}
Bhattacharya, S., Bhattacharya, B., and Majumdar, A., \emph{J. Phys. A: Math.
  Theor.} \textbf{53} (2020), 335301.

\bibitem{QThermo2018}
Binder, F., Correa, L., Gogolin, C., Anders, J., and Adesso, G. (eds.),
  \emph{Thermodynamics in the Quantum Regime: Fundamental Aspects and New
  Directions}, Springer International, Cham, 2018.

\bibitem{Brandao15}
Brand{\~a}o, F., Horodecki, M., Ng, N., Oppenheim, J., and Wehner, S.,
  \emph{Proc. Natl. Acad. Sci. U.S.A.} \textbf{112} (2015), 3275.

\bibitem{Brandao13}
Brand{\~a}o, F., Horodecki, M., Oppenheim, J., Renes, J., and Spekkens, R.,
  \emph{Phys. Rev. Lett.} \textbf{111} (2013), 250404.

\bibitem{Bro72}
Brockett, R., \emph{SIAM J. Control} \textbf{10} (1972), 265.

\bibitem{Buscemi05}
Buscemi, F., Keyl, M., D'Ariano, G., Perinotti, P., and Werner, R., \emph{J.
  Math. Phys.} \textbf{46} (2005), 082109.

\bibitem{Bylicka16}
Bylicka, B., Tukiainen, M., Piilo, J., Chru{\'s}ci{\'n}ski, D., and Maniscalco,
  S., \emph{Sci. Rep.} \textbf{6} (2016), 27989.

\bibitem{Chakraborty22}
Chakraborty, S., Das, A., and Chru{\'s}ci{\'n}ski, A., \emph{Phys. Rev. E}
  \textbf{106} (2022), 064133.

\bibitem{Mart14}
Chen, Y., Neill, C., Roushan, P., Leung, N., Fang, M., Barends, R., Kelly, J.,
  Campbell, B., Chen, Z., Chiaro, B., Dunsworth, A., Jeffrey, E., Megrant, A.,
  Mutus, J., O’Malley, P., Quintana, C., Sank, D., Vainsencher, A., Wenner,
  J., White, T., Geller, M., Cleland, A., and Martinis, J., \emph{Phys. Rev.
  Lett} \textbf{113} (2014), 220502.

\bibitem{ChruPas17}
Chru{\'s}ci{\'n}ski, D. and Pascazio, S., \emph{Open Syst. Inf. Dyn.}
  \textbf{24} (2017), 1740001.

\bibitem{Colla22}
Colla, A. and Breuer, H., \emph{Phys. Rev. A} \textbf{105} (2022), 052216.

\bibitem{CK00}
Colonius, F. and Kliemann, W., \emph{{The Dynamics of Control}},
  Birkh{\"a}user, Boston, 2000.

\bibitem{Cwiklinski15}
\'Cwikli\'nski, P., Studzi\'nski, M., Horodecki, M., and Oppenheim, J.,
  \emph{Phys. Rev. Lett.} \textbf{115} (2015), 210403.

\bibitem{dAlessandroBook2022}
D'Alessandro, D., \emph{{Introduction to Quantum Control and Dynamics}},
  Chapman \& Hall CRC, Boca Raton, 2022, 2nd ed.

\bibitem{Dann22}
Dann, R., Megier, N., and Kosloff, R., \emph{Phys. Rev. Research} \textbf{4}
  (2022), 043075.

\bibitem{Davies76}
Davies, E., \emph{{Quantum Theory of Open Systems}}, Academic Press, London,
  1976.

\bibitem{Oliveira22}
de~Oliveira~Jr., A., Czartowski, J., {\.Z}yczkowski, K., and Korzekwa, K.,
  \emph{Phys. Rev. E} \textbf{106} (2022), 064109.

\bibitem{Ding21}
Ding, Y., Ding, F., and Hu, X., \emph{Phys. Rev. A} \textbf{103} (2021),
  052214.

\bibitem{DiHeGAMM08}
Dirr, G. and Helmke, U., \emph{GAMM-Mitteilungen} \textbf{31} (2008), 59.

\bibitem{DHKS08}
Dirr, G., Helmke, U., Kurniawan, I., and Schulte-Herbr{\"u}ggen, T., \emph{Rep.
  Math. Phys.} \textbf{64} (2009), 93.

\bibitem{CDC19}
Dirr, G., vom Ende, F., and Schulte-Herbr{\"u}ggen, T., \emph{Proc. IEEE Conf.
  Decision Control (IEEE-CDC)} \textbf{58} (2019), 2322.

\bibitem{Elliott09}
Elliott, D., \emph{{Bilinear Control Systems: Matrices in Action}}, Springer,
  London, 2009.

\bibitem{EL77}
Evans, D. and Lewis, J., \emph{{Dilations of Irreversible Evolutions in
  Algebraic Quantum Theory}}, Communications of the Dublin Institute for
  Advanced Studies, Vol.~24, Dublin, 1977.

\bibitem{Faist17}
Faist, P., Oppenheim, J., and Renner, R., \emph{New J. Phys.} \textbf{17}
  (2015), 1.

\bibitem{Roadmap2015}
Glaser, S., Boscain, U., Calarco, T., Koch, C., K{\"o}ckenberger, W., Kosloff,
  R., Kuprov, I., Luy, B., Schirmer, S., Schulte-Herb{\"u}ggen, T., Sugny, D.,
  and Wilhelm, F., \emph{Eur. Phys. J. D} \textbf{69} (2015), 279.

\bibitem{GFKVS78}
Gorini, V., Frigerio, A., Kossakowski, A., Verri, M., and Sudarshan, E.,
  \emph{Rep. Math. Phys.} \textbf{13} (1978), 149.

\bibitem{GK76}
Gorini, V. and Kossakowski, A., \emph{J. Math. Phys.} \textbf{17} (1976), 1298.

\bibitem{GKS76}
Gorini, V., Kossakowski, A., and Sudarshan, E., \emph{J. Math. Phys.}
  \textbf{17} (1976), 821.

\bibitem{Gour19}
Gour, G., \emph{Rev. Mod. Phys.} \textbf{91} (2019), 025001.

\bibitem{Gour22}
Gour, G., \emph{PRX Quantum} \textbf{3} (2022), 040323.

\bibitem{Gour15}
Gour, G., M{\"u}ller, M., Narasimhachar, V., Spekkens, R., and Halpern, N.,
  \emph{Phys. Rep.} \textbf{583} (2015), 1.

\bibitem{Gruenbaum03}
Gr{\"u}nbaum, B., \emph{{Convex Polytopes}}, Graduate Texts in Mathematics,
  Vol.~221, Springer, New York, 2003, 2 ed.

\bibitem{HHL89}
Hilgert, J., Hofmann, K.-H., and Lawson, J., \emph{{Lie Groups, Convex Cones,
  and Semigroups}}, Clarendon Press, Oxford, 1989.

\bibitem{HN12}
Hilgert, J. and Neeb, K., \emph{{Structure and Geometry of Lie Groups}},
  Springer Monographs in Mathematics, Springer, Berlin, 2012.

\bibitem{LNM1552}
Hilgert, J. and Neeb, K.-H., \emph{{Lie Semigroups and Their Applications}},
  Springer, Berlin, 1993.

\bibitem{Mart09}
Hofheinz, M., Wang, H., Ansmann, M., Bialczak, R., Lucero, E., Neeley, M.,
  O'Connell, A., Sank, D., Wenner, J., Martinis, J., and Cleland, A.,
  \emph{Nature} \textbf{459} (2009), 546.

\bibitem{HofRupp97div}
Hofmann, K.-H. and Ruppert, W., \emph{{Lie Groups and Subsemigroups with
  Surjective Exponential Function}}, Memoirs Amer. Math. Soc. 618, American
  Mathematical Society, Providence, 1997.

\bibitem{HJ1}
Horn, R. and Johnson, C., \emph{{Matrix Analysis}}, Cambridge University Press,
  Cambridge, 1987.

\bibitem{Horodecki13}
Horodecki, M. and Oppenheim, J., \emph{Nat. Commun.} \textbf{4} (2013), 2059.

\bibitem{Ding19}
Hu, X. and Ding, F., \emph{Phys. Rev. A} \textbf{99} (2019), 012104.

\bibitem{Janzing2000}
Janzing, D., Wocjan, P., Zeier, R., Geiss, R., and Beth, T., \emph{Int. J.
  Theor. Phys.} \textbf{39} (2000), 2717.

\bibitem{Jurdjevic97}
Jurdjevic, V., \emph{{Geometric Control Theory}}, Cambridge University Press,
  Cambridge, 1997.

\bibitem{JS72}
Jurdjevic, V. and Sussmann, H., \emph{J. Diff. Equat.} \textbf{12} (1972), 313.

\bibitem{Koch22}
Koch, C., Boscain, U., Calarco, T., Dirr, G., Filipp, S., Glaser, S., Kosloff,
  R., Montangero, S., Schulte-Herbrüggen, T., Sugny, D., and Wilhelm, F.,
  \emph{EPJ Quantum Technol.} \textbf{9} (2022), 19.

\bibitem{Korzekwa17}
Korzekwa, K., \emph{Phys. Rev. A} \textbf{95} (2017), 052318.

\bibitem{Kosloff13}
Kosloff, R., \emph{Entropy} \textbf{15} (2013), 2100.

\bibitem{Koss72}
Kossakowski, A., \emph{Bull. Acad. Pol. Sci., Ser. Sci. Math. Astron. Phys.}
  \textbf{20} (1972), 1021.

\bibitem{Koss72b}
Kossakowski, A., \emph{Rep. Math. Phys.} \textbf{3} (1972), 247.

\bibitem{Kuratowski66}
Kuratowski, K., \emph{{Topology I}}, Academic Press, New York, 1966.

\bibitem{book_impulsive89}
Lakshmikantham, V., Bainov, D., and Simeonov, P., \emph{{Theory of Impulsive
  Differential Equations}}, Series in Modern Applied Mathematics, Vol. 6, World
  Scientific, Singapore, 1989.

\bibitem{Lawson99}
Lawson, J., in: Ferreyra, G. (ed.), \emph{{Differential Geometry and Control}},
  207--221, American Mathematical Society, Providence, 1999.

\bibitem{Leela1991}
Leela, S., McRae, F., and Sivasundaram, S., \emph{J. Math. Anal. Appl.}
  \textbf{177} (1993), 24.

\bibitem{Lindblad76}
Lindblad, G., \emph{Commun. Math. Phys.} \textbf{48} (1976), 119.

\bibitem{Lostaglio19r}
Lostaglio, M., \emph{Rep. Prog. Phys.} \textbf{82} (2019), 114001.

\bibitem{Lostaglio18}
Lostaglio, M., Alhambra, {\'A}., and Perry, C., \emph{Quantum} \textbf{2}
  (2018), 52.

\bibitem{Lostaglio15_2}
Lostaglio, M., Jennings, D., and Rudolph, T., \emph{Nat. Commun.} \textbf{6}
  (2015), 6383.

\bibitem{LosKor22a}
Lostaglio, M. and Korzekwa, K., \emph{Phys. Rev. A} \textbf{106} (2022),
  012426.

\bibitem{MEDS23}
Malvetti, E., vom Ende, F., Dirr, G., and Schulte-Herbr{\"u}ggen, T.,
  \emph{{Reachability and Stabilizability for Markovian Quantum Systems with
  Fast Hamiltonian Control}}, in preparation, 2023.

\bibitem{MarshallOlkin}
Marshall, A., Olkin, I., and Arnold, B., \emph{{Inequalities: Theory of
  Majorization and Its Applications}}, Springer, New York, 2011, 2 ed.

\bibitem{Mazurek19}
Mazurek, P., \emph{Phys. Rev. A} \textbf{99} (2019), 042110.

\bibitem{Moise77}
Moise, E., \emph{{Geometric Topology in Dimensions 2 and 3}}, Graduate Texts in
  Mathematics, Vol.~47, Springer, New York, 1977.

\bibitem{Norris97}
Norris, J., \emph{{Markov Chains}}, Cambridge Series in Statistical and
  Probabilistic Mathematics, Cambridge University Press, Cambridge, 1997.

\bibitem{Parker96}
Parker, D. and Ram, P., \emph{{Greed and Majorization}}, Technical Report,
  Department of Computer Science, University of California, Los Angeles, 1996.

\bibitem{PG06}
P{\'e}rez-Garc{\'i}a, D., Wolf, M., Petz, D., and Ruskai, M., \emph{J. Math.
  Phys.} \textbf{47} (2006), 083506.

\bibitem{Ptas22}
Ptaszy\'nski, K., \emph{Phys. Rev. E} \textbf{106} (2022), 014114.

\bibitem{Renes14}
Renes, J., \emph{Eur. Phys. J. Plus} \textbf{129} (2014), 153.

\bibitem{rooney2018}
Rooney, P., Bloch, A., and Rangan, C., \emph{IEEE Trans. Automat. Contr.}
  \textbf{63} (2018), 672.

\bibitem{Rudin91}
Rudin, W., \emph{{Functional Analysis}}, McGraw--Hill, New York, 1991, 2 ed.

\bibitem{Sagawa19}
Sagawa, T., Faist, P., Kato, K., Matsumoto, K., Nagaoka, H., and Brand{\~a}o,
  F., \emph{J. Phys. A} \textbf{54} (2021), 495303.

\bibitem{Schrijver86}
Schrijver, A., \emph{{Theory of Linear and Integer Programming}}, Wiley
  Interscience, New York, 1986.

\bibitem{OSID17}
Schulte-Herbr{\"u}ggen, T., Dirr, G., and Zeier, R., \emph{Open Syst. Inf.
  Dyn.} \textbf{24} (2017), 1740019.

\bibitem{Shiraishi20}
Shiraishi, N., \emph{J. Phys. A} \textbf{53} (2020), 425301.

\bibitem{Smirnov02}
Smirnov, G., \emph{{Introduction to the Theory of Differential Inclusions}},
  Amer. Math. Soc., Providence, Rhode Island, 2002.

\bibitem{Spaventa22}
Spaventa, G., Huelga, S., and Plenio, M., \emph{Phys. Rev. A} \textbf{105}
  (2022), 012420.

\bibitem{Strogatz15}
Strogatz, S., \emph{{Nonlinear Dynamics and Chaos: With Applications to
  Physics, Biology, Chamistry and Engineering}}, CRC Press, Boca Raton, 2015,
  2nd ed.

\bibitem{Szczygielski13}
Szczygielski, K., Gelbwaser-Klimovsky, D., and Alicki, R., \emph{Phys. Rev. E}
  \textbf{87} (2013), 012120.

\bibitem{vE_PhD_2020}
vom Ende, F., \emph{{Reachability in Controlled Markovian Quantum Systems: An
  Operator-Theoretic Approach}}, Ph.D. thesis, TU Munich, 2020,
  \url{https://arxiv.org/pdf/2012.03496.pdf}.

\bibitem{vomEnde22thermal}
vom Ende, F., \emph{J. Math. Phys.} \textbf{63} (2022), 112202.

\bibitem{vE22_Stinespring}
vom Ende, F., \emph{Open Syst. Inf. Dyn.} \textbf{30} (2023), 2350003.

\bibitem{vomEnde20Dmaj}
vom Ende, F., \emph{Lin. Multilin. Alg.} \textbf{70} (2023), 4023.

\bibitem{vE_dirr_semigroups}
vom Ende, F. and Dirr, G., \emph{J. Math. Phys.} \textbf{60} (2019), 122702.

\bibitem{vomEnde22}
vom Ende, F. and Dirr, G., \emph{Lin. Alg. Appl.} \textbf{649} (2022), 152.

\bibitem{OSID19}
vom Ende, F., Dirr, G., Keyl, M., and Schulte-Herbr{\"u}ggen, T., \emph{Open
  Syst. Inf. Dyn.} \textbf{26} (2019), 1950014.

\bibitem{PolytopeDegen22}
vom Ende, F. and Malvetti, E., \emph{{The Thermomajorization Polytope and Its
  Degeneracies}}, 2022, \url{https://arxiv.org/abs/2212.04305}.

\bibitem{Wolf08a}
Wolf, M. and Cirac, J., \emph{Commun. Math. Phys.} \textbf{279} (2008), 147.

\bibitem{McDermott_TunDissip_2019}
Wong, C., Wilen, C., McDermott, R., and Vavilov, M., \emph{Quant. Sci.
  Technol.} \textbf{4} (2019), 025001.

\bibitem{Yamabe1950}
Yamabe, H., \emph{Osaka Math. J.} \textbf{2} (1950), 13.

\bibitem{Mart13}
Yin, Y., Chen, Y., Sank, D., O'Malley, P., White, T., Barends, R., Kelly, J.,
  Lucero, E., Mariantoni, M., Megrant, A., Neill, C., Vainsencher, A., Wenner,
  J., Korotkov, A., Cleland, A., and Martinis, J., \emph{Phys. Rev. Lett.}
  \textbf{110} (2013), 107001.

\bibitem{Yuan10}
Yuan, H., \emph{IEEE Trans. Automat. Contr.} \textbf{55} (2010), 955.

\bibitem{Zare88}
Zare, R., \emph{Angular Momentum}, Wiley Interscience, New York, 1988.

\end{thebibliography}
\end{document}